\documentclass[journal, twocolumn]{IEEEtran}

\ifCLASSINFOpdf

\else

\fi

\usepackage{amsmath,amssymb,graphicx,pgfplots,amsthm,algorithm,algpseudocode, mathtools, multirow, pgfplotstable, makecell}

\usepackage{etoolbox}

\ifCLASSOPTIONcompsoc
 \usepackage[caption=false,font=normalsize,labelfont=sf,textfont=sf]{subfig}
\else
 \usepackage[caption=false,font=footnotesize]{subfig}
\fi
\usepackage[hidelinks, draft=false]{hyperref}

\usepackage{tikz, tikz-3dplot}
\usetikzlibrary{arrows.meta, shapes.geometric}

\pgfplotsset{compat=1.14}

\DeclareMathOperator{\diag}{diag}

\DeclareMathOperator{\spn}{span}
\DeclareMathOperator*{\argminA}{arg\,min}
\DeclareMathOperator*{\argmaxA}{arg\,max}
\DeclareMathOperator{\pdet}{pdet}

\pgfplotsset{select coords between index/.style={
    x filter/.code={
        \ifnum\coordindex<#1\def\pgfmathresult{}\fi
    }
}, table/search path={./}}

\newcommand{\transp}{{\sf T}}
\newcommand{\abs}[1]{\lvert#1\rvert}
\newcommand{\norm}[1]{\left\Vert#1\right\Vert}
\newcommand{\roundB}[1]{\left(#1\right)}

\newcommand\inner[2]{\langle #1, #2 \rangle}
\newtheorem{prop}{Proposition}
\newtheorem{theorem}{Theorem}
\newtheorem{lemma}{Lemma}
\newtheorem{coro}{Corollary}
\newcommand\numberthis{\addtocounter{equation}{1}\tag{\theequation}}
\DeclarePairedDelimiter{\ceil}{\lceil}{\rceil}

\newcommand{\comment}[1]{}

\newcounter{function}
\makeatletter

\makeatother

\newcounter{function*}
\makeatletter
\newenvironment{function*}[1][htb]{
  \let\c@algorithm\c@function
  \renewcommand{\ALG@name}{Function}
  \begin{algorithm*}[#1]
  }{\end{algorithm*}
}
\makeatother

\newcommand{\AlgSP}{SP}
\newcommand{\AlgRand}{WRS}
\newcommand{\AlgDC}{DC}
\newcommand{\AlgAVM}{AVM}
\newcommand{\AlgEDfree}{LSSS}
\newcommand{\AlgGersh}{BS-GDA}
\newcommand{\AlgWils}{WDPP}
\newcommand{\ProjCS}{POCS}

\newcommand{\recF}{\mathcal{F}}
\newcommand{\recR}{\mathcal{R}}
\newcommand{\bwidth}{f}
\newcommand{\lpDsp}{\mathcal{D}}
\newcommand{\lpDma}{\mathbf{D}}
\newcommand{\lpDspm}{\mathcal{D}_m}
\newcommand{\lpDmam}{\mathbf{D}_m}
\newcommand{\lpTildeDspm}{\tilde{\mathcal{D}}_m}
\newcommand{\lpTildeDsp}{\tilde{\mathcal{D}}}
\newcommand{\lpHatDmam}{\hat{\mathbf{D}}_m}
\newcommand{\bwf}{\mathcal{R}}
\newcommand{\bwfm}{\mathcal{R}_m}
\newcommand{\Smiter}{\mathcal{S}_m}
\newcommand{\SmCiter}{\mathcal{S}_m^c}
\newcommand{\softMat}{MATLAB}

\definecolor{blue1}{RGB}{57, 106, 177}
\definecolor{red1}{RGB}{204, 37, 41}
\definecolor{orange1}{RGB}{218,124,48}
\definecolor{green1}{RGB}{62,150,81}
\definecolor{violet1}{RGB}{107,76,154}
\definecolor{lblue1}{RGB}{51,204,255}

\pgfdeclareplotmark{mystar}{
    \node[star,star point ratio=2.25,minimum size=6pt,
          inner sep=0pt,draw=red1,solid,fill=red] {};
}

\pgfplotscreateplotcyclelist{my col}{
blue1, mark=square*\\%
violet1, mark=otimes*\\%
orange1, mark=triangle*\\%
green1, mark=diamond*\\%
red1, mark=mystar\\%
}

\pgfplotscreateplotcyclelist{mydccol}{
blue1, mark=square*\\%
violet1, mark=otimes*\\%
orange1, mark=triangle*\\%
lblue1, mark=diamond*\\%
red1, mark=mystar\\%
}

\pgfplotscreateplotcyclelist{my3col}{
blue1, mark=square*\\%
green1, mark=diamond*\\%
red1, mark=mystar\\%
}

\pgfplotscreateplotcyclelist{myavmcol}{
red1, mark=mystar\\%
}

\setlength{\tabcolsep}{5pt}

\begin{document}

\title{Practical graph signal sampling with log-linear size scaling}

\author{Ajinkya~Jayawant\IEEEauthorrefmark{0*} \texttt{jayawant@usc.edu},
        Antonio~Ortega \texttt{ortega@sipi.usc.edu}\\ Ming Hsieh Department of Electrical and Computer Engineering, University of Southern Calfornia\\ 3740 McClintock Ave, Los Angeles, CA 90089, US
        
\thanks{$^*$ Corresponding author
}}

\maketitle

\begin{abstract}
Graph signal sampling is the problem of selecting a subset of representative graph vertices whose values can be used to interpolate missing values on the remaining graph vertices. Optimizing the choice of sampling set using concepts from experiment design can help minimize the effect of noise in the input signal. While many existing sampling set selection methods are  computationally intensive because they require an eigendecomposition, existing eigendecompostion-free methods are still much slower than random sampling algorithms for large graphs. In this paper, through optimizing sampling sets towards the D-optimal objective from experiment design, we propose a sampling algorithm that has complexity comparable to random sampling algorithms, while reaching accuracy similar to existing eigendecomposition-free methods for a broad range of graph types.
\end{abstract}

\begin{IEEEkeywords}
Graph, signal, sampling, D-optimal, volume, coherence.
\end{IEEEkeywords}

\IEEEpeerreviewmaketitle

\section{Introduction}
Graphs are a convenient way to represent and analyze data having irregular relationships between data points \cite{shuman2013emerging} and can be useful in a variety of different scenarios, such as characterizing the Web \cite{kumar2000web}, semi-supervised learning \cite{zhu2003semi}, community detection \cite{fortunato2010community}, or  traffic analysis \cite{crovella2003graph}. We call \textit{graph signal} the data associated with the nodes of a graph.
Similar to traditional signals, a smooth graph signal can be sampled by making observations on a few nodes of the graph, so that the signal at the remaining (non-observed) nodes can be estimated \cite{chen2015discrete,anis2015efficient,tanaka2020sampling}. For this we need to choose a set of vertices, $\mathcal{S}$, called the  sampling set, on which we observe the signal values in order to predict signal values on the other vertices (the complement of $\mathcal{S}$, $\mathcal{S}^c$). In the presence of noise, some sampling sets lead to better signal reconstructions than others: the goal of {\it sampling set selection} is to find the best such sampling set. For traditional discrete signals such as images and audio, downsampling by an integer factor often works well because of the implicit ordering and regular spacing in the signals. Such a structure with ordered and evenly spaced out locations of the discretized signal is unavailable for most graph signals. As a result, the best sampling set is also unknown.
The concept of reconstructing sampled graph signals with some accuracy usually relies on the assumption that the underlying signal is smooth. Intuitively this means that signal values for neighboring vertices aren't drastically different. This is a reasonable assumption in a variety of scenarios such as sensor networks modelling temperature distribution, graph signal representing labels in semi-supervised learning, or preferences in social networks. This makes it  possible for us to reconstruct them by knowing a few signal values \cite{pesenson2008sampling}.

A common model for smooth graph signals assumes that most of their energy is localized in the subspace spanned by a subset of eigenvectors of the graph Laplacian or other graph operator \cite{shuman2013emerging}. Thus, the problem of selecting the best sampling set naturally translates to the problem of selecting a submatrix of the matrix of eigenvectors of the graph Laplacian \cite{anis2015efficient}. Specifically, the problem reduces to a row/column subset selection similar to linear measurement sensor selection problem \cite{joshi2008sensor}.
In the graph signal sampling context, several papers leverage this knowledge to propose novel algorithms --- \cite{shomorony2014sampling, chen2015discrete, tsitsvero2016signals, tremblay2017graph, chamon2017greedy, wang2018optimal}. We refer the reader to \cite{tanaka2020sampling} for a recent comprehensive review of the literature on this topic.

However, to solve the graph sampling set selection problem, row/column selection needs to be applied on the matrix of eigenvectors of the graph Laplacian (or those of some other suitable graph operator). The corresponding eigendecomposition is an $O(n^3)$ operation for an $n\times n$ matrix\footnote{In practice if the signal is bandlimited to the lowest $\bwidth{}$ frequencies, only $\bwidth{}$ eigenvectors need to be computed, but even this can be a complex problem (e.g., a signal bandlimited to the top 10\% frequencies of a graph with millions of nodes). For simplicity, we describe these as full decomposition methods, even though in practice only a subset of eigenvectors is needed.}. This makes it impractical for large graphs in machine learning, social networks, and other applications, for which the cost of  eigendecomposition would be prohibitive. Thus, methods that solve this subset selection problem without explicitly requiring eigendecomposition are valuable.

We can classify sampling set selection methods into two main types of approaches,  based on whether they require eigendecomposition or not. Some methods compute the full eigendecomposition \cite{shomorony2014sampling,chen2015discrete,tsitsvero2016signals}, or instead require a sequential eigendecomposition, where one eigenvector is computed at each step \cite{anis2015efficient}. Alternatively, \textit{eigendecomposition-free} methods do not make use of an  eigendecomposition of the Laplacian matrix  \cite{puy2016random, wang2018optimal, tremblay2017graph, sakiyama2019eigendecomposition} and are usually faster.
Weighted Random Sampling (\AlgRand{}) \cite{puy2016random} is the fastest method but provides only guarantees on average performance, which means that it may exhibit poor reconstruction accuracy for specific instances. It also needs more samples to match the reconstruction accuracy of other eigendecompostion-free methods. Among eigendecomposition-free methods discussed in \cite{tanaka2020sampling}, Neumann series based sampling \cite{wang2018optimal} has a higher computational complexity, Binary Search with Gershgorin Disc Alignment (\AlgGersh) \cite{bai2020fast} has low computational complexity for smaller graphs, but cannot compete with \AlgRand{} for large graphs, and Localization operator based Sampling Set Selection (\AlgEDfree{}) \cite{sakiyama2019eigendecomposition} achieves good performance but requires some parameter tuning
to achieve optimal performance. Our proposed method can overcome these limitations: similar to \cite{wang2018optimal,bai2020fast,sakiyama2019eigendecomposition} it is eigendecomposition-free, but it has complexity closer to \AlgRand{}, while requiring fewer parameters to tune than \AlgRand{}.

Other recently proposed sampling algorithms are eigen\-decomposition-free but involve a different setup than what we consider in this paper. For example, the error diffusion sampling algorithm (Algorithm 5  from \cite{parada2019blue}) achieves complexity comparable to \AlgRand{}. However, the sampling set and the number of samples chosen depend on the vertex numbering in the graph, which has to be done independently of the algorithm in question. In  \cite{parada2019blue} no specific vertex numbering suitable for Algorithm 5 was recommended. A random vertex numbering algorithm would be fast but may lead to suboptimal sampling set choices (similar to what may happen with random sampling). Thus,
more research may be needed to identify efficient numbering algorithms.
Note that other blue noise sampling algorithms \cite{parada2019bluejournal} do not require  vertex numbering, they involve distance computations on the graph similar to \AlgDC{} in \cite{jayawant2018distance}.
In contrast, our proposed algorithm, \AlgAVM{}, is independent of the vertex numbering of the graph and does not require distance computations.
As another example, the algorithms proposed in \cite{basirian2017random} and \cite{abramenko2019graph} are designed for sampling clustered piecewise constant graph signals. However, in this paper, we focus on a bandlimited smoothness model for graph signals, with graph topologies not limited to clustered graphs.

To motivate our methods consider first \AlgRand{},
where vertices are sampled with a probability proportional to their
\textit{squared local coherence} \cite{puy2016random}.
However, selecting vertices having the highest coherence may not result in the best sampling set, because some vertices may be ``redundant''
(e.g., if they are close to each other on the graph). Other sampling algorithms   \cite{sakiyama2019eigendecomposition} improve performance by selecting vertices based on importance but avoid the redundancy by minimizing a notion of overlapped area between functions centered on the sampled vertices.

In our preliminary work \cite{jayawant2018distance}, we proposed the Distance-Coherence (\AlgDC) algorithm, which mitigates the effect of redundancy between vertices by adding new vertices to the sampling set only if they are at a sufficient distance on the graph from the previously selected nodes. While this can eliminate redundancy, it has a negative impact on computation cost, since distance computation is expensive. As an alternative, in this paper we propose a novel Approximate Volume Maximization (\AlgAVM) algorithm that replaces the distance computation with a filtering operation. Loosely speaking, our proposed scheme in \AlgAVM{} precomputes squared coherences, as  \cite{puy2016random}, with an additional criterion to maintain separation between selected vertices using a filtering operation. The resulting complexity (see Section \ref{sec:theoryComplexity}) has a log-linear dependence on the number of edges in a connected graph. The log-linear dependence is desired because it is similar to that of \AlgRand{} which is the fastest algorithm in literature that uses spectral information, second only to unweighted random sampling from \cite{puy2016random}.
\AlgAVM{} can also be viewed as an efficient approximation to the D-optimality criterion \cite{atkinson2007optimum}.
In this paper we review the main concepts in \AlgDC{} and introduce \AlgAVM{}, showing that these methods can improve upon existing algorithms in various ways.
Our main contributions are:
\begin{enumerate}
    \item We describe our distance-based sampling \AlgDC{} algorithm  (Section \ref{sec:propAlg}) to illustrate how to balance the frequency and vertex domain information of graphs for sampling. \AlgDC{} provided us with key ideas to develop the \AlgAVM{} algorithm and can potentially serve as the basis for hybrid algorithms.
    \item We introduce a new algorithm, \AlgAVM{}  (Algorithm \ref{alg:avm}), which can be used for any graph size or topology while requiring few parameters to tune. Moreover, the accuracy of the reconstruction is a monotonic function of those parameters. This eliminates the need to search for the right parameter, as we only evolve a parameter unidirectionally for a better reconstruction. \item Using the framework of volume based sampling (Section \ref{sec:propAlg}),
    we interpret a series of algorithms --- exact greedy \cite{tsitsvero2016signals}, \AlgRand{}, Spectral Proxies (\AlgSP{}) \cite{anis2015efficient}, \AlgEDfree{}, \AlgDC{}, and our proposed \AlgAVM{} as variations of the volume maximization problem formulation (Section  \ref{sec:cutoff}), and explain critical differences between existing methods and \AlgAVM{}.
    \item \AlgAVM{} provides competitive reconstruction performance on a variety of graphs and sampling scenarios, improving reconstruction signal-to-noise ratio (SNR) over \AlgRand{} by at least 0.6dB and frequently significantly higher (e.g., 2dB) --- Section \ref{sec:expVol}. The practicality of \AlgAVM{} is apparent for larger graph sizes (e.g., of the order of a hundred thousand nodes)---: with the limits placed by the system used in our experiments(see Section \ref{sec:effectGraphSizes}), other state-of-the-art algorithms such as \AlgSP{}, \AlgEDfree{} and \AlgGersh{} often fail at these graph sizes, while a complete execution is always possible for \AlgAVM{}. At graph sizes small enough  for the other algorithms to be applied, \AlgAVM{} is at least 2.5 times and often orders of magnitude faster compared to state-of-the-art algorithms such as \AlgSP{}, \AlgEDfree{} and \AlgGersh{}, while sacrificing less than 0.01dB of reconstruction SNR --- Section \ref{sec:perfAlg}. We explain these advantages in terms of complexity towards the end of Section \ref{sec:propAlg} by showing that compared to \AlgRand{}, the additional computations needed by \AlgAVM{} scale linearly as a function of the number of edges in a connected graph.
\end{enumerate}

As a summary, our proposed \AlgAVM{} sampling algorithm has complexity comparable to the \AlgRand{} sampling algorithm along with a significantly better reconstruction accuracy. It achieves this without requiring any prior knowledge of the signal bandwidth, and can be used  for different graphs while requiring a few easy-to-tune parameters.

\section{Problem setup}

\begin{table*}[t]
    \centering
    \caption{Linear algebra notation in this paper}
    \begin{tabular}{| c | c |}\hline
        Notation & Description \\ \hline
        $\mathcal{X}_i$ & $\mathcal{X}$ after iteration $i$ \\ \hline
        $\abs{\mathcal{X}}$ & Cardinality of set $\mathcal{X}$ \\ \hline
        $\mathbf{A}_{\mathcal{XY}}$ or $\mathbf{A}_{\mathcal{X,Y}}$ & Submatrix of $\mathbf{A} $ indexed by sets $\mathcal{X}$ and $\mathcal{Y}$ \\ \hline
        $\mathbf{A}_{ij}$ & $(i,j)^\text{th}$ element of $\mathbf{A}$ \\ \hline
        $\mathbf{A}_{\mathcal{X}}$ & $\mathbf{A}_{\mathcal{:,X}}$, selection of the columns of $\mathbf{A}$\\ \hline
        $\mathbf{A}_i$ & $\mathbf{A}$ after  iteration $i$\\ \hline
        $x_i$ or $\mathbf{x}(i)$ & $i^{th}$ element of the vector $\mathbf{x}$\\ \hline
        $\mathbf{x}_{\mathcal{X}}$ or $\mathbf{x}(\mathcal{X})$ & Subset of the vector $\mathbf{x}$ corresponding to indices $\mathcal{X}$\\ \hline
        $\mathbf{x}_v$ & Vector  corresponding to a vertex $v$ among a sequence of vectors indexed over the set of vertices $\mathcal{V}$\\ \hline
        $\norm{.}$ & Two/Euclidean norm of matrix or vector
        \\ \hline
        $\abs{x}, \abs{\mathbf{x}}$ & Entry wise absolute value of scalar $x$ or vector $\mathbf{x}$\\ \hline
    \end{tabular}
    \label{tab:papNote}
\end{table*}

\subsection{Notation}
In this paper, we represent sets using calligraphic uppercase, e.g., $\mathcal{X}$, vectors using bold lowercase,  $\mathbf{x}$, matrices using bold uppercase, $\mathbf{A}$, and scalars using plain uppercase or lowercase as $x$ or $X$. Table \ref{tab:papNote} lists additional notations.

A graph is defined as the pair $(\mathcal{V},\mathcal{E})$, where $\mathcal{V}$ is the set of nodes or vertices and $\mathcal{E}$ is the set of edges \cite{bollobas2013modern}. The set of edges $\mathcal{E}$ is a subset of the set of unordered pairs of elements of $\mathcal{V}$. A graph signal is a real-valued function defined on the vertices of the graph, $\mathbf{f}:\mathcal{V}\rightarrow \mathbb{R}$.
We index the vertices $v\in\mathcal{V}$ with the set $\{1,\cdots,n\}$ and define $w_{ij}$
as the weight of the edge between vertices $i$ and $j$. The $(i,j)^\text{th}$ entry of the adjacency matrix of the graph $\mathbf{A}$ is $w_{ij}$, with
$w_{ii} =0$, where $n$ is the number of vertices in the graph, which we also call as the graph size.
The degree matrix $\mathbf{D}$ of a graph is a diagonal matrix with diagonal entries $d_{ii}= \sum_j w_{ij}$.
In this paper we consider weighted undirected graphs, without self loops and with non-negative edge weights. Throughout the paper,
$\mathbf{I}$ is $n \times n$ identity matrix.

The combinatorial Laplacian for the graph is given by $\mathbf{L} = \mathbf{D} - \mathbf{A}$, with its corresponding eigendecomposition defined as  $\mathbf{L} = \mathbf{U} \Sigma \mathbf{U}^\transp$ since the Laplacian matrix is symmetric and positive semidefinite.  The eigenvalues of the Laplacian matrix are $\boldsymbol{\Sigma}=\diag(\lambda_1,\cdots,\lambda_n)$, with $\lambda_1 \leq \cdots \leq \lambda_n$ representing the frequencies. The column vectors of $\mathbf{U}$ provide a frequency representation for graph signals, so that the operator $\mathbf{U}^\transp$ is usually called the graph Fourier transform (GFT).
The eigenvectors $\mathbf{u}_i$ of $\mathbf{L}$ associated with larger eigenvalues $\lambda_i$ correspond to higher frequencies, and the ones associated with lower eigenvalues correspond to lower frequencies \cite{shuman2013emerging}.

The sampling set ${\mathcal S}$ is defined as a subset of $\mathcal{V}$ where the values of the graph signal $\mathbf{f}$ are known, leading to a vector of known values $\mathbf{f}_{\mathcal S}$. The problem we consider here is that of finding the set ${\mathcal S}$ such that the error in interpolating  $\mathbf{f}_{\mathcal{S}^c}$ from $\mathbf{f}_{\mathcal S}$ is minimized. Here, different error metrics are possible and the actual error depends on assumptions made about the signal. When comparing algorithms we assume they all operate with the same sampling set size: $s$. For the sake of convenience, without loss of generality, for a given algorithm the vertices are relabeled after sampling, so that their labels correspond to the order in which they were chosen, $\mathcal{S} = \{1, 2, \cdots\}$.

For reconstruction, we will often work with sub-matrices of $\mathbf{U}$ corresponding to different frequencies or vertex localizations.
The cardinality of the set of frequencies, $\abs{\recF{}}$, is the bandwidth of the signal, whereas the set $\recF{}$ is  the bandwidth support. Letting $\recF{}$ be the set $\{1,\cdots,\bwidth{}\}$, where $\bwidth{}=\abs{\recF{}}$, the matrix constructed by selecting the first $\bwidth{}$ columns of $\mathbf{U}$ will be denoted by $\mathbf{U}_{\mathcal{VF}}$ or simply $\mathbf{U}_\recF{}$. The matrix constructed by further selecting rows of $\mathbf{U}_\recF{}$ indexed by $\mathcal{S}$ (corresponding to selected nodes) will be written as $\mathbf{U}_{\mathcal{S}\recF{}}$.

\subsection{Problem formulation}
\label{sec:probForm}
For sampling bandlimited signals $\mathbf{x}$, which can be written as
\begin{equation*}
    \mathbf{x} = \mathbf{U}_\recF{} \tilde{\mathbf{x}}_\recF{},
\end{equation*}
a sampling set that satisfies the following two conditions: i) the number of samples requested is larger than the bandwidth, that is $\abs{\mathcal{S}}\geq \bwidth{}$, and ii) the sampling set $\mathcal{S}$ is a uniqueness set \cite{pesenson2008sampling} corresponding to the bandwidth support $\recF{}$,  will allow us to recover $\mathbf{x}$ exactly.
Given the observed samples, $\mathbf{x}_\mathcal{S}$, the reconstruction is given by the least squares solution:
\begin{equation}
    \mathbf{\hat{x}} = \mathbf{U}_\recF{} (\mathbf{U}_{\mathcal{S}\recF{}}^\transp \mathbf{U}_{\mathcal{S}\recF{}})^{-1} \mathbf{U}_{\mathcal{S}\recF{}}^\transp\mathbf{x}_\mathcal{S}.
    \label{eq:recon}
\end{equation}

In this paper we consider the widely studied scenario of bandlimited signals with added noise, and choose sampling rates that satisfy Condition i) for the underlying noise-free signal\footnote{We do not consider cases where signals are not bandlimited but can be sampled and reconstructed (refer to  \cite{tanaka2020sampling} and references therein). Exploring more general models for signal sampling is left for future work.}. While Condition ii) is difficult to verify without computing the eigendecomposition of the Laplacian, it is likely to be satisfied if Condition i) holds.
Indeed, for most graphs, except those that are either disconnected or have some symmetries (e.g., unweighted path or grid graphs),  any sets such that $\abs{\mathcal{S}}\geq \bwidth{}$ are uniqueness sets,
 Thus, similar to most practical sampling methods \cite{anis2015efficient, puy2016random,sakiyama2019eigendecomposition, bai2020fast},  our sampling algorithms are not designed to return uniqueness sets satisfying Condition ii) thus providing exact recovery, and instead we assume that Condition i) is sufficient to guarantee exact recovery.

In practice signals are never exactly bandlimited and it is common to consider the signal model $\mathbf{f} = \mathbf{x} + \mathbf{n}$, where $\mathbf{x}$ is bandlimited and $\mathbf{n}$ is a noise vector. The reconstruction from the sampled signal $\mathbf{f}_\mathcal{S}=\mathbf{x}_\mathcal{S}+\mathbf{n}_\mathcal{S}$ is then:
\begin{equation*}
    \mathbf{\hat{f}} = \mathbf{U}_\recF{} (\mathbf{U}_{\mathcal{S}\recF{}}^\transp \mathbf{U}_{\mathcal{S}\recF{}})^{-1} \mathbf{U}_{\mathcal{S}\recF{}}^\transp(\mathbf{x}_\mathcal{S}+\mathbf{n}_\mathcal{S}).
\end{equation*}
Since \eqref{eq:recon} allows us to reconstruct $\mathbf{x}$ exactly, the error in the reconstructed signal is:
\begin{equation*}
    \mathbf{\hat{f}} - \mathbf{x} = \mathbf{U}_\recF{} (\mathbf{U}_{\mathcal{S}\recF{}}^\transp \mathbf{U}_{\mathcal{S}\recF{}})^{-1} \mathbf{U}_{\mathcal{S}\recF{}}^\transp\mathbf{n}_\mathcal{S}.
\end{equation*}
The expected value of the corresponding error matrix, $(\mathbf{\hat{f}-\mathbf{x}})\allowbreak(\mathbf{\hat{f}}-\mathbf{x})^\transp$, is
\begin{align*}
&\mathbb{E}[(\mathbf{\hat{f}-\mathbf{x}})(\mathbf{\hat{f}}-\mathbf{x})^\transp] = \notag \\  &\mathbf{U}_\recF{} (\mathbf{U}_{\mathcal{S}\recF{}}^\transp \mathbf{U}_{\mathcal{S}\recF{}})^{-1} \mathbf{U}_{\mathcal{S}\recF{}}^\transp E[\mathbf{n}_\mathcal{S} \mathbf{n}_\mathcal{S}^\transp]\mathbf{U}_{\mathcal{S}\recF{}} (\mathbf{U}_{\mathcal{S}\recF{}}^\transp \mathbf{U}_{\mathcal{S}\recF{}})^{-1} \mathbf{U}_\recF{}^\transp.
\end{align*}
If we assume individual noise entries to be independent with zero mean and equal variance, the expected value, which is the error covariance matrix becomes
\begin{equation}
\mathbb{E}[(\mathbf{\hat{f}-\mathbf{x}})(\mathbf{\hat{f}}-\mathbf{x})^\transp] = c \mathbf{U}_\recF{} (\mathbf{U}_{\mathcal{S}\recF{}}^\transp \mathbf{U}_{\mathcal{S}\recF{}})^{-1} \mathbf{U}_\recF{}^\transp
\label{eq:covariance}
\end{equation}
for a constant $c$. Different metrics of the reconstruction error $\mathbf{\hat{f}} - \mathbf{x}$ can be optimized by maximizing a function $h:M_{n,n}(\mathbb{R})\rightarrow \mathbb{R}$ of the error covariance matrix, where $M_{n,n}(\mathbb{R})$ is an $n\times n$ matrix of real numbers. Since the error covariance matrix is a function of the sampling set $\mathcal{S}$, we wish to find an $\mathcal{S}$ that maximizes a function $h(.)$ of the error covariance matrix as follows:
\begin{equation}
    \mathcal{S} = \argmaxA_{\mathcal{S}\subset \mathcal{V},\abs{\mathcal{S}}=s} h\roundB{\mathbf{U}_{\recF{}}(\mathbf{U}_{\mathcal{S}\recF{}}^\transp\mathbf{U}_{\mathcal{S}\recF{}})^{-1}\mathbf{U}_{\recF{}}^\transp}.
    \label{eq:objMat}
\end{equation}
Note that the set $\mathcal{S}$ achieving optimality under general criteria in the form of \eqref{eq:objMat}  is a function of  $\recF{}$, so that $\mathcal{S}$ is optimized for reconstruction with that particular bandwidth support $\recF{}$.
While typically we do not know the bandwidth of the original signal, in what follows we assume that a specific bandwidth for reconstructing the signal has been given.

A particular choice $h(.)$ of interest to us is $1/\pdet(.)$, where $\pdet(.)$ is the pseudo determinant\cite{minka1998inferring}. Since our error covariance matrix is singular, we used pseudo determinant instead of determinant. Pseudo determinant only differs from determinant in that it is a product of non-zero eigenvalues instead of all eigenvalues of the matrix. With our choice of $h(.)$, \eqref{eq:objMat} is equivalent to the following maximization :
\begin{equation}
    \mathcal{S} = \argmaxA_{\mathcal{S}\subset \mathcal{V},\abs{\mathcal{S}}=s} \det(\mathbf{U}_{\mathcal{S}\recF{}}^\transp \mathbf{U}_{\mathcal{S}\recF{}}).
    \label{eq:mainObj}
\end{equation}
This is also known as the D-optimality criterion. Maximizing the determinant leads to minimizing the confidence interval of the solution $\mathbf{\hat{f}}$ \cite{atkinson2007optimum}
as will be seen in Appendix \ref{sec:ignoreBand}.
As a further advantage, the D-optimal objective leads to a novel unified view of different types of sampling algorithms proposed in the literature --- see Section \ref{sec:tradeoffInn}.
Moreover, the D-optimal objective is necessary for the approximations we need in order to develop algorithms achieving eigendecomposition-free subset selection.

Sampling algorithms are designed to implicitly or explicitly optimize the sampling set for a particular bandwidth support. In this paper, we denote by $\recR{}$ the bandwidth support assumed by a sampling algorithm, which can be equal to the  reconstruction bandwidth support $\recF{}$ for which the objective \eqref{eq:mainObj} can be rewritten as:
\begin{equation}
    \mathcal{S} = \argmaxA_{\mathcal{S}\subset \mathcal{V},\abs{\mathcal{S}}=s} \det(\mathbf{U}_{\mathcal{S}\recR{}}^\transp \mathbf{U}_{\mathcal{S}\recR{}}), \;\; \text{with} \;\;  \qquad \recR{}=\recF{}.
    \label{eq:FreplaceR}
\end{equation}
However, there are advantages to choosing a different $\mathcal{R}$ for optimization than $\recF{}$. For example, if we consider $\recR{}=\{1,\cdots,s\}$ so that $\abs{\mathcal{R}}=\abs{\mathcal{S}}$, we can rewrite the objective function \eqref{eq:FreplaceR} without changing its value, by permuting the order of the matrices:
\begin{equation}
    \mathcal{S} = \argmaxA_{\mathcal{S}\subset \mathcal{V},\abs{\mathcal{S}}=s} \det(\mathbf{U}_{\mathcal{S}\recR{}}\mathbf{U}_{\mathcal{S}\recR{}}^\transp).
    \label{eq:doptObj}
\end{equation}
Essentially, instead of using the reconstruction frequency $f$ as the sampling frequency, we use the number of samples requested, $s$, as a proxy for the sampling frequency. As we will see, this new form of \eqref{eq:doptObj} is easier to interpret and use.

Since choosing  $\abs{\mathcal{R}}=\abs{\mathcal{S}}$ is required, it raises concerns about the optimality of our sampling set for the original objective function. This issue will be discussed in Appendix \ref{sec:ignoreBand}.

\subsection{Solving D-optimal objectives}
\label{sec:doptObj}

D-optimal subsets for matrices are determinant maximizing subsets. The determinant measures the volume, and selecting a maximum volume submatrix is an NP-Hard problem \cite{ccivril2009selecting}. Nearly-optimal methods have been proposed in the literature  \cite{goreinov2010find}, \cite{deshpande2010efficient}, but these are based on selecting a submatrix of rows or columns of a known matrix.
Similarly, in the graph signal processing literature, several contributions
\cite{tsitsvero2016signals,chamon2017greedy} develop algorithms for D-optimal selection assuming that $\mathbf{U}$ is available.
In contrast, the main novelty of our work is to develop greedy algorithms for approximate D-optimality, i.e., solving  \eqref{eq:mainObj} without requiring explicit eigendecomposition to obtain $\mathbf{U}$. This is made possible by specific characteristics of our problem to be studied next.

Among graph signal sampling approaches that solve the D-optimal objective,
the closest to our work is the application of Wilson's algorithm for Determinantal Point Process (\AlgWils{})  of  \cite{tremblay2017graph}, which similarly does not require explicitly computing $\mathbf{U}$ . However, our proposed technique, \AlgAVM{}, achieves this goal in a different way and leads to better performance.
Specifically,  \AlgWils{} avoids eigendecomposition while approximating the bandlimited kernel using Wilson's marginal kernel \cite{tremblay2017graph}
upfront. This is a one-time approximation, which does not have to be updated each time nodes are added to the sampling set. This approach relies on a relation between Wilson's marginal kernel and random walks on the graph, leading to a probability of choosing sampling sets that is proportional to the determinant \cite{tremblay2017graph}. In contrast, \AlgAVM{} solves an approximate optimization at each iteration, i.e., each time a new vertex is added to the existing sampling set.
Thus, \AlgAVM{} optimizes the cost function \eqref{eq:mainObj} at every iteration as opposed to \AlgWils{} which aims to achieve the expected value of the cost function.

The \AlgWils{} and \AlgAVM{} algorithms differ in their performance as well. \AlgAVM{} is a greedy algorithm, and the performance greedy determinant maximization algorithms is known to lie within a factor of the maximum determinant \cite{ccivril2009selecting}. In contrast, \AlgWils{} samples with probabilities proportional to the determinants, so that its average performance depends on the distribution of the determinants.
In fact, for certain graph types in \cite{tremblay2017graph}, we observe that \AlgWils{} has worse average performance than \AlgRand{}.
In comparison,
in our experiments, for a wide variety of graph topologies and sizes, \AlgAVM{} consistently outperforms  \AlgRand{}  \cite{puy2016random} in terms of average reconstruction error.

\section{Efficient sampling set selection algorithms}
\label{sec:propAlg}

In what follows we assume that the conditions for equivalence between the two objective function forms  \eqref{eq:FreplaceR} and \eqref{eq:doptObj} are verified, so that we focus on solving \eqref{eq:doptObj}.

\subsection{Incremental subset selection}

The bandwidth support for the purpose of sampling is assumed to be $\mathcal{R}=\{1,\cdots,s\}$. Let us start by defining the low pass filtered signal for the Kronecker delta function $\boldsymbol{\delta}_v$ localized at vertex $v$:
\begin{equation}
\mathbf{d}_v = \mathbf{U}_{\mathcal{R}}\mathbf{U}_{\mathcal{R}}^\transp \boldsymbol{\delta}_v.
\label{eq:lpdel}
\end{equation}
With this definition, the objective in \eqref{eq:doptObj} can be written as:
\begin{align*}
    \det(\mathbf{U}_{\mathcal{S}\recR{}}\mathbf{U}_{\mathcal{S}\recR{}}^\transp ) &= \det(\mathbf{U}_{\mathcal{S}\recR{}} \mathbf{U}_{\mathcal{R}}^\transp \mathbf{U}_{\mathcal{R}}\mathbf{U}_{\mathcal{S}\recR{}}^\transp)\\
    &= \det(\mathbf{I}_{\mathcal{S}}^\transp\mathbf{U}_{\mathcal{R}} \mathbf{U}_{\mathcal{R}}^\transp \mathbf{U}_{\mathcal{R}}\mathbf{U}_{\mathcal{R}}^\transp\mathbf{I}_{\mathcal{S}})\\
    &= \det\roundB{\begin{bmatrix}\mathbf{d}_1&\cdots&\mathbf{d}_s\end{bmatrix}^\transp \begin{bmatrix}\mathbf{d}_1&\cdots&\mathbf{d}_s\end{bmatrix}}\\
    &= \text{Vol}^2(\mathbf{d}_1,\cdots,\mathbf{d}_s).\numberthis\label{eq:detEqui}
\end{align*}
Here $\mathbf{I}_{\mathcal{S}}$ represents the submatrix obtained by selecting the columns of $\mathbf{I}$ indexed by set $\mathcal{S}$. Thus, maximizing the determinant $\det(\mathbf{U}_{\mathcal{S}\recR{}}\mathbf{U}_{\mathcal{S}\recR{}}^\transp )$ is equivalent to maximizing $\text{Vol}(\mathbf{d}_1,\allowbreak\cdots,\mathbf{d}_s)$, and as a consequence the set maximizing \eqref{eq:detEqui} also maximizes \eqref{eq:doptObj}.

In an iterative algorithm where the goal is to select $s$ samples, consider a point where $m<s$ samples have been selected and we have to choose the next sample from among the remaining vertices. Throughout the rest of the paper, we denote the sampling set at the end of the $m^\text{th}$ iteration of an algorithm by $\Smiter{}$.
Given the first $m$ chosen samples we define  $\lpDmam{} = [\mathbf{d}_1 \cdots  \mathbf{d}_m ]$ and the space spanned by the vectors in $\lpDmam{}$ as $\lpDspm{} = \spn(\mathbf{d}_1, \cdots, \mathbf{d}_m)$. Throughout the rest of the paper, we denote $\lpDsp{}$ and $\lpDma{}$ at the end of the $m^\text{th}$ iteration of an algorithm by $\lpDspm{}$ and $\lpDmam{}$. Note that both $\lpDspm{}$ and $\lpDmam{}$ are a function of the choice of the sampling bandwidth support $\recR{}$.
Next, the best column $\mathbf{d}_v$ to be added to $\lpDmam{}$ should maximize:
\begin{subequations}\label{eq:detObj}
\begin{align}
    &\det\roundB{\begin{bmatrix} \lpDmam{} & \mathbf{d}_v\end{bmatrix}^\transp \begin{bmatrix} \lpDmam{} &
    \mathbf{d}_v\end{bmatrix}}
    \\&
    =\det\roundB{\begin{bmatrix} \lpDmam^\transp \lpDmam{} & \lpDmam^\transp \mathbf{d}_v\\ \mathbf{d}_v^\transp \lpDmam{} & \mathbf{d}_v^\transp \mathbf{d}_v\end{bmatrix}} \\&= \det(\lpDmam^\transp \lpDmam{})\det(\mathbf{d}_v^\transp \mathbf{d}_v - \mathbf{d}_v^\transp \lpDmam{} (\lpDmam^\transp\lpDmam{})^{-1}\lpDmam^\transp \mathbf{d}_v)\label{eq:schurUpdate}\\
    &= \det(\lpDmam^\transp \lpDmam{})(\norm{\mathbf{d}_v}^2 - \mathbf{d}_v^\transp \lpDmam{} (\lpDmam^\transp\lpDmam{})^{-1}\lpDmam^\transp \mathbf{d}_v)\label{eq:multUpdate}\\
    &= \det(\lpDmam^\transp \lpDmam{})\roundB{\norm{\mathbf{d}_v}^2 - \norm{\mathbf{P}_{\lpDspm{}} \mathbf{d}_v}^2}. \label{eq:projInterpret}
\end{align}
\end{subequations}
The effect on the determinant of adding a column to $\lpDmam{}$  can be represented according to a multiplicative update (Section 11.2 \cite{atkinson2007optimum}) in our D-optimal design.  \eqref{eq:schurUpdate} follows from \cite{horn2012matrix} (Section 0.8.5 in Second Edition), while  \eqref{eq:projInterpret} follows because $\mathbf{P}_{\lpDspm{}} = \lpDmam{} (\lpDmam^\transp\lpDmam{})^{-1}\lpDmam^\transp$ is a projection onto the space $\lpDspm{}$.
Direct greedy determinant maximization requires selecting a vertex that maximizes the update term in \eqref{eq:multUpdate}:
\begin{equation}
    v^* = \argmaxA_{v\in \SmCiter{}} \left( \norm{\mathbf{d}_v}^2 - \mathbf{d}_v^\transp \lpDmam{} (\lpDmam^\transp\lpDmam{})^{-1}\lpDmam^\transp \mathbf{d}_v  \right)
    \label{eq:objMultUpdate}
\end{equation}
over all possible vertices $v\in \SmCiter{}$, which requires the expensive computation of $(\lpDmam^\transp\lpDmam{})^{-1}$.

The first step towards a greedy incremental vertex selection is estimating the two components, $\norm{\mathbf{d}_v}^2$ and $\mathbf{d}_v^\transp \lpDmam{}\allowbreak (\lpDmam^\transp\lpDmam{})^{-1}\lpDmam^\transp \mathbf{d}_v$, of the multiplicative update. The first term $\norm{\mathbf{d}_v}^2$  is the \textit{squared coherence} introduced in  \cite{puy2016random}, which is estimated here using the same techniques as in \cite{puy2016random}, and is defined as
\begin{equation}
    \norm{\mathbf{d}_v}^2 = \norm{\mathbf{U}_\mathcal{R}\mathbf{U}_\mathcal{R}^\transp \boldsymbol{\delta}_v}^2 = \norm{\mathbf{U}_\mathcal{R}^\transp \boldsymbol{\delta}_v}^2.
    \label{eq:cohDef}
\end{equation}
For the second term, the projection interpretation of \eqref{eq:projInterpret} will be useful to develop approximations to reduce complexity.
Additionally, we will make use of the following property of our bandlimited space to develop an approximation.
\begin{lemma}
The space of bandlimited signals $\spn(\mathbf{U}_{\mathcal{R}})$ equipped with the dot product is a reproducing kernel Hilbert space (RKHS).
\end{lemma}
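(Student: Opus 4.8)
The plan is to exhibit an explicit reproducing kernel and verify the two defining properties of an RKHS: that point evaluations are bounded linear functionals, and equivalently that there is a kernel function lying in the space that reproduces values under the dot product. The natural candidate is the bandlimited projection kernel $k(u,v) = (\mathbf{U}_{\mathcal{R}}\mathbf{U}_{\mathcal{R}}^\transp)_{uv} = \boldsymbol{\delta}_u^\transp \mathbf{U}_{\mathcal{R}}\mathbf{U}_{\mathcal{R}}^\transp \boldsymbol{\delta}_v$, i.e. the $(u,v)$ entry of the orthogonal projector onto $\spn(\mathbf{U}_{\mathcal{R}})$. Observe that for fixed $v$ the function $u \mapsto k(u,v)$ is exactly $\mathbf{d}_v = \mathbf{U}_{\mathcal{R}}\mathbf{U}_{\mathcal{R}}^\transp\boldsymbol{\delta}_v$ already defined in the excerpt, which manifestly lies in $\spn(\mathbf{U}_{\mathcal{R}})$ since it is a linear combination of the columns of $\mathbf{U}_{\mathcal{R}}$.

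First I would set up the Hilbert space: since $\mathcal{V}$ is finite, $\spn(\mathbf{U}_{\mathcal{R}})$ is a finite-dimensional subspace of $\mathbb{R}^n$ with the standard dot product, hence automatically complete, so the only substantive content is the reproducing property. Second, I would verify reproduction: for any $\mathbf{g} \in \spn(\mathbf{U}_{\mathcal{R}})$ we have $\mathbf{U}_{\mathcal{R}}\mathbf{U}_{\mathcal{R}}^\transp\mathbf{g} = \mathbf{g}$ (because $\mathbf{U}_{\mathcal{R}}^\transp\mathbf{U}_{\mathcal{R}} = \mathbf{I}$ and $\mathbf{g}$ lies in the column space), and therefore
\begin{equation*}
\inner{\mathbf{g}}{\mathbf{d}_v} = \mathbf{g}^\transp \mathbf{U}_{\mathcal{R}}\mathbf{U}_{\mathcal{R}}^\transp \boldsymbol{\delta}_v = (\mathbf{U}_{\mathcal{R}}\mathbf{U}_{\mathcal{R}}^\transp\mathbf{g})^\transp\boldsymbol{\delta}_v = \mathbf{g}^\transp\boldsymbol{\delta}_v = \mathbf{g}(v).
\end{equation*}
This is precisely the reproducing identity $\inner{\mathbf{g}}{k(\cdot,v)} = \mathbf{g}(v)$. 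Third, I would note the immediate consequence that each evaluation functional $\mathbf{g}\mapsto \mathbf{g}(v)$ is bounded, with norm $\norm{\mathbf{d}_v}$, by Cauchy–Schwarz; together with completeness this is the definition of an RKHS, and symmetry/positive-semidefiniteness of the kernel follows since $\mathbf{U}_{\mathcal{R}}\mathbf{U}_{\mathcal{R}}^\transp$ is a symmetric projector.

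Honestly, there is no serious obstacle here: the statement is essentially a repackaging of the fact that a finite-dimensional inner product space of functions on a finite set is trivially an RKHS, with the kernel given by the Gram structure of the evaluation functionals. The only thing worth being careful about is making explicit that $\mathbf{U}_{\mathcal{R}}^\transp\mathbf{U}_{\mathcal{R}} = \mathbf{I}_{|\mathcal{R}|}$ (orthonormality of the selected Laplacian eigenvectors) so that $\mathbf{U}_{\mathcal{R}}\mathbf{U}_{\mathcal{R}}^\transp$ is genuinely the orthogonal projector onto $\spn(\mathbf{U}_{\mathcal{R}})$, which is what makes the reproduction step go through. The payoff, which I would flag for the subsequent development, is that $\mathbf{d}_v$ is the kernel section at $v$, so $\norm{\mathbf{P}_{\lpDspm{}}\mathbf{d}_v}^2$ in \eqref{eq:projInterpret} admits an interpretation in terms of how well $\mathbf{d}_v$ is approximated by kernel sections at the already-chosen vertices — the handle used later to approximate the multiplicative update cheaply.
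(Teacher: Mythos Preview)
Your proof is correct. Both you and the paper identify the same kernel --- the bandlimited projector $\mathbf{U}_{\mathcal{R}}\mathbf{U}_{\mathcal{R}}^\transp$ --- but the routes differ. The paper proceeds abstractly: it introduces a feature map $\phi(\mathbf{x}) = \mathbf{U}_{\mathcal{R}}\mathbf{U}_{\mathcal{R}}^\transp\mathbf{x}$, defines $K(\mathbf{x},\mathbf{y}) = \inner{\phi(\mathbf{x})}{\phi(\mathbf{y})}$ on all of $\mathbb{R}^n\times\mathbb{R}^n$, and then invokes two theorems from an external RKHS reference to conclude that $K$ is a reproducing kernel and hence the space is an RKHS; the reproducing identity $\inner{\mathbf{f}}{\mathbf{d}_v}=\mathbf{f}(v)$ is then established separately as a corollary. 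You instead verify the reproducing property directly by the one-line computation $\inner{\mathbf{g}}{\mathbf{d}_v} = (\mathbf{U}_{\mathcal{R}}\mathbf{U}_{\mathcal{R}}^\transp\mathbf{g})^\transp\boldsymbol{\delta}_v = \mathbf{g}(v)$, using only that $\mathbf{U}_{\mathcal{R}}^\transp\mathbf{U}_{\mathcal{R}}=\mathbf{I}$. Your argument is more elementary and self-contained, and it folds the paper's subsequent corollary into the lemma itself; the paper's version has the mild advantage of placing the result in a standard feature-map framework, at the cost of citing machinery that is heavier than the finite-dimensional situation requires.
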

\begin{proof}

Defining the inner product for signals $\mathbf{f},\mathbf{g}\in \spn(\allowbreak\mathbf{U}_{\mathcal{R}})$ as $\inner{\mathbf{f}}{\mathbf{g}}=\sum_i f_i g_i$, $\spn(\mathbf{U}_{\mathcal{R}})$ is a Hilbert space.
A Hilbert space further needs an existing reproducing kernel to be an RKHS. Towards that end, consider a mapping to our bandlimited space  $\phi:\mathbb{R}^n \rightarrow \spn(\mathbf{U}_{\mathcal{R}})$ given as:
\begin{equation*}
    \phi(\mathbf{x}) = \mathbf{U}_{\mathcal{R}}\mathbf{U}_{\mathcal{R}}^\transp \mathbf{x}.
\end{equation*}
A function $K:\mathbb{R}^n \times \mathbb{R}^n \rightarrow \mathbb{R}$ that uses that mapping and the scalar product in our Hilbert space is:
\begin{equation*}
    \forall \mathbf{x}, \mathbf{y} \in \mathbb{R}^n,\qquad K(\mathbf{x},\mathbf{y}) = \inner{\phi(\mathbf{x})}{\phi(\mathbf{y})}.
\end{equation*}
Now using Theorem 4 from \cite{berlinet2011reproducing}, $K$ is a reproducing kernel for our Hilbert space and using Theorem 1 from \cite{berlinet2011reproducing} we conclude that our bandlimited space of signals is an RKHS.
\end{proof}

\begin{coro}
The dot product of a bandlimited signal $\mathbf{f}\in \spn(\mathbf{U}_{\mathcal{R}})$ with a filtered  delta $\mathbf{d}_v$ is $\mathbf{f}(v)$, the entry at node $v$ of signal $\mathbf{f}$:
\label{prop:rkhsDot}
\begin{equation}
    \inner{\mathbf{f}}{\mathbf{d}_v} = \mathbf{f}(v).
    \label{eq:rkDot}
\end{equation}
\end{coro}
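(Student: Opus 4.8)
The plan is to reduce the statement to two elementary facts about the band‑limiting operator $\mathbf{U}_{\mathcal{R}}\mathbf{U}_{\mathcal{R}}^\transp$: that it is self‑adjoint (symmetric), and that it acts as the identity on $\spn(\mathbf{U}_{\mathcal{R}})$. The latter holds because the columns of $\mathbf{U}_{\mathcal{R}}$ are orthonormal, so $\mathbf{U}_{\mathcal{R}}^\transp\mathbf{U}_{\mathcal{R}}=\mathbf{I}$; hence for any $\mathbf{f}\in\spn(\mathbf{U}_{\mathcal{R}})$, writing $\mathbf{f}=\mathbf{U}_{\mathcal{R}}\tilde{\mathbf{f}}$, we get $\mathbf{U}_{\mathcal{R}}\mathbf{U}_{\mathcal{R}}^\transp\mathbf{f}=\mathbf{U}_{\mathcal{R}}\tilde{\mathbf{f}}=\mathbf{f}$, i.e. $\mathbf{f}=\phi(\mathbf{f})$ in the notation of the Lemma.

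Next I would simply expand the inner product as the ordinary dot product and substitute $\mathbf{d}_v=\mathbf{U}_{\mathcal{R}}\mathbf{U}_{\mathcal{R}}^\transp\boldsymbol{\delta}_v$:
\begin{equation*}
\inner{\mathbf{f}}{\mathbf{d}_v}=\mathbf{f}^\transp\mathbf{U}_{\mathcal{R}}\mathbf{U}_{\mathcal{R}}^\transp\boldsymbol{\delta}_v=\bigl(\mathbf{U}_{\mathcal{R}}\mathbf{U}_{\mathcal{R}}^\transp\mathbf{f}\bigr)^\transp\boldsymbol{\delta}_v=\mathbf{f}^\transp\boldsymbol{\delta}_v=\mathbf{f}(v),
\end{equation*}
where the second equality uses symmetry of $\mathbf{U}_{\mathcal{R}}\mathbf{U}_{\mathcal{R}}^\transp$, the third uses the fixed‑point property just established, and the last uses that $\boldsymbol{\delta}_v$ is the Kronecker delta at $v$. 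Equivalently, one can phrase this as the reproducing property of the kernel $K$ constructed in the Lemma: $\phi(\boldsymbol{\delta}_v)=\mathbf{d}_v$ is the representer of the evaluation functional at $v$, and $\inner{\mathbf{f}}{\mathbf{d}_v}=\inner{\phi(\mathbf{f})}{\phi(\boldsymbol{\delta}_v)}=K(\mathbf{f},\boldsymbol{\delta}_v)$ evaluates to $\mathbf{f}(v)$ precisely because $\mathbf{f}$ already lies in the bandlimited subspace.

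There is no real obstacle here — the proof is a two‑line linear‑algebra computation — so the only point worth stating carefully is that the hypothesis $\mathbf{f}\in\spn(\mathbf{U}_{\mathcal{R}})$ is what makes the cancellation $\mathbf{U}_{\mathcal{R}}\mathbf{U}_{\mathcal{R}}^\transp\mathbf{f}=\mathbf{f}$ legitimate; for a general $\mathbf{f}\in\mathbb{R}^n$ the same computation yields $\inner{\mathbf{f}}{\mathbf{d}_v}=(\mathbf{U}_{\mathcal{R}}\mathbf{U}_{\mathcal{R}}^\transp\mathbf{f})(v)$, i.e. the value at $v$ of the band‑limited projection of $\mathbf{f}$, rather than $\mathbf{f}(v)$ itself.
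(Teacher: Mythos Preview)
Your proof is correct and follows essentially the same route as the paper: both arguments move the projector $\mathbf{U}_{\mathcal{R}}\mathbf{U}_{\mathcal{R}}^\transp$ across the inner product by symmetry, use that it fixes any $\mathbf{f}\in\spn(\mathbf{U}_{\mathcal{R}})$, and then read off $\inner{\mathbf{f}}{\boldsymbol{\delta}_v}=\mathbf{f}(v)$. The paper dresses this in RKHS language (evaluation functional, reproducing kernel), but the underlying computation is line-for-line the same as yours.
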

\begin{proof}

The dot product $\inner{\mathbf{f}}{\mathbf{d}_v}$ in our RHKS can be seen as the evaluation functional of $\mathbf{f}$ at the point $v$. Using the definition of reproducing kernel $K$, the evaluation functional for a signal $\mathbf{f}$ in the bandlimited space at a point $\mathbf{x}\in \mathbb{R}^n$ is (using Section 2 definition and Theorem 1 Property b from \cite{berlinet2011reproducing}) $\inner{\mathbf{f}}{\phi(\mathbf{x})}$. This definition provides us the required interpretation for $\inner{\mathbf{f}}{\mathbf{d}_v}$:
\begin{equation}
    \inner{\mathbf{f}}{\mathbf{d}_v} = \inner{\mathbf{f}}{ \phi(\boldsymbol{\delta}_v)}.
    \label{eq:fevalInterpret}
\end{equation}

An evaluation functional $\inner{\mathbf{f}}{\phi(\mathbf{x})}$ in our bandlimited space can be simplified as:
\begin{align}
    \inner{\mathbf{f}}{\phi(\mathbf{x})} = \inner{\mathbf{f}}{\mathbf{U}_{\mathcal{R}}\mathbf{U}_{\mathcal{R}}^\transp \mathbf{x}}\nonumber
    &= \inner{\mathbf{U}_{\mathcal{R}}\mathbf{U}_{\mathcal{R}}^\transp \mathbf{f}}{ \mathbf{x}}\nonumber\\
    &= \inner{\mathbf{f}}{ \mathbf{x}}\label{eq:evalDot}.
\end{align}
Thus, from \eqref{eq:fevalInterpret} and \eqref{eq:evalDot}:
\[
\inner{\mathbf{f}}{\mathbf{d}_v} = \inner{\mathbf{f}}{ \phi(\boldsymbol{\delta}_v)} = \inner{\mathbf{f}}{\boldsymbol{\delta}_v} = \mathbf{f}(v).
\]
\end{proof}
As a consequence of \eqref{eq:rkDot}, if $\mathbf{f}=\mathbf{d}_w$ we have:
\begin{equation}
    \inner{\mathbf{d}_w}{\mathbf{d}_v} = \mathbf{d}_v(w) = \mathbf{d}_w(v).
    \label{eq:sideEffect}
\end{equation}

\subsection{Approximation through distances}
\label{sec:distApprox}

We start by proposing a distance based algorithm (\AlgDC) based on the updates we derived in \eqref{eq:detObj}. While in principle  those updates are valid only when  $s=\bwidth{}$, in \AlgDC{}
we apply them even when $s>\bwidth{}$. We assume $\bwidth{}$ is known to us. We take the bandwidth support for the purpose of sampling to be $\mathcal{R}=\{1,\cdots,f\}$, which is the same as the signal reconstruction bandwidth support $\recF{}$.
To maximize the expression in \eqref{eq:projInterpret} we would like to select nodes that have:
\begin{enumerate}
    \item Large squared local squared graph coherence $\norm{\mathbf{d}_v}^2$ with respect to $\bwidth{}$ frequencies (the first term in \eqref{eq:projInterpret}, which is a property of each node and independent of $\Smiter{}$), and
    \item small squared magnitude of projection onto the subspace $\lpDspm{}$ (which does depend on $\Smiter{}$) $\norm{\mathbf{P}_{\lpDspm{}} \hat{\mathbf{d}_v}}^2$ and thus would increase \eqref{eq:projInterpret}.
\end{enumerate}

The squared local graph coherence \eqref{eq:cohDef} of a vertex varies between $0$ and $1$,  taking  the largest values at vertices that are poorly connected to the rest of the graph \cite{puy2016random}.
On the other hand, the subspace $\lpDspm{}$ is a linear combination of filtered delta signals corresponding to the vertices in $\Smiter{}$. A filtered delta signal at a vertex $v$ is expected to decay as a function of distance from $v$. Therefore, for a particular energy $\norm{\mathbf{d}_v}^2$, a vertex whose overlap is minimum with the filtered delta signals corresponding to vertices in $\Smiter{}$ will have a small $\norm{\mathbf{P}_{\lpDspm{}} \mathbf{d}_v}^2$. A filtered delta signal $\mathbf{d}_v$ for a vertex which is farther apart from the sampled vertices will have lesser overlap with the filtered delta signals corresponding to the sampled vertices, which also span the space $\lpDspm{}$. Therefore for a vertex $v \in \SmCiter{}$ whose ``distance'' to the vertices $\Smiter{}$ is large, the corresponding filtered delta signal $\mathbf{d}_v$ will have a small projection on the space $\lpDspm{}$.

\begin{algorithm}[t]
\caption{Distance-coherence (\AlgDC)}
\begin{algorithmic}[1]
\Function{DC}{$\mathbf{L}, s, \bwidth{}, d, \epsilon$}
\State $\mathcal{S}\gets \emptyset$, \State $\Delta\gets 0.9$
\State $\mathcal{R}\gets \{1,\cdots,\bwidth{}\}$
\State $\left[\norm{\mathbf{d}_1}^2, \cdots, \norm{\mathbf{d}_n}^2\right], \lambda_f, \text{coeffs} \gets$ \Call{Compute coherence}{$\mathbf{L}, n, f, \epsilon$}
\While{$\abs{\mathcal{S}}<s$}
\State $\mathcal{V}_d(\mathcal{S}) \gets \{v \in \mathcal{S}^c| d(\mathcal{S},v) > \Delta.\max_{u\in {\mathcal{V}}} d(\mathcal{S},u)\}$
\State $v* \gets \argmaxA_{v\in \mathcal{V}_d(\mathcal{S})} \norm{\mathbf{d}_v}^2$
\State $\mathcal{S}\gets \mathcal{S}\cup v^*$
\EndWhile
\State \textbf{return} $\mathcal{S}$
\EndFunction
\end{algorithmic}
\label{alg:cohDist}
\end{algorithm}

Our proposed algorithm (see Algorithm~\ref{alg:cohDist}) consists of two stages; it first identifies vertices that are at a sufficiently large distance from already chosen vertices in $\Smiter{}$. This helps in reducing the set size, by including only those $v \in \mathcal{V}_d(\Smiter{})$ that are expected to have a small  $\norm{\mathbf{P}_{\lpDspm{}} \mathbf{d}_v}^2$. From among those selected vertices it chooses the one with the largest value of $\norm{\mathbf{d}_v}^2$.

The nodes with sufficient large distance from $S$ are defined as follows
\[
\mathcal{V}_d(\Smiter{}) = \{v \in \SmCiter{}| d(\Smiter{},v) > \Delta.\max_{u\in\mathcal{V}} d(\Smiter{},u)\},
\]
where $\Delta \in [0\; 1]$, $d(\Smiter{},v)= \min_{u \in \Smiter{}} d(u,v)$ and $d$ is the geodesic distance on the graph. The distance between two adjacent vertices $i,j$ is given by $d(i,j)=1/w(i,j)$.

The parameter $\Delta$ is used to control how many nodes can be  included in $\mathcal{V}_d(\Smiter{})$. With a small $\Delta$, more nodes will be considered at the cost of increased computations; while with a large $\Delta$, lesser nodes will be considered with the benefit of reduced computations. For small $\Delta$, the \AlgDC{} algorithm becomes similar to \AlgRand{}, except the vertices are picked in the order of their squared coherence, rather than randomly with probability proportional to their squared coherence as in \cite{puy2016random}.

The \AlgDC{} (Algorithm \ref{alg:cohDist}) provides a proof-of-concept of the volume maximization interpretation using coherences and distances for sampling. However, it involves obtaining geodesic distances on the graph, which is a computationally expensive task. Eliminating this bottleneck is possible by employing simpler distances such as hop distance, or doing away with distances altogether. We leave the first approach open for future work, and work on the second approach here as Algorithm \ref{alg:avm} (\AlgAVM{}).

\subsection{Approximate volume maximization (\AlgAVM) through inner products }
\label{sec:innProd}

In this section, we use a more efficient technique based on filtering, instead of computing the distance between nodes as in \AlgDC{},
where we assumed that the bandwidth of the signal for sampling was the same as the reconstruction bandwidth $f$. In practice, we do not know the signal bandwidth and thus also do not know the reconstruction bandwidth. To remedy this, in \AlgAVM{} we use the number of samples, $s$, as a proxy for the bandwidth of the signal. As a result, the bandwidth support used for sampling is $\mathcal{R}=\{1,\cdots,s\}$. We explained the reason behind this decoupling of the sampling and the reconstruction bandwidth in Section \ref{sec:probForm} through equations \eqref{eq:mainObj} and \eqref{eq:FreplaceR}.

\AlgAVM{} has the following advantages:
\begin{itemize}
    \item We can use the optimization framework we defined in Section \ref{sec:propAlg}.
    \item By not assuming knowledge of the reconstruction bandwidth for sampling, \AlgAVM{} models real world sampling scenarios better.
    \item For our chosen set of samples, we do not have to limit ourselves to one reconstruction bandwidth.
\end{itemize}

\begin{algorithm}[t]
\centering
\caption{Approximate volume maximization (\AlgAVM)}
\begin{algorithmic}
\Function{\AlgAVM}{$\mathbf{L},s, d, \epsilon$}
\State $\mathcal{S}\gets \emptyset$
\State $\mathcal{R}\gets \{1,\cdots,s\}$
\State $\left[\norm{\mathbf{d}_1}^2, \cdots, \norm{\mathbf{d}_n}^2\right], \lambda_s, \text{coeffs} \gets$ \Call{Compute coherence}{$\mathbf{L}, n, s, \epsilon$}
\While{$\abs{\mathcal{S}}<s$}
\State $v* \gets \argmaxA_{v\in \mathcal{S}^c} \norm{\mathbf{d}_v}^2 - \sum_{w\in \mathcal{S}}\frac{\mathbf{d}_w^2(v)}{\norm{\mathbf{d}_w}^2}$
\State $\mathbf{d}_{v^*} \gets$ \Call{Filter}{$\mathbf{L}, \text{coeffs}, \boldsymbol{\delta}_{v^*}$}
\State $\mathcal{S}\gets \mathcal{S}\cup v^*$
\EndWhile
\State \textbf{return} $\mathcal{S}$
\EndFunction
\end{algorithmic}
\label{alg:avm}
\end{algorithm}

\AlgAVM{} successively simplifies the greedy volume maximization step \eqref{eq:objMultUpdate}  in three stages.

\begin{function*}[t]
\caption{Compute coherence \cite{puy2016random}}
\begin{algorithmic}[1]
\Function{Compute coherence}{$\mathbf{L}, n, k, \epsilon$}
\State $L \gets \mathrm{round}(10\log(n))$
\State $\left[\mathbf{r}^1, \cdots, \mathbf{r}^L\right] \gets \left[\mathcal{N}(\mathbf{0}_{n\times 1}, \mathbf{I}_{n\times n}),\cdots, \mathcal{N}(\mathbf{0}_{n\times 1}, \mathbf{I}_{n\times n})\right]$
\State $\lambda_n \gets$  \Call{Approximate Largest Eigenvalue}{$\mathbf{L}$}
\State $\underline{\lambda} \gets 0$, $\bar{\lambda} \gets \lambda_n$, $\lambda \gets \lambda_n/2$.
\State $\text{coeffs} \gets$ \Call{Polynomial Filter Coefficients}{$0, \lambda_n, \lambda, d$}
\State $\left[\mathbf{r}_{\text{filt}}^1, \cdots, \mathbf{r}_{\text{filt}}^L\right] \gets \left[ \Call{Polynomial Filter}{\mathbf{L}, \text{coeffs}, \mathbf{r}^1}, \cdots, \Call{Polynomial Filter}{\mathbf{L}, \text{coeffs}, \mathbf{r}^L}\right]$
\State $SS \gets \sum_{i=1}^n \sum_{l=1}^L \; (\mathbf{r}_{\text{filt}}^l)_i^2$
\While{$\mathrm{round} \left(SS \right) \neq k$ or $\abs{\underline{\lambda}-\bar{\lambda}}>\varepsilon \cdot \bar{\lambda}$}
\label{step:start}
	\If {$\mathrm{round} \left(SS \right) \geq k$}
		\State $\bar{\lambda} \gets \lambda$.
	\Else
		\State $\underline{\lambda} \gets \lambda$.
\EndIf
\State $\lambda \gets (\underline{\lambda}+\bar{\lambda})/2$
\State $\text{coeffs} \gets$ \Call{Polynomial Filter Coefficients}{$0, \lambda_n, \lambda, d$}
\State $\left[\mathbf{r}_{\text{filt}}^1, \cdots, \mathbf{r}_{\text{filt}}^L\right] \gets \left[ \Call{Polynomial Filter}{\mathbf{L}, \text{coeffs}, \mathbf{r}^1}, \cdots, \Call{Polynomial Filter}{\mathbf{L}, \text{coeffs}, \mathbf{r}^L}\right]$
\State $SS \gets \sum_{i=1}^n \sum_{l=1}^L \; (\mathbf{r}_{\text{filt}}^l)_i^2$
\EndWhile \label{step:end}
\State $\left[\norm{\mathbf{d}_1}^2, \cdots, \norm{\mathbf{d}_n}^2\right] \gets \left[\left(\sum_{l=1}^L \; (\mathbf{r}_{\text{filt}}^l)_1^2 \right), \cdots, \left(\sum_{l=1}^L \; (\mathbf{r}_{\text{filt}}^l)_n^2 \right)\right]/SS$
\State \textbf{return} $\left[\norm{\mathbf{d}_1}^2, \cdots, \norm{\mathbf{d}_n}^2\right], \lambda, \text{coeffs}$
\EndFunction
\end{algorithmic}
\label{alg:cumCoh}
\end{function*}
\subsubsection{Approximate squared coherence}
\label{sec:approxRandCoh}
Algorithm \ref{alg:avm} estimates the squared coherence, $\norm{\mathbf{d}_v}^2, v\in\Smiter{}$, using the method of random projections method from Section 4.1 in  \cite{puy2016random} in the same way as in Algorithm  \ref{alg:cohDist}. This approach avoids explicitly finding $\mathbf{d}_v$ to compute $\norm{\mathbf{d}_v}^2$.

For completeness, we include the approach from \cite{puy2016random} to find squared coherences as Function \ref{alg:cumCoh}. For implementations of \textsc{Approximate Largest Eigenvalue}, \textsc{Polynomial Filter Coefficients}, and \textsc{Polynomial Filter} that Function \ref{alg:cumCoh} calls, we refer the reader to GSP toolbox \cite{perraudin2016gspbox}.

\subsubsection{Approximate inner product matrix}
We know that the volume of parallelepiped formed by two fixed length vectors is maximized when the vectors are orthogonal to each other. Now, since vectors that optimize  \eqref{eq:objMultUpdate} also approximately maximize the volume, we expect them to be close to orthogonal. Thus, we approximate $\lpDmam^\transp \lpDmam{}$ by an orthogonal matrix (Appendix \ref{sec:approxGramDiag}). That is, assuming that the filtered delta signals corresponding to the previously selected vertices are approximately orthogonal
we can write:
\begin{align*}
    \lpDmam^\transp \lpDmam{} &\approx \diag\roundB{\norm{\mathbf{d}_1}^2,\cdots,\norm{\mathbf{d}_m}^2},\\
    (\lpDmam^\transp \lpDmam{})^{-1} &\approx \diag\roundB{\frac{1}{\norm{\mathbf{d}_1}^2},\cdots,\frac{1}{\norm{\mathbf{d}_m}^2}},
\end{align*}
which leads to an approximation of the determinant:
\begin{align*}
    &\det\roundB{\begin{bmatrix} \lpDmam^\transp \lpDmam{} & \lpDmam^\transp \mathbf{d}_v\\ \mathbf{d}_v^\transp \lpDmam{} & \mathbf{d}_v^\transp \mathbf{d}_v\end{bmatrix}} \\&\approx \det(\lpDmam^\transp \lpDmam{}) \det(\mathbf{d}_v^\transp \mathbf{d}_v - \mathbf{d}_v^\transp\lpHatDmam{}\lpHatDmam^\transp \mathbf{d}_v),\numberthis \label{eq:detDiag}
\end{align*}
where $\lpHatDmam{}$ is obtained from $\lpDmam{}$ by normalizing the columns: $\lpHatDmam{} = \lpDmam{}\diag\roundB{1/{\norm{\mathbf{d}_1}},\cdots,1/{\norm{\mathbf{d}_m}}}$.
The second term in \eqref{eq:detDiag} can be written as:
\begin{equation}
    \mathbf{d}_v^\transp\lpHatDmam{}\lpHatDmam^\transp \mathbf{d}_v = \frac{\inner{\mathbf{d}_v}{\mathbf{d}_1}^2}{\norm{\mathbf{d}_1}^2} +
    \cdots +
     \frac{\inner{\mathbf{d}_v}{\mathbf{d}_m}^2}{\norm{\mathbf{d}_m}^2},
     \label{eq:squaredProj}
\end{equation}
which would be the signal energy of projected signal $\mathbf{d}_v$ on to  $\spn(\mathbf{d}_1,\cdots,\mathbf{d}_m)$, if the vectors $\mathbf{d}_1,\cdots,\mathbf{d}_m$ were mutually orthogonal. Note that it is consistent with our assumption that $\lpDmam^\transp\lpDmam{}$ is diagonal which would only hold if the vectors form an orthogonal set.

\subsubsection{Computing low pass filtered delta signals}
If $\mathbf{U}$ is known, then computing the low pass filtered delta signal $\mathbf{d}_v$ is straightforward with the ideal low pass filter using \eqref{eq:lpdel}. However, since we avoid eigendecomposing the Laplacian, $\mathbf{U}$ is unknown. A polynomial approximation of the ideal low pass filter with the frequency $\lambda_s$ can be computed using Function \ref{alg:cumCoh}. Using this polynomial approximation, $\boldsymbol{\delta}_v$ is filtered to obtain $\mathbf{d}_v$.

\subsubsection{Fast inner product computations}
Maximization of \eqref{eq:detDiag} requires evaluating the inner products $\inner{\mathbf{d}_v}{\mathbf{d}_i}$ in \eqref{eq:squaredProj} for all $i\in \Smiter{}$ and all vertices $v$ outside $\Smiter{}$. Suppose we knew $\mathbf{d}_i$ for sampled vertices, $i\in \Smiter{}$, and the inner products from the past iteration. The current $(m+1)^\text{th}$ iteration would still need to compute $n-m$ new inner products.

To avoid this computation we use the inner product property of  \eqref{eq:sideEffect}, which allows us to simplify  \eqref{eq:squaredProj} as follows:
\begin{equation*}
    \mathbf{d}_v^\transp\lpHatDmam{}\lpHatDmam^\transp \mathbf{d}_v = \frac{\mathbf{d}_1^2(v)}{\norm{\mathbf{d}_1}^2} +
    \cdots +
     \frac{\mathbf{d}_m^2(v)}{\norm{\mathbf{d}_m}^2}.
\end{equation*}
Thus, there is no need to compute $n-m$ new inner products, while we also avoid computing $\mathbf{d}_v, v \in \SmCiter{}$.
With this, our greedy optimization step becomes:
\begin{equation*}
    v* \gets \argmaxA_{v\in \SmCiter{}} \norm{\mathbf{d}_v}^2 - \sum_{w\in \Smiter{}}\frac{\mathbf{d}_w^2(v)}{\norm{\mathbf{d}_w}^2}.
\end{equation*}

In summary, by virtue of these series of approximations, we do not need to compute distances and no longer rely on the choice of a parameter $\Delta$, as in Algorithm \ref{alg:cohDist}. Algorithm \ref{alg:avm} only requires the following inputs:
\begin{enumerate}
    \item The number of samples requested $s$,
    \item The constant $c$ specifying $cs\log s$ random projections,
    \item The scalar $\epsilon$ specifying the convergence criteria for random projection iterations while computing squared coherences.
\end{enumerate}
The last two inputs are specifically needed by  Algorithm 1 in \cite{puy2016random}, which we use in  Step 1 (Section \ref{sec:approxRandCoh}) to compute squared coherences.

While the inner product property is defined based on the assumption that we use an ``ideal'' low pass filter for reconstruction, it can also be used to maximize the volume formed by the samples of more generic kernels --- see  Appendix \ref{app:genKern}. The approximations that we proposed in this section towards designing \AlgAVM{} can be justified if they lead to a scalable and fast algorithm. In what follows we study the computational complexity of \AlgAVM{} to assess its scalability.

\subsection{Computational complexity of \AlgAVM{}}
\label{sec:theoryComplexity}
The computational complexity of \AlgAVM{} depends on the number of vertices and edges in the graph --- $\abs{\mathcal{V}}$ and $\abs{\mathcal{E}}$, the degree of the polynomial $d$, the number of samples $s$, and the number of iterations $T_1$ to converge to the right $\lambda_s$ for computing squared coherences. In practice, we observe that a finite number of iterations $T_1$ are required to converge.

\AlgAVM{} starts by computing squared coherences, with complexity  $O(\abs{\mathcal{E}}dT_1\log\abs{\mathcal{V}})$. Finding and normalizing filtered signal requires $O(d(\abs{\mathcal{E}}+\abs{\mathcal{V}}))$ computations. Subtraction and finding the maximum requires $O(\abs{\mathcal{V}})$ computations. We repeat this $s$ times which results in  $O(s\abs{\mathcal{V}}+s(\abs{\mathcal{E}}+\abs{\mathcal{V}})d)$ computations in Stage 2 of \AlgAVM{}. This leads to Algorithm \ref{alg:avm} having a computational complexity of $O((\abs{\mathcal{E}}+\abs{\mathcal{V}})dT_1\log\abs{\mathcal{V}}+s(\abs{\mathcal{E}}+\abs{\mathcal{V}})d)$. For a connected graph we know that $\abs{\mathcal{E}}\geq \abs{\mathcal{V}} - 1$, so then the complexity is simply $O(\abs{\mathcal{E}}dT_1\log\abs{\mathcal{V}}+s\abs{\mathcal{E}}d)$.

\subsubsection{Dependence on coherence estimation accuracy}
Stage 1 is the bottleneck in the \AlgAVM{} algorithm, because it involves $T_1$ iterations to find the squared coherences, with computations in each iteration scaling as $\abs{\mathcal{E}}\log\abs{\mathcal{V}}$, where both the factors $\abs{\mathcal{E}}$ and $\abs{\mathcal{V}}$ scale with the graph size. A limitation in the number of computations we can do at this stage may cap the graph sizes we can consider.  In this situation, we note that Stage 1 (computing squared coherences and $\lambda_s$)  is an approximation, and we could select an alternative approximation requiring fewer computations instead.

\subsubsection{Dependence on the number of samples}
\label{sec:complexNSamp}
The complexity of sampling algorithms naturally depends on the number of samples requested at the input, and it is reasonable to assume that an ideal sampling algorithm cannot grow sublinearly in complexity with respect to an increase in the number of samples. This is because simply adding one sample requires $O(1)$ computations. While a sampling algorithm's complexity may grow superlinearly with the number of samples requested --- see Table III in \cite{anis2015efficient}, algorithms we compare in Section \ref{sec:avm_speed} grow linearly with respect to the requested number of samples. Additionally, \AlgAVM{}'s complexity also scales linearly with respect to an increase in the number of samples as the complexity factor $O(s\abs{\mathcal{E}}d)$ suggests.

\subsubsection{Log-linear dependence on graph size}
The computational complexity of $O(\abs{\mathcal{E}}dT_1\log\abs{\mathcal{V}}+s\abs{\mathcal{E}}d)$ suggests that \AlgAVM{} has a log-linear dependence on the graph size, specifically with a linear dependence on the number of edges and log dependence on the number of vertices. This can further be seen as complexity with log-linear dependence on the number of edges as $O(\abs{\mathcal{E}}dT_1\log\abs{\mathcal{E}}+s\abs{\mathcal{E}}d)$, but $O(\abs{\mathcal{E}}dT_1\log\abs{\mathcal{V}}+s\abs{\mathcal{E}}d)$ is a more accurate estimate.

So far, approximations to the volume maximization objective \eqref{eq:projInterpret} were useful to develop \AlgDC{} and \AlgAVM{}\footnote{Code: \url{https://github.com/STAC-USC/Graph-signal-sampling-AVM}} algorithms. In the following sections, we will show how other eigendecomposition-free algorithms can also be interpreted as approximations
to the greedy volume maximization objective.

\section{Volume maximization related work}
\label{sec:cutoff}

We next study how existing graph signal sampling methods are related to volume maximization. We start by focusing on the  \AlgSP{} algorithm from \cite{anis2015efficient} and show how can it be seen as a volume maximization method. This idea is developed in Section \ref{sec:specDesc} and Section \ref{sec:volParal}. Section \ref{sec:tradeoffInn} then considers other eigendecomposition-free methods and draw parallels with our volume maximization approach.

\subsection{\AlgSP{} algorithm as Gaussian elimination}
\label{sec:specDesc}

The \AlgSP{} algorithm is based on the following theorem.
\begin{theorem}
\cite{anis2015efficient} Let $\mathbf{L}$ be the combinatorial Laplacian of an undirected graph. For a set $\Smiter{}$ of size $m$, let $\mathbf{U}_{\Smiter{},1:m}$ be full rank. Let $\boldsymbol{\psi}^*_k$ be zero over $\Smiter{}$ and a minimizing signal in the Rayleigh quotient of $L^k$ for a positive integer $k$.
\begin{equation}
    \boldsymbol{\psi}^*_k = \argminA_{\boldsymbol{\psi},\boldsymbol{\psi}(\Smiter{})=\mathbf{0}} \frac{\boldsymbol{\psi}^\transp \mathbf{L}^k \boldsymbol{\psi}}{\boldsymbol{\psi}^\transp \boldsymbol{\psi}}.
    \label{eq:rayOpt}
\end{equation}
Let the signal $\boldsymbol{\psi}^*$ be a linear combination of first $m+1$ eigenvectors such that $\boldsymbol{\psi}^*(\Smiter{})=\mathbf{0}$.  Now if there is a gap between the singular values $\sigma_{m+2}>\sigma_{m+1}$, then  $\norm{\boldsymbol{\psi}^*_k- \boldsymbol{\psi}^*}_2\rightarrow 0$ as  $k\rightarrow \infty$.
\label{thm:eigConverge}
\end{theorem}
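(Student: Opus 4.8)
The plan is a Rayleigh-quotient concentration argument (a ``power-method'' flavour applied to a constrained min). Normalize every competing signal to unit $\ell_2$ norm and expand it in the Laplacian eigenbasis: for $\boldsymbol{\psi}=\sum_i c_i\mathbf{u}_i$ one has $\boldsymbol{\psi}^\transp\mathbf{L}^k\boldsymbol{\psi}/\boldsymbol{\psi}^\transp\boldsymbol{\psi}=\sum_i\lambda_i^k c_i^2$, a convex combination of the $\lambda_i^k$, so minimizing \eqref{eq:rayOpt} forces the energy of $\boldsymbol{\psi}^*_k$ onto the lowest-frequency eigenvectors as $k\to\infty$. Here $\sigma_i$ is read as $\lambda_i$, the $i$-th smallest Laplacian eigenvalue, so the gap hypothesis is $\lambda_{m+1}<\lambda_{m+2}$.

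\emph{Step 1 (the limit object $\boldsymbol{\psi}^*$).} Since $\mathbf{U}_{\Smiter{},1:m}$ has rank $m$, the $m\times(m+1)$ matrix $\mathbf{U}_{\Smiter{},1:m+1}$ has a one-dimensional null space, spanned by some $\mathbf{a}\neq\mathbf{0}$; put $\boldsymbol{\psi}^*=\mathbf{U}_{1:m+1}\mathbf{a}$ and normalize $\norm{\boldsymbol{\psi}^*}=1$. This is the (essentially unique) nonzero signal bandlimited to $\{1,\cdots,m+1\}$ vanishing on $\Smiter{}$, i.e. the $\boldsymbol{\psi}^*$ of the statement. Its Rayleigh quotient in \eqref{eq:rayOpt} is at most $\lambda_{m+1}^k$ and it is feasible, so optimality of $\boldsymbol{\psi}^*_k=\sum_i c_i^{(k)}\mathbf{u}_i$ gives $\sum_i\lambda_i^k(c_i^{(k)})^2\le\lambda_{m+1}^k$.

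\emph{Step 2 (gap $\Rightarrow$ spectral concentration).} From $\sum_{i\ge m+2}\lambda_i^k(c_i^{(k)})^2\ge\lambda_{m+2}^k\sum_{i\ge m+2}(c_i^{(k)})^2$ and Step 1, $\sum_{i\ge m+2}(c_i^{(k)})^2\le(\lambda_{m+1}/\lambda_{m+2})^k\to0$ by the gap (the degenerate case $\lambda_{m+1}=0$ is even simpler: then $\boldsymbol{\psi}^*_k\in\ker\mathbf{L}\subseteq\spn(\mathbf{u}_1,\cdots,\mathbf{u}_{m+1})$ exactly). Writing $\boldsymbol{\psi}^*_k=\mathbf{p}_k+\mathbf{q}_k$ with $\mathbf{p}_k$ the orthogonal projection onto $W:=\spn(\mathbf{u}_1,\cdots,\mathbf{u}_{m+1})$, this yields $\norm{\mathbf{q}_k}\to0$, $\norm{\mathbf{p}_k}\to1$, and $\norm{\mathbf{p}_k(\Smiter{})}=\norm{\mathbf{q}_k(\Smiter{})}\le\norm{\mathbf{q}_k}\to0$.

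\emph{Step 3 (finite-dimensional rigidity — the main obstacle).} It remains to upgrade ``$\mathbf{p}_k$ is nearly in $W$ and nearly vanishes on $\Smiter{}$'' to $\mathbf{p}_k\to\pm\boldsymbol{\psi}^*$, and this is exactly where the full-rank hypothesis is indispensable. The linear map $T:W\to\mathbb{R}^m$, $T\boldsymbol{\psi}=\boldsymbol{\psi}(\Smiter{})$, has kernel precisely $\spn(\boldsymbol{\psi}^*)$ (same rank count as in Step 1), so its restriction to the finite-dimensional space $W\cap\spn(\boldsymbol{\psi}^*)^\perp$ is injective, hence bounded below by some $\mu>0$. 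Decomposing $\mathbf{p}_k=\alpha_k\boldsymbol{\psi}^*+\mathbf{r}_k$ with $\mathbf{r}_k\perp\boldsymbol{\psi}^*$ gives $\mu\norm{\mathbf{r}_k}\le\norm{T\mathbf{r}_k}=\norm{T\mathbf{p}_k}\to0$, so $\norm{\mathbf{r}_k}\to0$ and $\alpha_k^2=\norm{\mathbf{p}_k}^2-\norm{\mathbf{r}_k}^2\to1$; therefore $\boldsymbol{\psi}^*_k=\alpha_k\boldsymbol{\psi}^*+\mathbf{r}_k+\mathbf{q}_k\to\pm\boldsymbol{\psi}^*$. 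The sign ambiguity is intrinsic to Rayleigh-quotient minimizers and should be acknowledged in the statement; likewise, if the smallest eigenvalue of $(\mathbf{L}^k)_{\SmCiter{}\SmCiter{}}$ happens not to be simple for some $k$, the claim is read as ``every sequence of minimizers converges to $\pm\boldsymbol{\psi}^*$,'' which the same argument delivers after a routine compactness/subsequence step. Apart from this bookkeeping and the edge cases, everything else is mechanical.
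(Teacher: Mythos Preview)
Your proof is correct and follows a genuinely different route from the paper's. The paper reduces the problem to a $2\times 2$ eigenvalue computation: it writes every feasible unit vector as $\alpha_1\boldsymbol{\psi}^*+\alpha_2\boldsymbol{\psi}^\perp$ with $\boldsymbol{\psi}^\perp$ a unit vector in $\{\boldsymbol{\phi}:\boldsymbol{\phi}(\Smiter{})=\mathbf{0}\}\cap\spn(\boldsymbol{\psi}^*)^\perp$, first proves a lemma that any such $\boldsymbol{\psi}^\perp$ has out-of-band energy at least some $c_0>0$ (a compactness argument on the unit sphere), then writes the Rayleigh quotient as $\boldsymbol{\alpha}^\transp\bigl[\begin{smallmatrix}a&b\\b&d\end{smallmatrix}\bigr]\boldsymbol{\alpha}$, solves the KKT conditions explicitly, and shows $|\alpha_2|<\epsilon$ once $k$ exceeds an explicit threshold depending on $c_0$ and the ratio $\sigma_{m+2}/\sigma_{m+1}$.

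You instead go straight to the spectral decomposition: bound the optimal value by $\lambda_{m+1}^k$ using $\boldsymbol{\psi}^*$ as a test vector, read off $\norm{\mathbf{q}_k}^2\le(\lambda_{m+1}/\lambda_{m+2})^k$, and then handle the remaining rigidity step with a clean injectivity/bounded-below argument on the finite-dimensional map $T:W\cap\spn(\boldsymbol{\psi}^*)^\perp\to\mathbb{R}^m$. Your constant $\mu$ plays exactly the role of the paper's $c_0$, and both encode the full-rank hypothesis. Your argument is shorter and more conceptual; the paper's buys an explicit formula for $k_0$ in terms of $m$, $\epsilon$, $c_0$ and the gap, though your bounds could be made equally explicit with one more line. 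Both correctly flag the sign ambiguity.
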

\begin{proof}
\cite{anis2015efficient}, for $l_2$ convergence see Appendix \ref{app:conv}.
\end{proof}
Following \eqref{eq:rayOpt}, the next step in the \AlgSP{} algorithm that leads to sampling a new vertex is
\begin{equation*}
    v^* = \argmaxA_{v\in \Smiter{}^c} \abs{\boldsymbol{\psi}^*_k},
\end{equation*}
where $\boldsymbol{\psi}^*_k$ is from \eqref{eq:rayOpt}.
Consider first the ideal \AlgSP{} algorithm, where $k\rightarrow\infty$ and the solution tends to the ideal bandlimited solution.
In the ideal case, given a full rank $\mathbf{U}_{\Smiter{},1:m}$, from  Theorem \ref{thm:eigConverge} we can always get another vertex $v$ such that $\mathbf{U}_{\Smiter{}\cup v,1:m+1}$ is also full rank. Thus, at all iterations any submatrix of  $\mathbf{U}_{\Smiter{},1:m}$ will have full rank.

When $k\rightarrow\infty$ and $\Smiter{}$ vertices are selected, $\boldsymbol{\psi}^*$ is given by the $m+1^\text{th}$ column of $\mathbf{U'}$, where $\mathbf{U}'$ is obtained by applying Gaussian elimination to the columns of $\mathbf{U}$ such that the $m+1^\text{th}$ column has zeros at indexes given by $\Smiter{}$ \cite{anis2015efficient}. $\mathbf{U}'$ can be written as:
\begin{align}
    &\mathbf{U}' = \nonumber \\ &\begin{bmatrix}
    \boldsymbol{u}_1(1) & & \multicolumn{2}{c}{\multirow{2}{*}{\Large{$\mathbf{0}$}}} & & & \\
    & \boldsymbol{u}^{'}_2(2) & \multicolumn{2}{c}{} & \vrule & & \vrule\\
    & & \ddots & & \mathbf{u}_{m+2} & \cdots &  \mathbf{u}_n\\
    & & & \boldsymbol{u}^{'}_{m+1}(m+1) &  &  &  \\
    \multicolumn{2}{c}{\multirow{2}{*}{\Large{$\bigstar$}}} & & \vdots & \vrule & & \vrule\\
    \multicolumn{2}{c}{} & & \boldsymbol{u}^{'}_{m+1}(n) &  & &
    \end{bmatrix},
    \label{eq:elimMat}
\end{align}
where $\star$ denotes arbitrary entries and $\mathbf{0}$ denotes zero entries in the corresponding matrix regions. Because we have non-zero pivots, $\boldsymbol{u}_1(1)$ to $\boldsymbol{u}^{'}_{m+1}(m+1)$, $\mathbf{U}_{\Smiter{},1:m}$ is full rank. The columns in $\mathbf{U}$ from $m+2$ to $n$ remain intact.

The next section explains how an iteration in the ideal \AlgSP{} algorithm can be seen as a volume maximization step.

\begin{figure*}[t]
\centering
\tdplotsetmaincoords{70}{30}
\subfloat[Orthogonality of $\mathbf{h}$ with $\lpTildeDspm{}$.]{
  \begin{tikzpicture}[tdplot_main_coords, 2dspace/.style={fill=black!50!white, opacity=0.3},scale=0.48]
        \tikzstyle{every node}=[font=\small]
\draw[thick,-{Stealth[length=2mm,width=2mm]}] (0,0,0) -- (5,5,0) node[anchor=north west]{$\mathbf{d}_1$};
\draw[thick,-{Stealth[length=2mm,width=2mm]}] (0,0,0) -- (3.53,6.12,0) node[anchor=south west]{$\mathbf{d}_2$};
\draw[thick,-{Stealth[length=2mm,width=2mm]}] (0,0,0) -- (-5,5,0) node[anchor=south]{$\mathbf{d}_m$};
\draw[thick,-{Stealth[length=2mm,width=2mm]}] (0,0,0) -- (0,0,5) node[anchor=south]{$\mathbf{h}$};

\filldraw[2dspace] (7,7,0) -- (-7,7,0) -- (-7,-1,0) node[below right, text=black,opacity=1]{$\lpTildeDspm{}$}-- (7,-1,0) -- (7,7,0) ;
\end{tikzpicture}
\label{fig:hOrtho}
}
\subfloat[Components of $\mathbf{d}_v$.]{
  \begin{tikzpicture}[tdplot_main_coords, 2dspace/.style={fill=black!50!white, opacity=0.3},scale=0.48]
      \tikzstyle{every node}=[font=\small]
      \draw[thick,-{Stealth[length=2mm,width=2mm]}] (0,0,0) -- (1.8,5.2,4.8) node[anchor=south ]{$\mathbf{d}_v$};
      \draw[thick,-{Stealth[length=2mm,width=2mm]}] (0,0,0) -- (1.8,5.2,0) node[below,yshift=-0.5cm]{$\mathbf{P}_{\lpTildeDspm{}}(\mathbf{d}_v)$};

\filldraw[2dspace] (7,7,0) -- (-7,7,0) -- (-7,-1,0)node[below right,text =black,opacity=1]{$\lpTildeDspm{}$} -- (7,-1,0) -- (7,7,0);
\draw[thick,dashed] (1.8,5.2,4.8)  -- (1.8,5.2,0) node[above right, yshift=0.5cm]{$\abs{\inner{\mathbf{h}}{\mathbf{d}_v}}$};

\fill[black] (1.8,5.2,0) circle (2pt);
\end{tikzpicture}
\label{fig:dvProj}
}
\caption{Geometry of \AlgSP{}.}
\end{figure*}

\subsection{\AlgSP{} algorithm as volume maximization}
\label{sec:volParal}
Consider a single stage in the \AlgSP{} algorithm, where  the current sampling set is $\Smiter{}$,  $\abs{\Smiter{}}=m$ is the number of sampled vertices, and the bandwidth support is $\bwfm{} = \{1,\cdots,m+1\}$. At this stage, choosing a vertex $v$ that maximizes $\det(\mathbf{U}_{\Smiter{}\cup v,\bwfm{}}\mathbf{U}_{\Smiter{}\cup v,\bwfm{}}^\transp)$ is equivalent to choosing a vertex that maximizes $\abs{\det(\mathbf{U}_{\Smiter{}\cup v, \bwfm{}})}$. We briefly state our results in terms of $\abs{\det(\mathbf{U}_{\Smiter{}\cup v, \bwfm{}})}$ as this gives us the additional advantage of making the connection with the Gaussian elimination concept we discussed in Section \ref{sec:specDesc}. Now, focusing on the selection of the $(m+1)^\text{th}$ sample, we can state the following result.

\begin{prop}
The sample $v^*$ selected in the $(m+1)^\text{th}$ iteration of the ideal \AlgSP{} algorithm is the vertex $v$ from $\SmCiter{}$ that maximizes $\abs{\det(\mathbf{U}_{\Smiter{}\cup v, \bwfm{}})}$. \label{prop:detPivot}
\end{prop}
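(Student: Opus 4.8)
The plan is to use the Gaussian-elimination description of the ideal \AlgSP{} iterate from Section \ref{sec:specDesc} to rewrite $\abs{\det(\mathbf{U}_{\Smiter{}\cup v,\bwfm{}})}$ as a $v$-independent factor times $\abs{\boldsymbol{\psi}^*(v)}$. Recall that, in the ideal regime $k\to\infty$, the signal $\boldsymbol{\psi}^*$ is the $(m+1)^\text{th}$ column of the matrix $\mathbf{U}'$ of \eqref{eq:elimMat}: it is obtained from the eigenvector $\mathbf{u}_{m+1}$ by adding a linear combination of $\mathbf{u}_1,\dots,\mathbf{u}_m$ chosen so that $\boldsymbol{\psi}^*(\Smiter{}) = \mathbf{0}$ (this is possible exactly because $\mathbf{U}_{\Smiter{},1:m}$ is full rank). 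The \AlgSP{} rule in this iteration selects $v^* = \argmaxA_{v\in\SmCiter{}}\abs{\boldsymbol{\psi}^*(v)}$, so it suffices to show that the maximizer of $\abs{\det(\mathbf{U}_{\Smiter{}\cup v,\bwfm{}})}$ over $v\in\SmCiter{}$ coincides with it.

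First I would observe that, within the columns $\bwfm{}=\{1,\dots,m+1\}$, replacing the last column $\mathbf{u}_{m+1}$ by $\boldsymbol{\psi}^*$ is an elementary column operation (adding multiples of columns $1,\dots,m$ to column $m+1$), which does not change the determinant of any $(m+1)\times(m+1)$ submatrix. Hence, for every $v\in\SmCiter{}$,
\begin{equation*}
\det(\mathbf{U}_{\Smiter{}\cup v,\bwfm{}}) = \det\roundB{\begin{bmatrix} \mathbf{U}_{\Smiter{},1:m} & \boldsymbol{\psi}^*(\Smiter{}) \\ \mathbf{U}_{v,1:m} & \boldsymbol{\psi}^*(v)\end{bmatrix}}.
\end{equation*}
Since $\boldsymbol{\psi}^*(\Smiter{}) = \mathbf{0}$, the matrix on the right is block lower triangular, so its determinant equals $\det(\mathbf{U}_{\Smiter{},1:m})\,\boldsymbol{\psi}^*(v)$, and therefore
\begin{equation*}
\abs{\det(\mathbf{U}_{\Smiter{}\cup v,\bwfm{}})} = \abs{\det(\mathbf{U}_{\Smiter{},1:m})}\,\abs{\boldsymbol{\psi}^*(v)}.
\end{equation*}
Equivalently, this can be read straight off \eqref{eq:elimMat}: the rows $\Smiter{}\cup v$ and columns $\bwfm{}$ of $\mathbf{U}'$ form a lower triangular matrix whose last diagonal entry is $\boldsymbol{\psi}^*(v)$ and whose other diagonal entries are the nonzero, $v$-independent pivots of $\mathbf{U}'$.

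Finally, the prefactor $\abs{\det(\mathbf{U}_{\Smiter{},1:m})}$ depends only on $\Smiter{}$ and is nonzero by the full-rank hypothesis, so dividing it out gives $\argmaxA_{v\in\SmCiter{}}\abs{\det(\mathbf{U}_{\Smiter{}\cup v,\bwfm{}})} = \argmaxA_{v\in\SmCiter{}}\abs{\boldsymbol{\psi}^*(v)} = v^*$, which is the assertion. I expect the only delicate points to be the bookkeeping ones: verifying that the combination defining $\boldsymbol{\psi}^*$ involves only columns in $\bwfm{}$ (so that restricting to those columns really does keep the submatrix determinants invariant) and that the block-triangular reduction is valid uniformly in the candidate row $v$; the underlying linear algebra is otherwise routine.
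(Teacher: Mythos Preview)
Your proof is correct and follows essentially the same route as the paper: both use that Gaussian elimination (elementary column operations) preserves the determinants of the $(m+1)\times(m+1)$ submatrices, and then read off the factorization from the resulting (block) lower-triangular structure so that maximizing $\abs{\det(\mathbf{U}_{\Smiter{}\cup v,\bwfm{}})}$ reduces to maximizing $\abs{\boldsymbol{\psi}^*(v)}$. Your version is slightly more explicit about why only one column operation is needed and why the prefactor $\abs{\det(\mathbf{U}_{\Smiter{},1:m})}$ is nonzero and $v$-independent, and your ``equivalently'' remark recovers the paper's full-triangular reading from \eqref{eq:elimMat} exactly.
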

\begin{proof}
The ideal \AlgSP{} algorithm selects the vertex corresponding to the maximum value in $\abs{\boldsymbol{u}^{'}_{m+1}(m+1)}$, $\cdots$, $\abs{\boldsymbol{u}^{'}_{m+1}(n)}$. Since $\Smiter{}$ is given and $\mathbf{U}'_{\Smiter{}\cup v, \bwfm{}}$ is a diagonal matrix, this also corresponds to selection of $v$ such that magnitude value of the  $\det(\mathbf{U}_{\Smiter{}\cup v, \bwfm{}}')$ is the maximum among all possible $v$ selections.

But because $\mathbf{U}'_{\Smiter{}\cup v, \bwfm{}}$ is obtained from $\mathbf{U}$ by doing Gaussian elimination, the two determinants are equal, i.e.,   $\abs{\det(\mathbf{U}_{\Smiter{}\cup v, \bwfm{}})} = \abs{\det(\mathbf{U}_{\Smiter{}\cup v, \bwfm{}}')}$, and since the current $(m+1)^\text{th}$ iteration chooses the maximum absolute value pivot, given $\Smiter{}$ that sample maximizes $\abs{\det(\mathbf{U}_{\Smiter{}\cup v, \bwfm{}})}$.
\end{proof}

We now show that the vertex $v^*$ is selected in the $(m+1)^\text{th}$ iteration according to the following rule:
\begin{equation*}
    v^* = \argmaxA_{v\in \SmCiter{}} \text{dist}(\mathbf{d}_v, \spn(\mathbf{d}_1,\cdots,\mathbf{d}_m)),
\end{equation*}
where $\text{dist}(\cdot, \cdot)$ is the distance between a  vector and its orthogonal projection onto a vector subspace. Thus, this optimization is equivalent to selecting a vertex $v$ that maximizes the volume of the contained parallelepiped, $\text{Vol}(\mathbf{d}_1,\cdots,\mathbf{d}_m, \mathbf{d}_v)$.

Let $\mathbf{h}$ be a unit vector along the direction of  $(m+1)^\text{th}$ column of $\mathbf{U}'$ in  \eqref{eq:elimMat}. We are interested in finding the vertex $v$ that maximizes $\abs{\mathbf{h}(v)}$.

\begin{prop}
The signal value $\mathbf{h}(v)$ is the length of projection of $\mathbf{d}_v$ on $\mathbf{h}$.
\label{prop:indexProj}
\end{prop}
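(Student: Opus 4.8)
The plan is to unwind the definition of $\mathbf{h}$ and then apply the reproducing-kernel inner product identity from Corollary \ref{prop:rkhsDot}. First I would recall that $\mathbf{h}$ is, by construction, a unit vector along the $(m+1)^\text{th}$ column $\boldsymbol{u}'_{m+1}$ of the Gaussian-eliminated matrix $\mathbf{U}'$ in \eqref{eq:elimMat}. The key structural fact is that this column lies in $\spn(\mathbf{U}_{\bwfm{}}) = \spn(\mathbf{u}_1,\dots,\mathbf{u}_{m+1})$ — it is a linear combination of the first $m+1$ eigenvectors — and hence $\mathbf{h}$ is itself a bandlimited signal with respect to the support $\bwfm{}=\{1,\dots,m+1\}$, i.e. $\mathbf{h}\in\spn(\mathbf{U}_{\bwfm{}})$. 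Moreover, the Gaussian elimination forces $\mathbf{h}(u)=0$ for all $u\in\Smiter{}$, so $\mathbf{h}$ is orthogonal to $\spn(\mathbf{d}_1,\dots,\mathbf{d}_m)=\lpDspm{}$: indeed, by Corollary \ref{prop:rkhsDot} applied with the bandlimited signal $\mathbf{h}$, we have $\inner{\mathbf{h}}{\mathbf{d}_u}=\mathbf{h}(u)=0$ for each $u\in\Smiter{}$ (this is the content of Figure \ref{fig:hOrtho}).

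Next, since $\mathbf{h}$ is bandlimited with respect to $\bwfm{}$, Corollary \ref{prop:rkhsDot} gives directly
\begin{equation*}
    \inner{\mathbf{h}}{\mathbf{d}_v} = \mathbf{h}(v)
\end{equation*}
for every vertex $v$, where here $\mathbf{d}_v = \mathbf{U}_{\bwfm{}}\mathbf{U}_{\bwfm{}}^\transp\boldsymbol{\delta}_v$ is the filtered delta with respect to the current sampling bandwidth support $\bwfm{}$. Because $\mathbf{h}$ is a \emph{unit} vector, the quantity $\inner{\mathbf{h}}{\mathbf{d}_v}$ is exactly the signed length of the orthogonal projection of $\mathbf{d}_v$ onto the one-dimensional subspace spanned by $\mathbf{h}$; taking absolute values, $\abs{\mathbf{h}(v)} = \abs{\inner{\mathbf{h}}{\mathbf{d}_v}}$ is the length of that projection, which is what the proposition asserts (and what the dashed segment in Figure \ref{fig:dvProj} depicts).

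The only real subtlety — and the step I would be most careful about — is justifying that $\mathbf{h}\in\spn(\mathbf{U}_{\bwfm{}})$, so that Corollary \ref{prop:rkhsDot} is applicable with the RKHS built on $\mathcal{R}=\bwfm{}$. This follows from the description of $\mathbf{U}'$ in \eqref{eq:elimMat}: the elimination only combines columns among the first $m+1$ columns of $\mathbf{U}$ (the columns $\mathbf{u}_{m+2},\dots,\mathbf{u}_n$ are left intact and are not mixed into column $m+1$), so $\boldsymbol{u}'_{m+1}$, and hence $\mathbf{h}$, stays in the span of $\mathbf{u}_1,\dots,\mathbf{u}_{m+1}$. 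One should also note that the relevant filtered deltas $\mathbf{d}_v$ here are those for the running support $\bwfm{}$ rather than the fixed target support, but this is precisely the convention under which the $\AlgSP{}$ iteration is being analyzed, so no inconsistency arises. With these observations in place the proposition is immediate; the combinatorial/geometric content is entirely carried by \eqref{eq:elimMat} and Corollary \ref{prop:rkhsDot}.
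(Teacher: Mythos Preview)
Your argument is correct and follows exactly the paper's approach: observe that $\mathbf{h}\in\spn(\mathbf{U}_{\bwfm{}})$, apply Corollary~\ref{prop:rkhsDot} to get $\mathbf{h}(v)=\inner{\mathbf{h}}{\mathbf{d}_v}$, and use that $\mathbf{h}$ is a unit vector. The discussion of orthogonality to $\lpDspm{}$ is superfluous here---that is the content of the next proposition (Proposition~\ref{prop:orthComp}), not this one---but including it does no harm.
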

\begin{proof}
The signal $\mathbf{h}$ belongs to the bandlimited space, $\mathbf{h}\in \spn(\mathbf{U}_{\bwfm{}})$. Thus, using \eqref{eq:rkDot} we have that:
\begin{equation*}
    \mathbf{h}(v) = \inner{\mathbf{h}}{\mathbf{d}_v}.
\end{equation*}
Since $\mathbf{h}$ is a unit vector, the last expression in the equation above is the projection length of $\mathbf{d}_v$ on $\mathbf{h}$.
\end{proof}

So $\abs{\mathbf{h}(v)}$ is maximized when $\abs{\inner{\mathbf{d}_v}{\mathbf{h}}}$ is maximized.

\begin{prop}
The signal $\mathbf{h}$ is such that $\mathbf{h}\in \spn(\mathbf{d}_1,\allowbreak\cdots,\mathbf{d}_m)^{\perp} \cap \spn(\mathbf{U}_{\bwfm{}})$.
\label{prop:orthComp}
\end{prop}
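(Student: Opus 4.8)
The plan is to prove the two memberships separately, relying entirely on the Gaussian-elimination description of $\mathbf{U}'$ in \eqref{eq:elimMat} together with the reproducing property \eqref{eq:rkDot}.

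First I would argue that $\mathbf{h}\in\spn(\mathbf{U}_{\bwfm{}})$. Recall that $\bwfm{}=\{1,\dots,m+1\}$ and that $\mathbf{h}$ points along the $(m+1)^\text{th}$ column of $\mathbf{U}'$. That column is obtained from the original columns of $\mathbf{U}$ by column operations that zero out its entries on the rows indexed by $\Smiter{}$; since only the first $m$ columns serve as pivots for this elimination, the resulting column is a linear combination of $\mathbf{u}_1,\dots,\mathbf{u}_{m+1}$ and hence lies in $\spn(\mathbf{U}_{\bwfm{}})$. This fact was already used implicitly in the proof of Proposition~\ref{prop:indexProj}, so here I would just make it explicit.

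Next I would establish orthogonality to $\spn(\mathbf{d}_1,\dots,\mathbf{d}_m)$. By the structure of \eqref{eq:elimMat}, the $(m+1)^\text{th}$ column of $\mathbf{U}'$ — and therefore $\mathbf{h}$ — vanishes on the index set $\Smiter{}=\{1,\dots,m\}$, i.e.\ $\mathbf{h}(i)=0$ for $i=1,\dots,m$. Since $\mathbf{h}$ is bandlimited to $\bwfm{}$, Corollary~\ref{prop:rkhsDot} applied with bandwidth support $\bwfm{}$ yields $\inner{\mathbf{h}}{\mathbf{d}_i}=\mathbf{h}(i)=0$ for every $i\in\{1,\dots,m\}$, where $\mathbf{d}_i=\mathbf{U}_{\bwfm{}}\mathbf{U}_{\bwfm{}}^\transp\boldsymbol{\delta}_i$. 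Hence $\mathbf{h}$ is orthogonal to each $\mathbf{d}_i$, so $\mathbf{h}\in\spn(\mathbf{d}_1,\dots,\mathbf{d}_m)^{\perp}$. Intersecting the two conclusions gives $\mathbf{h}\in\spn(\mathbf{d}_1,\dots,\mathbf{d}_m)^{\perp}\cap\spn(\mathbf{U}_{\bwfm{}})$.

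The only point requiring care — and where I would slow down — is the bookkeeping of the bandwidth support: in the \AlgSP{} setting $\bwfm{}$ grows by one at each iteration, so the vectors $\mathbf{d}_i$, the reproducing kernel, and the ``bandlimited space'' all refer to $\bwfm{}=\{1,\dots,m+1\}$ rather than to a fixed $\recF{}$ or $\recR{}$. Once this is pinned down, the orthogonality step is a one-line application of \eqref{eq:rkDot}, and there is no genuine obstacle in the argument.
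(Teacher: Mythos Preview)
Your proposal is correct and follows essentially the same route as the paper: use the Gaussian-elimination description of $\mathbf{U}'$ to see that $\mathbf{h}$ is a combination of $\mathbf{u}_1,\dots,\mathbf{u}_{m+1}$ (hence in $\spn(\mathbf{U}_{\bwfm{}})$) and vanishes on $\Smiter{}$, then invoke the reproducing property \eqref{eq:rkDot} to turn $\mathbf{h}(i)=0$ into $\inner{\mathbf{h}}{\mathbf{d}_i}=0$. The only content the paper adds that you omit is a dimension count---it verifies via the nonzero pivots that $\spn(\mathbf{d}_1,\dots,\mathbf{d}_m)$ is $m$-dimensional, so that $\spn(\mathbf{h})$ is in fact the full one-dimensional orthogonal complement inside $\spn(\mathbf{U}_{\bwfm{}})$; this is not needed for the membership claim in the proposition as stated, but it is used in the very next step (Proposition~\ref{prop:minOrthDist}) to decompose $\mathbf{d}_v$ into its $\lpTildeDspm{}$-component and its $\mathbf{h}$-component.
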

\begin{proof}
All pivots in the Gaussian elimination of $\mathbf{U}_{\Smiter{}\cup v, \bwfm{}}$ are non-zero, as seen in \eqref{eq:elimMat}, so that the following equivalent statements follow:
\begin{itemize}
    \item $\mathbf{U}_{\Smiter{},1:m}$ is full rank.
    \item $\mathbf{U}_{\bwfm{}}\mathbf{U}_{\bwfm{}}^\transp \mathbf{I}_{\Smiter{}}$ is full column rank.
    \item $\spn(\mathbf{d}_1,\cdots,\mathbf{d}_m)$ has dimension $m$.
\end{itemize}
The second statement follows (0.4.6 (b) \cite{horn2012matrix}) from the first because $\mathbf{U}_{\bwfm{}}$ has full column rank and $\mathbf{U}_{\Smiter{},1:m}$ is nonsingular. Given that  $\spn(\mathbf{d}_1,\cdots,\mathbf{d}_m)$ has dimension $m$ we can proceed to the orthogonality arguments.

By definition, $\mathbf{h}$ obtained from \eqref{eq:elimMat} is zero over the set $\Smiter{}$ so that, from Proposition \ref{prop:rkhsDot}:
\begin{align*}
    \mathbf{h}(1) = 0 &\implies \inner{\mathbf{h}}{\mathbf{d}_1} = 0,\\
    &\vdots\\
    \mathbf{h}(m) = 0 &\implies \inner{\mathbf{h}}{\mathbf{d}_m} = 0,
\end{align*}
and therefore $\mathbf{h}$ is orthogonal to each of the vectors $\mathbf{d}_1,\allowbreak\cdots,\mathbf{d}_m$. We call the space spanned by those vectors $\lpTildeDspm{}$, defined as
\begin{equation*}
    \lpTildeDspm{} = \{\spn(\mathbf{d}_i)| i\in \{1, 2, \cdots,m\}\}.
\end{equation*}
Since dimension of $\mathbf{U}_{\bwfm{}}$ is $m+1$ and $\mathbf{h}$ is orthogonal to $\lpTildeDspm{} = \spn(\mathbf{d}_1,\cdots,\mathbf{d}_m)$ of dimension $m$, $\spn{(\mathbf{h})}$ is the orthogonal complement subspace to $\lpTildeDspm{}$ (see Fig. \ref{fig:hOrtho} for an illustration).
\end{proof}
For this particular algorithm, $\bwf{}$ changes with the number of samples in the sampling set. At the end of $m^\text{th}$ iteration the bandwidth support $\bwf$ can be represented as $\bwfm{}=\{1,\cdots,m+1\}$, where $m$ is the number of samples in the current sampling set. We use $\lpTildeDsp{}$ and $\lpTildeDspm{}$ to denote a dependence of $\lpDsp{}$ and $\lpDspm{}$ on $\bwfm{}$ in addition to $\Smiter{}$.

\begin{prop}
The sample $v^*$ selected in the $(m+1)^\text{th}$ iteration of \AlgSP{} maximizes the distance between $\mathbf{d}_v$ and its orthogonal projection on $\lpTildeDspm{}$.
\label{prop:minOrthDist}
\end{prop}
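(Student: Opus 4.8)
The plan is to chain together Propositions \ref{prop:indexProj} and \ref{prop:orthComp} through an orthogonal decomposition of the bandlimited space $\spn(\mathbf{U}_{\bwfm{}})$, which has dimension $m+1$. First I would observe that every filtered delta $\mathbf{d}_v = \mathbf{U}_{\bwfm{}}\mathbf{U}_{\bwfm{}}^\transp\boldsymbol{\delta}_v$ lies in $\spn(\mathbf{U}_{\bwfm{}})$, and that by Proposition \ref{prop:orthComp} this space splits as the orthogonal direct sum $\spn(\mathbf{U}_{\bwfm{}}) = \lpTildeDspm{} \oplus \spn(\mathbf{h})$, where $\lpTildeDspm{} = \spn(\mathbf{d}_1,\cdots,\mathbf{d}_m)$ has dimension $m$ and $\mathbf{h}$ is a unit vector spanning the one-dimensional orthogonal complement of $\lpTildeDspm{}$ inside $\spn(\mathbf{U}_{\bwfm{}})$.

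Next I would compute the distance directly. Writing $\mathbf{d}_v = \mathbf{P}_{\lpTildeDspm{}}\mathbf{d}_v + \mathbf{r}_v$ with $\mathbf{r}_v \perp \lpTildeDspm{}$, the residual $\mathbf{r}_v$ still belongs to $\spn(\mathbf{U}_{\bwfm{}})$ --- being a difference of two of its elements --- and is orthogonal to $\lpTildeDspm{}$, so it must be a multiple of $\mathbf{h}$; since $\mathbf{h}$ has unit norm, $\mathbf{r}_v = \inner{\mathbf{h}}{\mathbf{d}_v}\,\mathbf{h}$ and hence
\[
\text{dist}(\mathbf{d}_v, \lpTildeDspm{}) = \norm{\mathbf{r}_v} = \abs{\inner{\mathbf{h}}{\mathbf{d}_v}}.
\]
By Proposition \ref{prop:indexProj}, $\inner{\mathbf{h}}{\mathbf{d}_v} = \mathbf{h}(v)$, so $\text{dist}(\mathbf{d}_v, \lpTildeDspm{}) = \abs{\mathbf{h}(v)}$. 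I would then recall that the ideal \AlgSP{} iteration selects $v^* = \argmaxA_{v\in\SmCiter{}}\abs{\boldsymbol{\psi}^*_k(v)}$, and that as $k\to\infty$ the signal $\boldsymbol{\psi}^*_k$ tends (Theorem \ref{thm:eigConverge}) to $\boldsymbol{\psi}^*$, which is a scalar multiple of the unit vector $\mathbf{h}$ of \eqref{eq:elimMat}; thus $v^* = \argmaxA_{v\in\SmCiter{}}\abs{\mathbf{h}(v)} = \argmaxA_{v\in\SmCiter{}}\text{dist}(\mathbf{d}_v, \lpTildeDspm{})$. Since $\text{Vol}(\mathbf{d}_1,\cdots,\mathbf{d}_m,\mathbf{d}_v) = \text{Vol}(\mathbf{d}_1,\cdots,\mathbf{d}_m)\cdot\text{dist}(\mathbf{d}_v,\lpTildeDspm{})$ and the first factor is independent of $v$, this is also the volume-maximizing choice, which closes the loop with Proposition \ref{prop:detPivot}.

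The one delicate point --- and the step I expect to need the most care --- is the dimension bookkeeping: the argument requires $\lpTildeDspm{}$ to have dimension exactly $m$, so that its orthogonal complement inside the $(m+1)$-dimensional space $\spn(\mathbf{U}_{\bwfm{}})$ is precisely the line $\spn(\mathbf{h})$ and the residual $\mathbf{r}_v$ has nowhere else to live. This is exactly the full-rank condition on $\mathbf{U}_{\Smiter{},1:m}$, which holds inductively for the ideal \AlgSP{} algorithm (Theorem \ref{thm:eigConverge} guarantees that a new vertex keeping $\mathbf{U}_{\Smiter{}\cup v, \bwfm{}}$ full rank can always be added) and was already derived in the proof of Proposition \ref{prop:orthComp}. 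Everything else reduces to the single orthogonal-projection identity above.
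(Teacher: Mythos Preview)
Your proposal is correct and follows essentially the same approach as the paper: both arguments use Proposition~\ref{prop:orthComp} to decompose $\mathbf{d}_v\in\spn(\mathbf{U}_{\bwfm{}})$ orthogonally as $\mathbf{P}_{\lpTildeDspm{}}\mathbf{d}_v+\inner{\mathbf{d}_v}{\mathbf{h}}\mathbf{h}$, then invoke Proposition~\ref{prop:indexProj} to identify the residual length $\abs{\inner{\mathbf{d}_v}{\mathbf{h}}}$ with $\abs{\mathbf{h}(v)}$, which is exactly what the ideal \AlgSP{} step maximizes. Your version is a bit more explicit about the dimension bookkeeping (why the residual must lie in $\spn(\mathbf{h})$) and already folds in the volume consequence that the paper states immediately after the proof, but the underlying argument is the same.
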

\begin{proof}
Since $\mathbf{d}_v\in \spn(\mathbf{U}_{\bwfm{}})$, it can be resolved in to two orthogonal components with respect to the two orthogonal (Prop. \ref{prop:orthComp}) spaces $\lpTildeDspm{}$ and $\spn{(\mathbf{h})}$.
\begin{equation*}
    \mathbf{d}_v = \mathbf{P}_{\lpTildeDspm{}} \mathbf{d}_v + \inner{\mathbf{d}_v}{\mathbf{h}}\mathbf{h},
\end{equation*}
where $\mathbf{h}$ has unit magnitude and $\mathbf{P}_{\lpTildeDspm{}}$ is the projection matrix onto the subspace $\lpTildeDspm{}$.

Maximizing $\abs{\mathbf{h}(v)}$ is equivalent to maximizing $\abs{\inner{\mathbf{d}_v}{\mathbf{h}}}$ which can be expressed in terms of magnitude of $\mathbf{d}_v$ and the magnitude of its projection on $\lpTildeDspm{}$.
\begin{equation}
  \argmaxA_v \inner{\mathbf{d}_v}{\mathbf{h}}^2 = \argmaxA_v \norm{\mathbf{d}_v}^2 - \norm{\mathbf{P}_{\lpTildeDspm{}} \mathbf{d}_v}^2
  \label{eq:obj}
\end{equation}
Fig. \ref{fig:dvProj} shows this orthogonality relation between $\abs{\inner{\mathbf{h}}{\mathbf{d}_v}}$, $\mathbf{d}_v$, and $\mathbf{P}_{\lpTildeDspm{}}(\mathbf{d}_v)$.
\end{proof}

So the $v^*$ chosen is the one that maximizes the volume of the space spanned by the filtered delta signals.
\begin{equation*}
    v^* = \argmaxA_{v\in \SmCiter{}} \text{Vol}(\mathbf{d}_1,\cdots,\mathbf{d}_m,\mathbf{d}_v).
\end{equation*}
The last line follows from using the definition of volume of parallelepiped \cite{peng2007determinant}. Note that, although Proposition \ref{prop:minOrthDist} could have been derived from the determinant property in Proposition  \ref{prop:detPivot} using \eqref{eq:detEqui}, the approach using the orthogonal vector to the subspace in Proposition \ref{prop:indexProj}, Proposition \ref{prop:orthComp} and Proposition  \ref{prop:minOrthDist} makes more explicit the geometry of the problem.

Algorithm \ref{alg:spProx} summarizes this new volume maximization interpretation of \AlgSP{}. Although Algorithm \ref{alg:spProx} requires eigendecomposition, it is helpful to see its conceptual similarity with Algorithms \ref{alg:cohDist} and \ref{alg:avm}. For an empirical comparison, in Section \ref{sec:expVol} we compare \AlgSP{} which is Algorithm \ref{alg:spProx} relaxed with a finite value of $k$ and without requiring the full eigendecomposition.

\begin{algorithm}[t]
\centering
\caption{Volume interpretation of \AlgSP{} algorithm as $k\rightarrow \infty$}
\begin{algorithmic}
\Function{\AlgSP{}}{$\mathbf{L}, s$}
\State $\mathcal{S}\gets \emptyset$
\State $\bwf{}\gets \{1\}$
\While{$\abs{\mathcal{S}}<s$}
\State $\left[\mathbf{d}_1, \cdots, \mathbf{d}_{\abs{\mathcal{S}}} \right] \gets \left[\mathbf{U}_{\mathcal{R}}\mathbf{U}_{\mathcal{R}}^\transp\boldsymbol{\delta}_1, \cdots, \mathbf{U}_{\mathcal{R}}\mathbf{U}_{\mathcal{R}}^\transp\mathbf{d}_{\abs{\mathcal{S}}} \right]$
\State ${\lpTildeDsp{}} \gets \spn_{v\in\mathcal{S}}(\mathbf{d}_v)$
\State $v* \gets \argmaxA_v \norm{\mathbf{d}_v}^2 - \norm{\mathbf{P}_{{\lpTildeDsp{}}} \mathbf{d}_v}^2$
\State $\mathcal{S}\gets \mathcal{S}\cup v^*$
\State $\bwf{}\gets \bwf{}\cup \abs{\bwf{}}+1$
\EndWhile
\State \textbf{return} $\mathcal{S}$
\EndFunction
\end{algorithmic}
\label{alg:spProx}
\end{algorithm}

\begin{table*}[t]
    \centering
    \caption{Approximation to greedy maximization of determinant. \AlgEDfree{} - For implementation details refer to Section \ref{sec:tradeoffInn}}
    \begin{tabular}{| c | c | c |}\hline
        Sampling method & Selection process & Approximation \\ \hline
        Exact greedy & $\argmaxA_{v\in \SmCiter{}} \norm{\mathbf{d}_v}^2 - \norm{\mathbf{P}_{\lpDspm{}} \mathbf{d}_v}^2$ & -\\ \hline
        \AlgRand{} & $p(v) \propto \norm{\mathbf{d}_v}^2$ & No projection.\\ \hline
        \AlgSP{} & $\argmaxA_{v\in \SmCiter{}} \norm{\mathbf{d}_v}^2 - \norm{\mathbf{P}_{\lpTildeDspm{}} \mathbf{d}_v}^2$ & Projection space approximate and increasing in size.\\ \hline
        \AlgEDfree{} & $\argmaxA_{v\in \SmCiter{}} \norm{\mathbf{d}_v}^2 - 2 \sum_{w\in\Smiter{}} \inner{\abs{\mathbf{d}_w}}{\abs{\mathbf{d}_v}}$ & Inner product approximation for projection.\\ \hline
        \AlgAVM{} & $\argmaxA_{v\in \SmCiter{}} \norm{\mathbf{d}_v}^2 - \sum_{w\in \Smiter{}}\frac{\mathbf{d}_w^2(v)}{\norm{\mathbf{d}_w}^2}$ & Assumption of orthogonality.\\ \hline
    \end{tabular}
    \label{tab:approx_Comp}
\end{table*}

\subsection{Eigendecomposition-free methods as volume maximization}
\label{sec:tradeoffInn}
So far we have covered existing literature on D-optimality both in general and as it relates  to graphs, and proposed two algorithms towards that goal. From \eqref{eq:projInterpret}, note that the greedy update for approximate volume maximization is
\begin{equation*}
    v^* = \argmaxA_{v\in \SmCiter{}} \norm{\mathbf{d}_v}^2 - \norm{\mathbf{P}_{\lpDspm{}} \mathbf{d}_v}^2.
\end{equation*}
Based on this we can revisit some eigendecomposition-free algorithms for graph signal sampling from the perspective of volume maximization. For each of these algorithms, we consider the criterion to add a vertex to the sampling set in the $(m+1)^\text{th}$ iteration.
First, the \AlgRand{} algorithm can be seen as neglecting the projection term and sampling based only on $\norm{\mathbf{d}_v}^2$. Alternatively, the \AlgSP{} algorithm approximates this by
\begin{equation}
    v^* = \argmaxA_{v\in \SmCiter{}} \norm{\mathbf{d}_v}^2 - \norm{\mathbf{P}_{\lpTildeDspm{}} \mathbf{d}_v}^2
    \label{eq:spUpdate}
\end{equation}
for a finite value of parameter $k$ and a varying $\lpTildeDspm{}$ in place of $\lpDspm{}$ --- see Section \ref{sec:volParal}.
The generalization to the
eigendecomposition-free approach of \cite{sakiyama2018eigendecomposition}(V2) proposes to maximize  (using the greedy selection in Equation (31)):
\begin{equation*}
v^* = \argmaxA_{v\in \SmCiter{}} \norm{\mathbf{d}_v}^2 - 2\sum_{w\in \Smiter{}} \inner{\abs{\mathbf{d}_w}}{\abs{\mathbf{d}_v}},
\end{equation*}
but in practice maximizes a different expression --- see (32) in \cite{sakiyama2019eigendecomposition}.

The crucial difference between our proposed method and \cite{sakiyama2019eigendecomposition} is that we obtain a specific expression to be maximized through D-optimality. Whereas \cite{sakiyama2019eigendecomposition} clearly shows the relation between various experiment design objective functions and their corresponding localization operators, the relation between the algorithm proposed in \cite{sakiyama2019eigendecomposition}, \AlgEDfree{},
and the experiment design objective functions is unclear.

We do not attempt to explain methods such as \cite{jung2019localized} under the volume maximization framework as they define the signal smoothness through total variation operator as opposed to squared differences through the graph Laplacian operator which is necessary for the volume maximization interpretation. The similarities in the optimization objective function for various eigendecomposition-free sampling methods that we studied are summarized in Table \ref{tab:approx_Comp}. The differences between various sampling methods will be apparent when we compare their performance for various sampling and reconstruction settings.

\section{Experimental settings}
\label{sec:expVol}
To evaluate our sampling algorithms, we assess their sampling  performance on different graph topologies at different sampling rates.

\subsection{Signal, Graph Models and Sampling setups}
\label{sec:expSetup}
With a perfectly bandlimited signal, most sampling schemes achieve similar performance in terms of reconstruction error. However, in practice signals are rarely perfectly bandlimited and noise-free. Therefore, it is necessary to compare the performance of the sampling methods on non-ideal signals.

\subsubsection{Signal smoothness and graph topologies}\label{sec:sig_smooth} Consider first  a synthetic noisy signal model. The signal $\mathbf{f}$ is bandlimited with added noise $\mathbf{n}$. The resulting signal is $\mathbf{f} = \mathbf{x} + \mathbf{n}$ which can be expressed as $\mathbf{U}_{\recF{}}\tilde{\mathbf{f}}_\recF{} + \mathbf{n}$
with the frequency components of $\mathbf{x}$, and noise being random variables distributed as multivariate normal distributions: $\tilde{\mathbf{f}}_\recF{} \sim \mathcal{N}(\mathbf{0},c_1\mathbf{I}_{\recF{}\recF{}})$, $\mathbf{n} \sim \mathcal{N}(\mathbf{0},c_2\mathbf{I}_{\mathcal{V}\mathcal{V}})$. The constants $c_1$ and $c_2$ are chosen so that the expected signal power is $1$ and the expected noise power is $0.1$. Since our main objective is to study the effect of varying number of samples, graph topologies, and the graph size on \AlgDC{} and \AlgAVM{}, we fix the bandwidth of the signal to $50$.

We compare our algorithms against three established algorithms --- \AlgRand{}, \AlgSP{}, and \AlgEDfree{}
All methods except \AlgRand{} return unique samples. For a fair comparison, all sampling methods are evaluated under conditions where the same number of samples is obtained, irrespective of whether the returned ones are unique or not (which could occur in the case of \cite{puy2016random}).

We use the combinatorial Laplacian for our sampling and reconstruction experiments, except for the classification experiment where the normalized Laplacian of the nearest neighbors graph is used, as it achieves overall better classification accuracy.

\subsubsection{Sampling set sizes}
We use two graph sizes $500$ and $1000$. Except for the Erd\H{o}s R\'{e}nyi graph model,
we use the Grasp \cite{girault2017grasp} and GSPBox \cite{perraudin2016gspbox} \softMat{} toolboxes to generate the graph instances --- see Table \ref{tab:grModel}.
\begin{table*}[t]
    \centering
    \caption{Types of graphs in the experiments}
    \begin{tabular}{| c | c | c |}\hline
        Graph model & Instance & Construction comments \\ \hline
        Random sensor knn & \texttt{grasp\_plane\_knn(n, k),k=8 or 15} & Uniformly sampled vertices on 2d plane with $k$ nearest neighbors\\
        & \texttt{gsp\_sensor(n, 20)} & Uniformly sampled vertices on 2d plane with 20 nearest neighbors\\ \hline
        Scale-free & \texttt{grasp\_barabasi\_albert(n,8)} & Initial 8 nodes \\ \hline
        Community & \texttt{param.Nc=5 or 10} & \texttt{param.Nc} communities\\
        & \texttt{gsp\_community(n, param)} & \\ \hline
        WS/Small world &  \texttt{grasp\_watts\_strogatz(n, 5, 0.2)} & Average degree 5 rewiring probability $0.2$\\ \hline
        Erd\H{o}s R\'{e}nyi & \texttt{erdos\_renyi(n,0.02)} & Probability of connection $0.02$\\ \hline
    \end{tabular}
    \label{tab:grModel}
\end{table*}

We use sampling set sizes from $60$ samples to $200$ samples to compare the variation of reconstruction error.
For comparing algorithms in this setting, we do not show the full range of reconstruction SNRs from \AlgRand{}  because its SNR is usually 5-10 dBs lower than other methods at starting sampling rate of 60 (see Tables \ref{tab:reconScale} and  \ref{tab:reconScaleComm} for performance at higher sampling rates).

\subsubsection{Classification on real world datasets}
In this experiment we evaluate sampling algorithms in a transductive semi-supervised learning setting for a digit classification task (USPS dataset). We randomly select 10 smaller datasets of size 1000 from the original dataset, such that each smaller dataset contains 100 elements from each category, i.e., the 10 digits. Using those smaller datasets we construct a nearest neighbors graph with 10 neighbors. This setup is the same as in \cite{anis2015efficient}. The graph sampling and reconstruction algorithms then select number of samples ranging from $60$ to $200$. Using one-vs-all strategy we reconstruct the class signals, and then classify them by selecting the class which gives the maximum reconstruction in magnitude at a vertex. We then report the average accuracy of the classification over the 10 smaller sets.

\subsubsection{Effect of scaling graph sizes}
\label{sec:effectGraphSizes}
One of our primary goals is to develop fast and scalable  algorithms. To evaluate these properties,  we use a random sensor nearest neighbors graph with 20 nearest neighbors, and community graph with 10 communities with different graphs sizes of  500, 1000, 2000, 4000, 8000. For each graph size we sample 150 vertices, and the signal model remains the same as in Section \ref{sec:sig_smooth}. We report the SNR of the reconstruction and the time required to sample averaged over 50 graph and signal realizations for a given graph type.

The feasibility of different sampling algorithms on graphs with sizes orders of magnitude larger than thousands of vertices is an indicator of scalability, so we also test the sampling algorithms on relatively larger graph sizes of 50,000 and 100,000. Both the graph parameters, such as the number of nearest neighbors or the number of communities, and the signal model parameters, such as the noise power, remain the same as those we use for smaller graph sizes; while we use a bandwidth of 100, and sample 5000 vertices for the two larger graphs sizes. At these graph sizes, some sampling algorithms require more than 10 times the time required by \AlgAVM{} or require more than 64GB of the available random-access memory (RAM). Because of this, it is not possible to run 50 graph and signal realizations for all the sampling algorithms as we did earlier. Least squares reconstruction which we used for smaller graphs is also not feasible at these graph sizes, so we reconstruct using projection onto convex sets(\ProjCS{}) from \cite{narang2013localized}, which is tailored for bandlimited signals on graphs. We include the execution times and the SNRs for this setting in the tables along with smaller graphs, but due to fundamental difference in the reconstruction method we do not plot the SNRs together with those of smaller graphs. The execution times measured in seconds are rounded to one decimal precision for display.

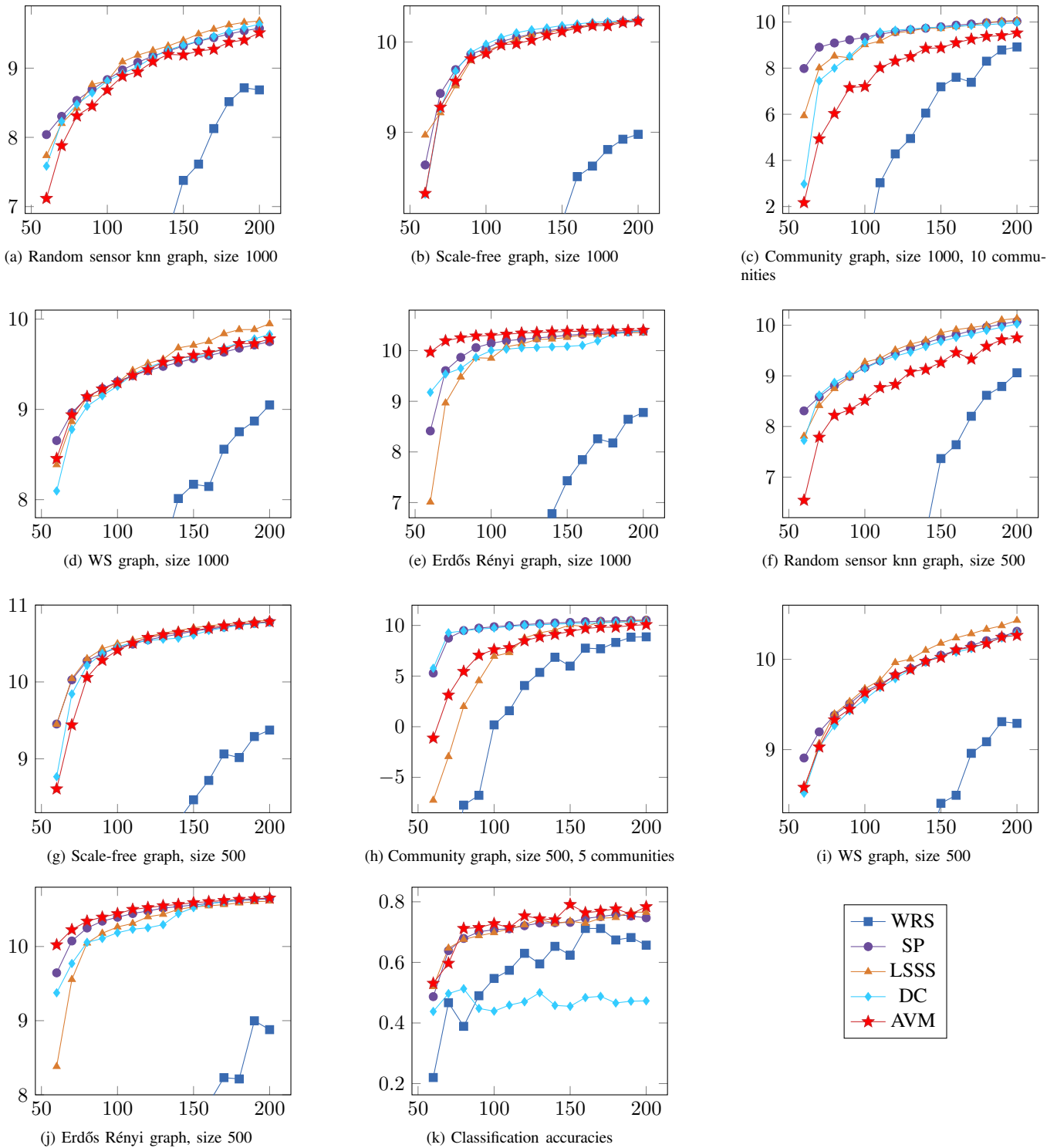
\begin{figure*}[th!]
\subfloat[Random sensor knn graph, size 1000]{
\pgfplotstableread{knn_bandlim_sig1_1000_8it50_snr.dat}\mytable
\begin{tikzpicture}
\centering
  \begin{axis}[
  	width = 0.33\linewidth,
    cycle list name=mydccol,
    ymin = 6.9,
    ymax = 9.9
    ]
    \pgfplotstablegetcolsof{\mytable}
    \pgfmathparse{\pgfplotsretval-1}
    \foreach \i in {1,2}{
      \addplot table[x index=0,y index=\i] {\mytable};
    }
    \addplot table[
    x index=0,
    y ={create col/expr={max(\thisrow{3},\thisrow{4},\thisrow{5},\thisrow{6},\thisrow{7})}}] {\mytable};
    \foreach \i in {8,9}{
      \addplot table[x index=0,y index=\i] {\mytable};
    }
  \end{axis}
  \label{fig:knn1}
\end{tikzpicture}
}\hfill
\subfloat[Scale-free graph, size 1000]{
\pgfplotstableread{BA_bandlim_sig1_1000_8it50_snr.dat}\mytable
\begin{tikzpicture}
\centering

  \begin{axis}[
  	width = 0.33\linewidth,
    cycle list name=mydccol,
    ymin=8.1,
    ymax=10.4
    ]
    \pgfplotstablegetcolsof{\mytable}
    \pgfmathparse{\pgfplotsretval-1}
    \foreach \i in {1,2}{
      \addplot table[x index=0,y index=\i] {\mytable};
    }
    \addplot table[
    x index=0,
    y ={create col/expr={max(\thisrow{3},\thisrow{4},\thisrow{5},\thisrow{6},\thisrow{7})}}] {\mytable};
    \foreach \i in {8,9}{
      \addplot table[x index=0,y index=\i] {\mytable};
    }
  \end{axis}
\end{tikzpicture}
\label{fig:ba1}
}\hfill
\subfloat[Community graph, size 1000, 10 communities]{
\pgfplotstableread{comm_bandlim_sig1_1000_10it50_2020_12_4_21_53_snr.dat}\mytable
\begin{tikzpicture}

  \begin{axis}[
  	width = 0.33\linewidth,
    cycle list name=mydccol,
    ymin=1.7,
    ymax=10.7
    ]
    \pgfplotstablegetcolsof{\mytable}
    \pgfmathparse{\pgfplotsretval-1}
    \foreach \i in {1,2}{
      \addplot table[x index=0,y index=\i] {\mytable};
    }
    \addplot table[
    x index=0,
    y ={create col/expr={max(\thisrow{3},\thisrow{4},\thisrow{5},\thisrow{6},\thisrow{7})}}] {\mytable};
    \foreach \i in {8,9}{
      \addplot table[x index=0,y index=\i] {\mytable};
    }
  \end{axis}
\end{tikzpicture}
\label{fig:comm1}
}\hfill \\
\subfloat[WS graph, size 1000]{
\pgfplotstableread{WS_bandlim_sig1_1000_5_p2it50_snr.dat}\mytable

\begin{tikzpicture}
  \begin{axis}[
  	width = 0.33\linewidth,
    cycle list name=mydccol,
    ymin=7.8,
    ymax=10.1,
    ]
    \pgfplotstablegetcolsof{\mytable}
    \pgfmathparse{\pgfplotsretval-1}
    \foreach \i in {1,2}{
      \addplot table[x index=0,y index=\i] {\mytable};
    }
    \addplot table[
    x index=0,
    y ={create col/expr={max(\thisrow{3},\thisrow{4},\thisrow{5},\thisrow{6},\thisrow{7})}}] {\mytable};
    \foreach \i in {8,9}{
      \addplot table[x index=0,y index=\i] {\mytable};
    }
  \end{axis}
\end{tikzpicture}
\label{fig:ws1}
}\hfill
\subfloat[Erd\H{o}s R\'{e}nyi graph, size 1000]{
\pgfplotstableread{ER_bandlim_sig1_1000_p02it50_snr.dat}\mytable

\begin{tikzpicture}
  \begin{axis}[
  	width = 0.33\linewidth,
    cycle list name=mydccol,
    ymin=6.7,
    ymax=10.8,
    ]
    \pgfplotstablegetcolsof{\mytable}
    \pgfmathparse{\pgfplotsretval-1}
    \foreach \i in {1,2}{
      \addplot table[x index=0,y index=\i] {\mytable};
    }
    \addplot table[
    x index=0,
    y ={create col/expr={max(\thisrow{3},\thisrow{4},\thisrow{5},\thisrow{6},\thisrow{7})}}] {\mytable};
    \foreach \i in {8,9}{
      \addplot table[x index=0,y index=\i] {\mytable};
    }
  \end{axis}
\end{tikzpicture}
\label{fig:er1}
}\hfill
\subfloat[Random sensor knn graph, size 500]{
\pgfplotstableread{knn_bandlim_sig1_500_8it50_snr.dat}\mytable

\begin{tikzpicture}
  \begin{axis}[
  	width = 0.33\linewidth,
    cycle list name=mydccol,
    ymin=6.2,
    ymax=10.3
    ]
    \pgfplotstablegetcolsof{\mytable}
    \pgfmathparse{\pgfplotsretval-1}
    \foreach \i in {1,2}{
      \addplot table[x index=0,y index=\i] {\mytable};
    }
    \addplot table[
    x index=0,
    y ={create col/expr={max(\thisrow{3},\thisrow{4},\thisrow{5},\thisrow{6},\thisrow{7})}}] {\mytable};
    \foreach \i in {8,9}{
      \addplot table[x index=0,y index=\i] {\mytable};
    }
  \end{axis}
  \label{fig:knn2}
\end{tikzpicture}
}\hfill \\
\subfloat[Scale-free graph, size 500]{
\pgfplotstableread{BA_bandlim_sig1_500_8it50_snr.dat}\mytable
\begin{tikzpicture}

  \begin{axis}[
  	width = 0.33\linewidth,
    cycle list name=mydccol,
    ymin=8.3,
    ymax=11.0
    ]
    \pgfplotstablegetcolsof{\mytable}
    \pgfmathparse{\pgfplotsretval-1}
    \foreach \i in {1,2}{
      \addplot table[x index=0,y index=\i] {\mytable};
    }
    \addplot table[
    x index=0,
    y ={create col/expr={max(\thisrow{3},\thisrow{4},\thisrow{5},\thisrow{6},\thisrow{7})}}] {\mytable};
    \foreach \i in {8,9}{
      \addplot table[x index=0,y index=\i] {\mytable};
    }
  \end{axis}
\end{tikzpicture}
\label{fig:ba2}
}\hfill
\subfloat[Community graph, size 500, 5 communities]{
\pgfplotstableread{comm_bandlim_sig1_500_10it50_2020_12_4_18_17_snr.dat}\mytable
\begin{tikzpicture}
\centering
  \begin{axis}[
  	width = 0.33\linewidth,
    cycle list name=mydccol,
    ymin=-8.5,
    ymax=12.0
    ]
    \pgfplotstablegetcolsof{\mytable}
    \pgfmathparse{\pgfplotsretval-1}
    \foreach \i in {1,2}{
      \addplot table[x index=0,y index=\i] {\mytable};
    }
    \addplot table[
    x index=0,
    y ={create col/expr={max(\thisrow{3},\thisrow{4},\thisrow{5},\thisrow{6},\thisrow{7})}}] {\mytable};
    \foreach \i in {8,9}{
      \addplot table[x index=0,y index=\i] {\mytable};
    }
  \end{axis}
\end{tikzpicture}
\label{fig:comm2}
}\hfill
\subfloat[WS graph, size 500]{
\pgfplotstableread{WS_bandlim_sig1_500_5_p2it50_snr.dat}\mytable
\begin{tikzpicture}

  \begin{axis}[
  	width = 0.33\linewidth,
    cycle list name=mydccol,
    ymin=8.3,
    ymax=10.6,
    ]
    \pgfplotstablegetcolsof{\mytable}
    \pgfmathparse{\pgfplotsretval-1}
    \foreach \i in {1,2}{
      \addplot table[x index=0,y index=\i] {\mytable};
    }
    \addplot table[
    x index=0,
    y ={create col/expr={max(\thisrow{3},\thisrow{4},\thisrow{5},\thisrow{6},\thisrow{7})}}] {\mytable};
    \foreach \i in {8,9}{
      \addplot table[x index=0,y index=\i] {\mytable};
    }
  \end{axis}
\end{tikzpicture}
\label{fig:ws2}
}\hfill \\
\subfloat[Erd\H{o}s R\'{e}nyi graph, size 500]{
\pgfplotstableread{ER_bandlim_sig1_500_p02it50_snr.dat}\mytable
\centering
\begin{tikzpicture}

  \begin{axis}[
  	width = 0.33\linewidth,
    cycle list name=mydccol,
    ymin=8.0,
    ymax=10.8
    ]
    \pgfplotstablegetcolsof{\mytable}
    \pgfmathparse{\pgfplotsretval-1}
    \foreach \i in {1,2}{
      \addplot table[x index=0,y index=\i] {\mytable};
    }
    \addplot table[
    x index=0,
    y ={create col/expr={max(\thisrow{3},\thisrow{4},\thisrow{5},\thisrow{6},\thisrow{7})}}] {\mytable};
    \foreach \i in {8,9}{
      \addplot table[x index=0,y index=\i] {\mytable};
    }
  \end{axis}
\end{tikzpicture}
}\hfill
\subfloat[Classification accuracies]{
\pgfplotstableread{set_avg.dat}\mytable
\begin{tikzpicture}
\centering
  \begin{axis}[
  	width = 0.33\linewidth,
    cycle list name=mydccol,
    ]
    \pgfplotstablegetcolsof{\mytable}
    \pgfmathparse{\pgfplotsretval-1}
    \foreach \i in {1,...,\pgfmathresult}{
      \addplot table[x index=0,y index=\i] {\mytable};
    }
  \end{axis}
\end{tikzpicture}
    \label{fig:classAcc}
}
\subfloat{
\begin{tikzpicture}
  \begin{axis}[
    width = 0.33\linewidth,
    xmin=50,
    xmax=200,
    ymin=-20,
    ymax=0,
    axis line style={draw=none},
    tick style={draw=none},
    yticklabels=\empty,
    xticklabels=\empty
    legend style={draw=white!15!black,at={(1,1)}},
    ]
    \addlegendimage{blue1, mark=square*}
    \addlegendentry{\AlgRand{}};
    \addlegendimage{violet1, mark=otimes*}
    \addlegendentry{\AlgSP{}};
    \addlegendimage{orange1, mark=triangle*}
    \addlegendentry{\AlgEDfree{}};
    \addlegendimage{lblue1, mark=diamond*}
    \addlegendentry{\AlgDC{}};
    \addlegendimage{red1, mark=mystar}
    \addlegendentry{\AlgAVM{}};
  \end{axis}
\end{tikzpicture}
\label{fig:legend1}
}\hfill
\subfloat{
\begin{tikzpicture}
  \begin{axis}[
    width = 0.33\linewidth,
    axis lines=none,
    xmin=50,
    xmax=200,
    ymin=-20,
    ymax=0,
    legend style={draw=white!15!black}
    ]
  \end{axis}
\end{tikzpicture}
}\hfill
\caption{Comparison of eigendecomposition-free methods in the  literature. x-axis: number of samples selected. y-axis: average SNR of the reconstructed signals. We do not include the entire range of SNR from \AlgRand{} based reconstruction because of its comparatively wider range.}
\label{fig:mainRecon}
\end{figure*}

We will now describe how algorithms are initialized.

\subsection{Initialization details}
We wish to evaluate all the algorithms on an equal footing. So for evaluating graph squared coherences using Function \ref{alg:cumCoh}, we use the same number of random vectors $10\log(n)$ corresponding to $c = 10$, $\epsilon = 0.01$, and an order 30 polynomial wherever filtering is needed for the \AlgRand{}, \AlgDC{}, and \AlgAVM{} methods. Larger $c$ and smaller $\epsilon$ values result in a more accurate approximation of squared coherences, but also require more computations. We choose those particular values to achieve a balance between the approximation accuracy and the amount of computations. The degree of the polynomial is selected so as to be larger than the diameter of most graphs we consider.

The various algorithms we consider have some hard-coded parameter values. \AlgSP{}  has just one parameter $k$ to which we assign $k=4$. \AlgEDfree{}  has a few more parameters to tune like $\nu,\eta$. In \cite{sakiyama2019eigendecomposition} the parameter $\nu$ is chosen experimentally to be 75, but in our experiments we run the \AlgEDfree{} algorithm on a wide range of $\nu$ values around 75 --- $\nu=[0.075, 7.5, 75, 750, 75000]$, and select  the value of $\nu$ that maximizes SNR. We chose this wide range of values experimentally, as we observed
cases where optimal reconstruction SNR was achieved at $\nu$ values differing from the proposed 75 by several orders of magnitude when we considered different topologies, graph sizes and Laplacians. As for the sampling times, we choose the sampling time corresponding to the $\nu$ chosen as per the maximum SNR. We experimentally determine $\eta$ the same way as in the original implementation. For the \AlgDC{} algorithm, we choose $\Delta=0.9$.

\subsection{Reconstruction techniques}
\label{sec:reconErrors}
We denote the sampled signal $\mathbf{f}_{\mathcal{S}}$ and the lowpass frequencies of the original signal  by $\tilde{\mathbf{f}}_{\recF{}} = \mathbf{U}_{\recF{}}^\transp \mathbf{f}$.
The ideal reconstruction which minimizes the mean square error using the sampled signal is given by the least squares solution to $\norm{\mathbf{U}_{\mathcal{S}\recF{}}\tilde{\mathbf{f}}_\recF{}-\mathbf{f}_\mathcal{S}}_2$. Other existing methods of reconstruction are --- using a linear combination of tailored kernels as seen in \AlgEDfree{}, or solving a regularized least squares as seen in \AlgGersh{}. However, since we are primarily interested in comparing the sampling sets generated by various algorithms on an even footing, we use the least squares solution for reconstruction which we compute by assuming that we know the graph Fourier basis.
To achieve best results for \AlgRand{}, instead of least squares solution we use the recommended weighted least squares \cite{puy2016random}, although it is slightly different from what we use for all other algorithms. We use Moore-Penrose pseudo inverse for all our least squares solutions.

\section{Results}
\label{sec:perfAlg}
We now evaluate the performance of our algorithm based on its speed and
on how well
it can reconstruct the original signal.

\begin{figure*}[t]
\subfloat[Random sensor graphs with 20 nearest neighbour connections]{
    \pgfplotstableread[col sep=comma]{2020_11_20_10_37_alg_times_iter_max_sel.dat}\mytable
    \centering
   \begin{tikzpicture}

  \begin{loglogaxis}[
  	width = 0.47\linewidth,
    cycle list name=my col,
    xlabel={Number of nodes in the graph},
    ylabel={Execution time (secs.)},
    legend pos=north west,
    legend entries = {\AlgRand{},\AlgSP{},\AlgEDfree{},\AlgGersh{},\AlgAVM{}}
    ]
    \legend{\AlgRand{},\AlgSP{},\AlgEDfree{},\AlgGersh{},\AlgAVM{}}
    \pgfplotstablegetcolsof{\mytable}
    \pgfmathparse{\pgfplotsretval-1}
    \foreach \i in {1,2,11,9,8}{
      \addplot table[x index=0,y index=\i] {\mytable};
    }
  \end{loglogaxis}
\end{tikzpicture}
\label{fig:timeAlg}
}\hfill
\subfloat[Community graphs with 10 communities]{
    \pgfplotstableread[col sep=comma]{2020_12_10_2_38_alg_times_iter_max_sel.dat}\mytable
    \centering
   \begin{tikzpicture}

  \begin{loglogaxis}[
  	width = 0.47\linewidth,
    cycle list name=my col,
    xlabel={Number of nodes in the graph},
    ylabel={Execution time (secs.)},
    legend pos=north west,
    legend entries = {\AlgRand{},\AlgSP{},\AlgEDfree{},\AlgGersh{},\AlgAVM{}}
    ]
    \legend{\AlgRand{},\AlgSP{},\AlgEDfree{},\AlgGersh{},\AlgAVM{}}
    \pgfplotstablegetcolsof{\mytable}
    \pgfmathparse{\pgfplotsretval-1}
    \foreach \i in {1,2,11,9,8}{
      \addplot table[x index=0,y index=\i] {\mytable};
    }
  \end{loglogaxis}
\end{tikzpicture}
\label{fig:timeAlgComm}
}
\caption{Visualizing average sampling times of four algorithms over 50 iterations on community graphs with 10 communities. Execution times for \AlgEDfree{} are averaged over executions for different parameter values.}
\end{figure*}
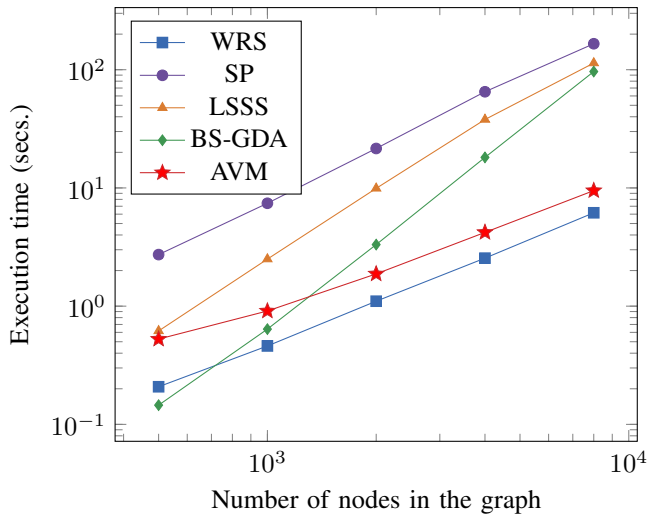
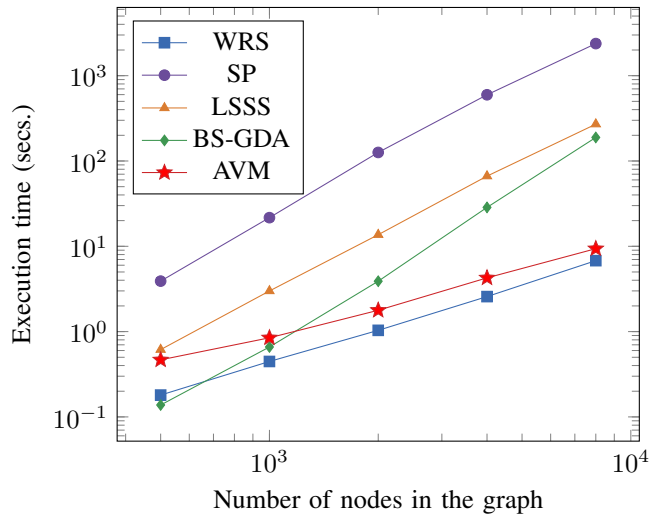

\subsection{Reconstruction error}
In this paper we look at the mean squared error in reconstructing the signal $\mathbf{f}$. So we measure the error $\norm{\hat{\mathbf{f}}-\mathbf{f}}^2$ in the reconstructed signal with respect to the original noisy signal, where $\hat{\mathbf{f}}$ is the reconstructed signal.

For our experiments, we plot the SNR averaged over 50 different graph instances of each model of graph, with a new signal generated for each graph instance Fig. \ref{fig:mainRecon}. The two graph models where we observe a lesser SNR for \AlgAVM{} algorithm are the Random sensor nearest neighbors and the Community graph models which we discuss next. For the remaining graph models the reconstruction SNR from \AlgDC{} and \AlgAVM{} sampling is comparable to other algorithms, such as \AlgSP{} and \AlgEDfree{}. In fact, we find the sampling from \AlgAVM{} to be comparatively satisfactory considering that we are reporting the maximum SNRs for \AlgEDfree{} over 5 different parameter values.

In Fig. \ref{fig:knn2} we notice that for Random sensor nearest neighbors graphs of size 500 we need more  samples to achieve competitive performance.
To better understand this, consider a graph consisting of two communities. In the original volume maximization a sampling set from only one community would give a volume of zero, and that sampling set would never be selected by a greedy exact volume maximization. However,  because \AlgAVM{}  is only an approximation to the greedy volume maximization,  sampling from only one community is possible although unlikely. More generally, this approximation affects weakly connected graphs such as random sensor graphs with a knn construction with a small number of nearest neighbors. More specifically, this approximation affects community graphs at low sampling rates as we can see in Figures \ref{fig:comm1} and \ref{fig:comm2}.

The issue, however, is no longer critical for larger graphs as we see when we increase the number of samples to 150 --- (Table  \ref{tab:reconScaleComm}), our algorithm performance is comparable to that of other state-of-the-art algorithms. Note that we do not face this issue for Erd\H{o}s R\'{e}nyi graph instances in our experiments, since these are almost surely connected as the probability of connection exceeds the sharp threshold \cite{erdHos1960evolution}.

For the USPS dataset classification, we do observe a significant drop in the classification accuracy for the samples chosen using the \AlgDC{} algorithm. However, the classification accuracy for \AlgAVM{} on the USPS dataset is at par with the remaining algorithms.

Next we evaluate how the complexity of \AlgAVM{}  scales with the graph size.

\subsection{Speed}
\label{sec:avm_speed}

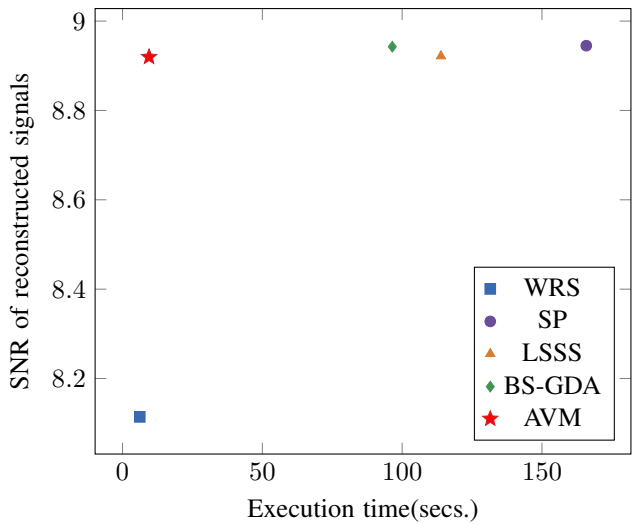
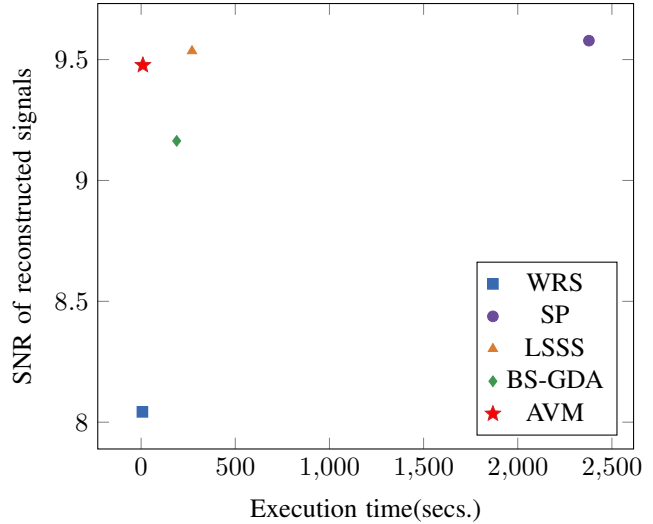
\begin{figure*}[t]
\subfloat[Random sensor graphs with 20 nearest neighbour connections.]{
   \begin{tikzpicture}

  \begin{axis}[
  	width = 0.48\linewidth,
    cycle list name=my col,
    xlabel={Execution time(secs.)},
    ylabel={SNR of reconstructed signals},
    legend pos=south east,
    ]
    \legend{\AlgRand{},\AlgSP{},\AlgEDfree{},\AlgGersh{},\AlgAVM{}}
    \pgfplotstableread{2020_11_20_10_37_sz_alg_snrs.dat}\mytablesnrs
    \pgfplotstableread[col sep=comma]{2020_11_20_10_37_alg_times_iter_max_sel.dat}\mytabletimes
    \pgfplotstablecreatecol[copy column from table={\mytablesnrs}{[index]1}]{\AlgRand{}}{\mytabletimes}
    \pgfplotstablecreatecol[copy column from table={\mytablesnrs}{[index]2}]{\AlgSP{}}{\mytabletimes}
    \pgfplotstablecreatecol[copy column from table={\mytablesnrs}{[index]3}]{GSSS1}{\mytabletimes}
    \pgfplotstablecreatecol[copy column from table={\mytablesnrs}{[index]4}]{GSSS2}{\mytabletimes}
    \pgfplotstablecreatecol[copy column from table={\mytablesnrs}{[index]5}]{GSSS3}{\mytabletimes}
    \pgfplotstablecreatecol[copy column from table={\mytablesnrs}{[index]6}]{GSSS4}{\mytabletimes}
    \pgfplotstablecreatecol[copy column from table={\mytablesnrs}{[index]7}]{GSSS5}{\mytabletimes}
    \pgfplotstablecreatecol[copy column from table={\mytablesnrs}{[index]8}]{AVM}{\mytabletimes}
    \pgfplotstablecreatecol[copy column from table={\mytablesnrs}{[index]9}]{RandomGilles}{\mytabletimes}
    \pgfplotstablecreatecol[copy column from table={\mytablesnrs}{[index]10}]{\AlgGersh{}}{\mytabletimes}
    \addplot+[
    only marks,
    select coords between index={4}
    ]
    table[x index=1,y index=20] {\mytabletimes};
    \addplot+[
    only marks,
    select coords between index={4}
    ]
    table[x index=2,y index=13] {\mytabletimes};
    \addplot+[
    only marks,
    select coords between index={4}
    ]
    table[x index=11,y index=14] {\mytabletimes};
    \addplot+[
    only marks,
    select coords between index={4}
    ]
    table[x index=9,y index=21] {\mytabletimes};
    \addplot+[
    only marks,
    select coords between index={4}
    ]
    table[x index=8,y index=19] {\mytabletimes};
  \end{axis}
\end{tikzpicture}
    \label{fig:tsnr_scatter}
}\hfill
\subfloat[Community graphs with 10 communities]{
   \begin{tikzpicture}

  \begin{axis}[
  	width = 0.48\linewidth,
    cycle list name=my col,
    xlabel={Execution time(secs.)},
    ylabel={SNR of reconstructed signals},
    legend pos=south east,
    ]
    \legend{\AlgRand{},\AlgSP{},\AlgEDfree{},\AlgGersh{},\AlgAVM{}}
    \pgfplotstableread{2020_12_10_2_38_sz_alg_snrs.dat}\mytablesnrs
    \pgfplotstableread[col sep=comma]{2020_12_10_2_38_alg_times_iter_max_sel.dat}\mytabletimes
    \pgfplotstablecreatecol[copy column from table={\mytablesnrs}{[index]1}]{\AlgRand{}}{\mytabletimes}
    \pgfplotstablecreatecol[copy column from table={\mytablesnrs}{[index]2}]{\AlgSP{}}{\mytabletimes}
    \pgfplotstablecreatecol[copy column from table={\mytablesnrs}{[index]3}]{GSSS1}{\mytabletimes}
    \pgfplotstablecreatecol[copy column from table={\mytablesnrs}{[index]4}]{GSSS2}{\mytabletimes}
    \pgfplotstablecreatecol[copy column from table={\mytablesnrs}{[index]5}]{GSSS3}{\mytabletimes}
    \pgfplotstablecreatecol[copy column from table={\mytablesnrs}{[index]6}]{GSSS4}{\mytabletimes}
    \pgfplotstablecreatecol[copy column from table={\mytablesnrs}{[index]7}]{GSSS5}{\mytabletimes}
    \pgfplotstablecreatecol[copy column from table={\mytablesnrs}{[index]8}]{AVM}{\mytabletimes}
    \pgfplotstablecreatecol[copy column from table={\mytablesnrs}{[index]9}]{RandomGilles}{\mytabletimes}
    \pgfplotstablecreatecol[copy column from table={\mytablesnrs}{[index]10}]{\AlgGersh{}}{\mytabletimes}
    \addplot+[
    only marks,
    select coords between index={4}
    ]
    table[x index=1,y index=20] {\mytabletimes};
    \addplot+[
    only marks,
    select coords between index={4}
    ]
    table[x index=2,y index=13] {\mytabletimes};
    \addplot+[
    only marks,
    select coords between index={4}
    ]
    table[x index=11,y index=14] {\mytabletimes};
    \addplot+[
    only marks,
    select coords between index={4}
    ]
    table[x index=9,y index=21] {\mytabletimes};
    \addplot+[
    only marks,
    select coords between index={4}
    ]
    table[x index=8,y index=19] {\mytabletimes};
  \end{axis}
\end{tikzpicture}
}
\caption{Scatter plot of the SNR vs execution time for graph size 8000.}
    \label{fig:tsnr_scatter_comm}
\end{figure*}

Using the setup from Section \ref{sec:expSetup}, we compare the sampling times for \AlgRand{}, \AlgSP{}, \AlgEDfree{}, \AlgGersh{} and \AlgAVM{} algorithms. We exclude the \AlgDC{} algorithm from these comparisons because the distance evaluations in \AlgDC{}, which provide good intuition, make \AlgDC{} significantly slower as compared to \AlgAVM{}. We include \AlgGersh{}  since it is one of the lowest complexity approaches among eigendecomposition-free algorithms. We use the Random sensor graph from the GSPbox \cite{perraudin2016gspbox} with 20 nearest neighbors.

\begin{table}[t]
\centering
    \caption{Execution time(secs.) for sampling, Random senor graphs}
    \pgfplotstableread[col sep=comma]{recon_iter_compare_2021_8_19_23_5_execution_times_combine.dat}\mytableTime
    \pgfplotstabletypeset[columns={[index]0,[index]1,[index]2,[index]3,[index]4,[index]5,[index]6},
    columns/0/.style={column name=$\abs{\mathcal{V}}$, fixed},
	columns/1/.style={column name=\AlgRand{}, fixed, precision=1},
	columns/2/.style={column name=\AlgSP{}, fixed, precision=1},
	columns/3/.style={column name=\AlgEDfree{}, fixed, precision=1},
	columns/4/.style={column name=\makecell[t]{BS-\\GDA}, fixed, precision=1},
	columns/5/.style={column name=\AlgAVM{}, fixed, precision=1},
	columns/6/.style={column name=\makecell[t]{Overhead\\(\AlgAVM)}},
	every head row/.style={before row=\hline,after row=\hline},
	every last row/.style={after row=\hline},
	empty cells with={--}]{\mytableTime}
    \label{tab:timeScale}
\end{table}
\begin{table}[t]
\centering
    \caption{SNRs, Random sensor graphs}
    \pgfplotstableread[col sep=comma]{recon_iter_compare_2021_8_19_23_5_snrs_combine.dat}\mytableTime
    \pgfplotstabletypeset[columns={[index]0,[index]1,[index]2,[index]3,[index]4,[index]5},
    columns/0/.style={column name=$\abs{\mathcal{V}}$, fixed},
	columns/1/.style={column name=\AlgRand{}},
	columns/2/.style={column name=\AlgSP{}},
	columns/3/.style={column name=\AlgEDfree{}},
	columns/4/.style={column name=\AlgGersh{}},
	columns/5/.style={column name=\AlgAVM{}},
	every head row/.style={before row=\hline,after row=\hline},
	every last row/.style={after row=\hline},
	empty cells with={--}]{\mytableTime}
    \label{tab:reconScale}
\end{table}
\begin{table}[t]
\centering
    \caption{Execution time(secs.) for sampling, Community graphs}
    \pgfplotstableread[col sep=comma]{recon_iter_compare_2021_9_4_4_31_execution_times_combine.dat}\mytableTime
    \pgfplotstabletypeset[columns={[index]0,[index]1,[index]2,[index]3,[index]4,[index]5,[index]6},
    columns/0/.style={column name=$\abs{\mathcal{V}}$, fixed},
	columns/1/.style={column name=\AlgRand{}, fixed, precision=1},
	columns/2/.style={column name=\AlgSP{}, fixed, precision=1},
	columns/3/.style={column name=\AlgEDfree{}, fixed, precision=1},
	columns/4/.style={column name=\AlgGersh{}, fixed, precision=1},
	columns/5/.style={column name=\AlgAVM{}, fixed, precision=1},
	columns/6/.style={column name=\makecell[t]{Overhead\\(\AlgAVM)}},
	every head row/.style={before row=\hline,after row=\hline},
	every last row/.style={after row=\hline},
	empty cells with={--}]{\mytableTime}
    \label{tab:timeScaleComm}
\end{table}
\begin{table}[t]
\centering
    \caption{SNRs, Community graphs}
    \pgfplotstableread[col sep=comma]{recon_iter_compare_2021_9_4_4_31_snrs_combine.dat}\mytableTime
    \pgfplotstabletypeset[columns={[index]0,[index]1,[index]2,[index]3,[index]4,[index]5},
    columns/0/.style={column name=$\abs{\mathcal{V}}$, fixed},
	columns/1/.style={column name=\AlgRand{}},
	columns/2/.style={column name=\AlgSP{}},
	columns/3/.style={column name=\AlgEDfree{}},
	columns/4/.style={column name=\AlgGersh{}},
	columns/5/.style={column name=\AlgAVM{}},
	every head row/.style={before row=\hline,after row=\hline},
	every last row/.style={after row=\hline},
	empty cells with={--}]{\mytableTime}
    \label{tab:reconScaleComm}
\end{table}

\comment{
\begin{table}[t]
\centering
    \caption{Execution time(secs.) for sampling, Random senor graphs with 20 nearest neighbors}
    \pgfplotstableread[col sep=comma]{2020_11_20_10_37_alg_times_iter_max_sel.dat}\mytableTime
    \pgfplotstableset{
    create on use/newmax/.style={
    create col/expr={(\thisrow{3}+\thisrow{4}+\thisrow{5}+\thisrow{6}+\thisrow{7})/5.0}}
    }
    \pgfplotstableset{
    create on use/overhead/.style={
    create col/expr={(\thisrow{8}/\thisrow{1})}}
    }
    \pgfplotstabletypeset[columns={[index]0,[index]1,[index]2,[index]11,[index]9,[index]8,overhead},
    columns/newmax/.style={column name=\AlgEDfree{}},
    columns/0/.style={column name=$\abs{\mathcal{V}}$},
	columns/1/.style={column name=\AlgRand{}},
	columns/2/.style={column name=\AlgSP{}},
	columns/11/.style={column name=\AlgEDfree{}},
	columns/9/.style={column name=\AlgGersh{}},
	columns/8/.style={column name=\AlgAVM{}},
	columns/overhead/.style={column name=Overhead(\AlgAVM)},
	every head row/.style={before row=\hline,after row=\hline},
	every last row/.style={after row=\hline},
	empty cells with={--}]{\mytableTime}
	\label{tab:timeScale}
\end{table}
\begin{table}[t]
\centering
    \caption{Execution time(secs.) for sampling, Community graphs with 10 communities}
    \pgfplotstableread[col sep=comma]{2020_12_10_2_38_alg_times_iter_max_sel.dat}\mytableTimeComm
    \pgfplotstableset{
    create on use/newmax/.style={
    create col/expr={(\thisrow{3}+\thisrow{4}+\thisrow{5}+\thisrow{6}+\thisrow{7})/5.0}}
    }
    \pgfplotstableset{
    create on use/overhead/.style={
    create col/expr={(\thisrow{8}/\thisrow{1})}}
    }
    \pgfplotstabletypeset[columns={[index]0,[index]1,[index]2,[index]11,[index]9,[index]8,overhead},
    columns/newmax/.style={column name=\AlgEDfree{}},
    columns/0/.style={column name=$\abs{\mathcal{V}}$},
	columns/1/.style={column name=\AlgRand{}},
	columns/2/.style={column name=\AlgSP{}},

	columns/11/.style={column name=\AlgEDfree{}},
	columns/9/.style={column name=\AlgGersh{}},
	columns/8/.style={column name=\AlgAVM{}},
	columns/overhead/.style={column name=Overhead(\AlgAVM)},
	every head row/.style={before row=\hline,after row=\hline},
	every last row/.style={after row=\hline},
	empty cells with={--}]{\mytableTimeComm}
	\label{tab:timeScaleComm}
\end{table}
}

In our comparison we use the implementations of  \AlgRand{}, \AlgSP{}, \AlgEDfree{} and \AlgGersh{} distributed by their respective authors, and run them on  \softMat{} 2019b along with our proposed algorithm.  We could possibly improve on the existing implementations using specialized packages for functionalities such as  eigendecomposition, but to remain faithful to the original papers we use their codes with minimal changes. Wherever the theoretical algorithms in the papers conflict with the provided implementations we go with the implementation since that was presumably what the algorithms in the papers were timed on.

To minimize the effect of other processes running at different times, we run the sampling algorithms in a round robin fashion. We do this process for multiple iterations and different graph topologies. We time the implementations on an Ubuntu HP Z840 Workstation, which naturally has plenty of background processes running. The changes in their resource consumption affects our timing. It is impossible to stop virtually all background processes, so we try to reduce their impact in two ways. We iterate over each sampling scheme 50 times and report the averages. Secondly, instead of completing iterations over the sampling schemes one by one, we call all the different sampling schemes in the same iteration. These minor precautions help us mitigate any effects of background processes on our timing.

For 500-8,000 graph sizes, we observe that as the size of the graph increases, \AlgAVM{} is only slightly slower compared to \AlgRand{}. It is orders of magnitude faster than \AlgSP{}, \AlgEDfree{} and the \AlgGersh{} algorithm --- see Tables \ref{tab:timeScale} and \ref{tab:timeScaleComm}, while having a very small impact on the SNR of the reconstructed signal --- Tables \ref{tab:reconScale} and  \ref{tab:reconScaleComm}. The execution time also scales very well with respect to the graph size Fig. \ref{fig:timeAlg}.

We also report the relative execution times using the overhead rate, the ratio of execution times of two algorithms. We compute this overhead for the \AlgAVM{} algorithm vs the \AlgRand{} algorithm pair.
\begin{align*}
&\text{Overhead(\AlgAVM)} \\&= \frac{\text{Execution time of proposed \AlgAVM{} algorithm}}{\text{Execution time of \AlgRand{}}}
\end{align*}
Most existing graph sampling algorithms consider \AlgRand{} as the fastest sampling algorithm and benchmark against it. By specifying our overhead rates with respect to \AlgRand{} we can indirectly compare our algorithm with myriad others without doing so one by one. We report these factors for various graph sizes in Tables \ref{tab:timeScale} and \ref{tab:timeScaleComm}.

\comment{
\begin{table}[t]
\begin{minipage}{0.48\linewidth}\centering
    \caption{SNRs of the reconstructed signals on random sensor graphs with 20 nearest neighbors}
    \label{tab:reconScale}
    \pgfplotstableread{2020_11_20_10_37_sz_alg_snrs.dat}\mytable
    \pgfplotstableset{
    create on use/newmax/.style={
    create col/expr={max(\thisrow{3},\thisrow{4},\thisrow{5},\thisrow{6},\thisrow{7})}}
    }
    \pgfplotstabletypeset[columns={[index]0,[index]9,[index]2,newmax,[index]10,[index]8},
    columns/newmax/.style={column name=\AlgEDfree{}},columns/0/.style={column name=$\abs{\mathcal{V}}$},
	columns/9/.style={column name=\AlgRand{}},
	columns/2/.style={column name=\AlgSP{}},
	columns/10/.style={column name=\AlgGersh{}},
	columns/8/.style={column name=\AlgAVM{}},
	every head row/.style={before row=\hline,after row=\hline},
	every last row/.style={after row=\hline}]{\mytable}
	\end{minipage}
\begin{minipage}{0.48\linewidth}\centering
    \caption{SNRs of the reconstructed signals on community graphs with 10 communities}
	\label{tab:reconScaleComm}
    \pgfplotstableread{2020_12_10_2_38_sz_alg_snrs.dat}\mytable
    \pgfplotstableset{
    create on use/newmax/.style={
    create col/expr={max(\thisrow{3},\thisrow{4},\thisrow{5},\thisrow{6},\thisrow{7})}}
    }
    \pgfplotstabletypeset[columns={[index]0,[index]9,[index]2,newmax,[index]10,[index]8},
    columns/newmax/.style={column name=\AlgEDfree{}},columns/0/.style={column name=$\abs{\mathcal{V}}$},
	columns/9/.style={column name=\AlgRand{}},
	columns/2/.style={column name=\AlgSP{}},
	columns/10/.style={column name=\AlgGersh{}},
	columns/8/.style={column name=\AlgAVM{}},
	every head row/.style={before row=\hline,after row=\hline},
	every last row/.style={after row=\hline}]{\mytable}
	\end{minipage}
\end{table}
}

To justify the increase in the speed of execution compared to the slight decrease in the SNR, we plot the SNR versus the Execution time for the different algorithms we compared. Ideally we want an algorithm with fast execution and good SNR. From Fig. \ref{fig:tsnr_scatter}, \ref{fig:tsnr_scatter_comm} we see that our algorithm fits that requirement very well.

For experiments on graph sizes 500-8000, we reduced the variability in the execution time and SNR observations by reporting means over 50 randomly initialized graph and signal realizations. However, for graphs the size of 50,000 and 100,000, running 50 realizations of each sampling strategy is impractical because of the time required. To determine if 10 realizations are sufficient, we compute the ratio of standard deviation to the mean for execution times and SNRs. Except for one setting of \AlgGersh{} where the ratio is 0.26, it does not exceed 0.12 in all experiments. So for graphs of size 50,000 and 100,000, we report the execution times and SNRs averaged over 10 randomly initialized realizations in Tables \ref{tab:timeScale}, \ref{tab:reconScale}, \ref{tab:timeScaleComm}, and \ref{tab:reconScaleComm}.

In those tables of algorithm comparisons, we look for scalable algorithms that have low execution times and high SNRs, or which at least finish execution within our limits as mentioned in Section \ref{sec:effectGraphSizes}. For graphs with size 50,000, \AlgRand{}, \AlgEDfree{}, \AlgGersh{}, and \AlgAVM{}, finish within our limits, while for graphs with size 100,000 only  \AlgRand{}, and \AlgAVM{} can finish. We fill the table entries corresponding to the algorithms that did not finish with a $-$. Among the algorithms that finish, \AlgAVM{} provides up to $20\%$ improvement in the SNR from reconstructed signal over that of \AlgRand{}, and at most $0.46\%$ decrease compared to other algorithms, although the SNRs are low compared to smaller graph sizes because of the \ProjCS{}-based reconstruction. The execution times of \AlgAVM{} are at least $60\%$ less and as much as $93\%$ less compared to other state-of-the-art algorithms except \AlgRand{}. The overhead of \AlgAVM{} relative to \AlgRand{} is larger compared to smaller graph sizes because of the 5000 sampled vertices for 50,000 and 100,000 graph sizes as opposed to 150 samples for 500 to 8,000 graph sizes. We see a further increase of about $62\%$ in the execution time for community graphs due to the larger number of edges. So for graph sizes 50,000 and 100,000, \AlgAVM{} not only finishes execution within our limits, but maintains SNR at par with other algorithms for two different graph topologies, while being the fastest algorithm second only to \AlgRand{}.

Of course with different graph types, the SNR vs Execution time performance of \AlgAVM{} may vary. But what we always expect this algorithm to deliver is execution times similar to \AlgRand{} while having a significant improvement in the SNR. In a way, this algorithm bridges the gap between existing Eigendecomposition-free algorithms and \AlgRand{}.

\begin{figure}[t]
      \pgfplotstableread[col sep=comma]{recon_iter_compare_time_vs_samples_useful_2021_7_23_21_4_exchanged_alg_times_vs_samples.dat}\timeVsSamplesSmall
    \centering
   \begin{tikzpicture}
  \begin{loglogaxis}[
  	width = 0.97\linewidth,
    cycle list name=my col,
    xlabel={Number of samples},
    ylabel={Execution time (secs.)},
    legend pos=north west,
    legend entries = {\AlgRand{}, \AlgSP{}, \AlgEDfree{}, \AlgGersh{}, \AlgAVM{}}
    ]
    \pgfplotstablegetcolsof{\timeVsSamplesSmall}
    \pgfmathparse{\pgfplotsretval-1}
    \foreach \i in {1, 2, 3, 4, 5}{
      \addplot table[x index=0,y index=\i] {\timeVsSamplesSmall};
    }
  \end{loglogaxis}
\end{tikzpicture}
    \caption{Random sensor graphs with 8192 vertices and 20 nearest neighbour connections.}
    \label{fig:randSenExeNSampAll}
\end{figure}
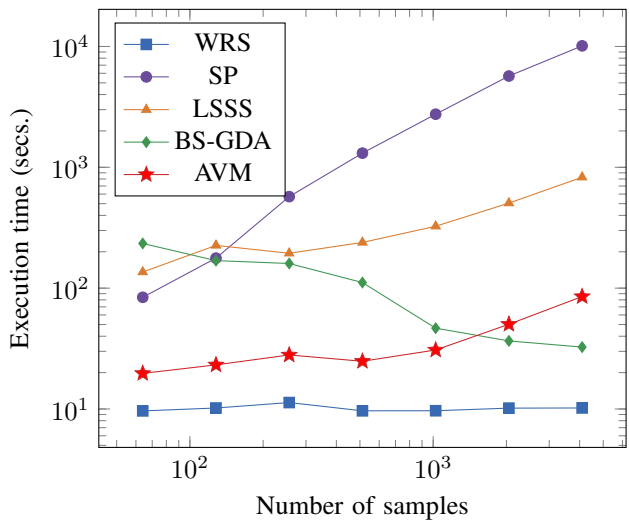
\begin{figure}[t]
\pgfplotstableread[col sep=comma]{recon_iter_compare_time_vs_samples_useful_2021_7_24_20_19_exchanged_alg_times_vs_samples.dat}\timeVsSamplesSmall
    \centering
   \begin{tikzpicture}

  \begin{loglogaxis}[
  	width = 0.97\linewidth,
    cycle list name=my col,
    xlabel={Number of samples},
    ylabel={Execution time (secs.)},
    legend pos=north west,
    legend entries = {\AlgRand{}, \AlgSP{}, \AlgEDfree{}, \AlgGersh{}, \AlgAVM{}}
    ]
    \pgfplotstablegetcolsof{\timeVsSamplesSmall}
    \pgfmathparse{\pgfplotsretval-1}
    \foreach \i in {1, 2, 3, 4, 5}{
      \addplot table[x index=0,y index=\i] {\timeVsSamplesSmall};
    }
  \end{loglogaxis}
\end{tikzpicture}
\caption{Community graphs with 8192 vertices and 10 communities.}
    \label{fig:commExeNSampAll}
\end{figure}
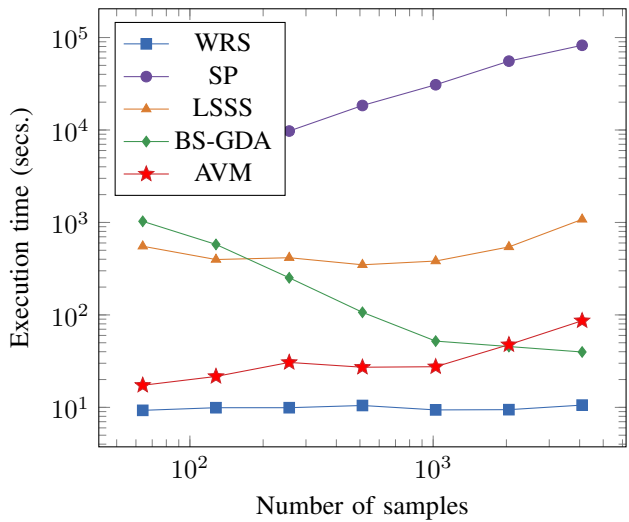

\begin{figure}[t]
    \pgfplotstableread[col sep=comma]{recon_iter_compare_time_vs_samples_useful_2021_7_23_21_15_exchanged_alg_times_vs_samples.dat}\timeVsSamplesLarge
    \centering
   \begin{tikzpicture}

  \begin{axis}[
  	width = 0.97\linewidth,
    cycle list name=myavmcol,
    xlabel={Number of samples},
    ylabel={Execution time (secs.)},
    legend pos=south east,
    legend entries = {\AlgAVM{}}
    ]
    \pgfplotstablegetcolsof{\timeVsSamplesLarge}
    \pgfmathparse{\pgfplotsretval-1}
    \foreach \i in {3}{
      \addplot table[x index=0,y index=\i] {\timeVsSamplesLarge};
    }
  \end{axis}
\end{tikzpicture}
    \caption{Random sensor graphs with 50,000 vertices and 20 nearest neighbour connections.}
    \label{fig:randSenExeNSampAVM}
\end{figure}
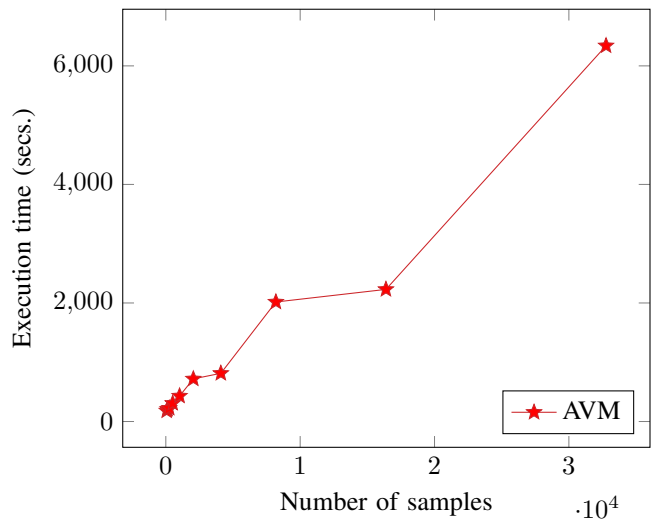
\begin{figure}[t]
      \pgfplotstableread[col sep=comma]{recon_iter_compare_time_vs_samples_useful_2021_7_24_19_24_exchanged_alg_times_vs_samples.dat}\timeVsSamplesLarge
    \centering
   \begin{tikzpicture}

  \begin{axis}[
  	width = 0.97\linewidth,
    cycle list name=myavmcol,
    xlabel={Number of samples},
    ylabel={Execution time (secs.)},
    legend pos=south east,
    legend entries = {\AlgAVM{}},
    scaled y ticks=false]
    \pgfplotstablegetcolsof{\timeVsSamplesLarge}
    \pgfmathparse{\pgfplotsretval-1}
    \foreach \i in {3}{
      \addplot table[x index=0,y index=\i] {\timeVsSamplesLarge};
    }
  \end{axis}
\end{tikzpicture}
    \caption{Community graphs with 50,000 vertices and 10 communities.}
    \label{fig:commExeNSampAVM}
\end{figure}
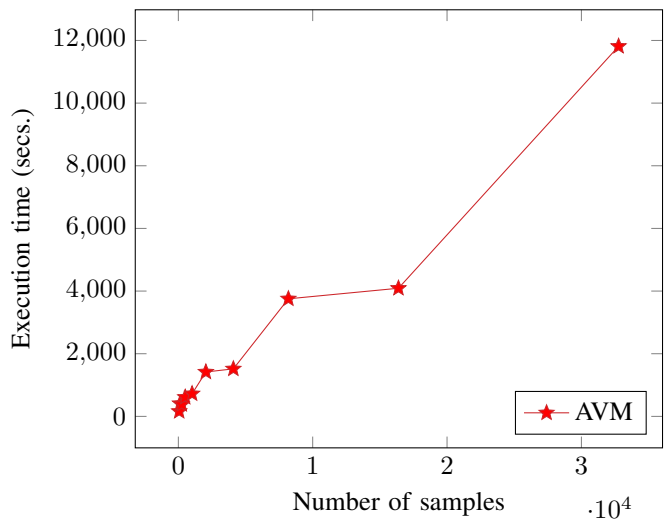

\subsection{Effect of number of samples on execution time}
A fast graph signal sampling algorithm should be scalable with respect to the number of sampled vertices. To experiment and compare the scalability of different graph signal sampling algorithms, we set up sensor graphs with 20 nearest neighbors, and community graphs with 10 communities, both of size 8192. For each of the graph types, we sample a varying number of vertices ranging from 64 to 4096 samples in multiples of 2 and measure the corresponding execution times. We display the results of this scalability experiment as execution times versus the number of samples in plots.

Doing this experiment for different sampling methods helps us compare their robustness to a varying number of samples. In figures \ref{fig:randSenExeNSampAll} and \ref{fig:commExeNSampAll}, we observe the effect of varying the number of sampled vertices on the execution times of the algorithms \AlgRand{}, \AlgSP{}, \AlgEDfree{}, \AlgGersh{}, and \AlgAVM{}. With an increase in the number of sampled vertices, \AlgRand{}'s execution time does not show a significant dependence, \AlgSP{} and \AlgEDfree{}'s execution time increases, whereas \AlgGersh{}'s execution time decreases. The minimal dependence of \AlgRand{}'s execution time is due to the computationally cheap random sampling step once the graph coherences are computed by Function \ref{alg:cumCoh}. The increase in the execution time of \AlgSP{} and \AlgEDfree{} is due to the are extra computations needed for sampling a new vertex. However for \AlgGersh{}, we believe that the decrease in the execution time is due to the decrease in the coverage set sizes with the increase in the number of requested samples. We see that \AlgAVM{}'s execution time is close to \AlgRand{}'s for smaller sampling sets, and it increases with an increase in the number of requested samples. Except for sample set sizes the order of graph size, \AlgAVM{} has the second-lowest execution times for a range of sample set sizes.

Since we saw that the execution time of \AlgAVM{} increases with the number of sampled vertices, we wish to further assess the rate of increase of the execution time. For this purpose, we consider sensor graphs with 20 nearest neighbors and community graphs with 10 communities, both of size 50,000. The number of samples requested range from 64 samples to 32768 samples in multiples of 2. We display the results of this experiment as execution times versus the number of samples plots limited to \AlgAVM{}.

In Figures \ref{fig:randSenExeNSampAVM} and \ref{fig:commExeNSampAVM}, apart from minor fluctuations, we see that the execution times vs the number of samples data points lie on a straight line for an order of $10^4$ range in the number of sampled vertices. This observation agrees with our theoretical analysis of \AlgAVM{} complexity in Section \ref{sec:complexNSamp} explaining an additional $O(s\abs{\mathcal{E}}d)$ dependence on the number of samples compared to \AlgRand{}.

\section{Conclusion}
Most sampling schemes perform reasonably well when dealing with perfectly bandlimited signals. However, in the presence of noise or the signal not being perfectly bandlimited, some schemes perform much better. In the scenario that only a limited number of samples can be chosen, we would like to use an algorithm that can perform well without requiring computationally expensive procedures such as eigendecomposition.

The algorithms presented in this paper rely on the intuition of looking at the problem as maximizing the volume of the parallelepiped formed by the lowpass signals corresponding to the sampled vertices. This helps us to develop intuitive and fast graph signal sampling algorithms. The volume maximization framework also helped to connect various existing algorithms.

The sampling algorithm we developed reaches speeds achieved by \AlgRand{}, but with a large improvement in reconstruction accuracies. The accuracies are comparable with other contemporary algorithms but at the same time provide significant improvements in speed.

\section{Acknowledgements}
This work is supported in part by NSF under grants CCF-1410009, CCF-1527874, and CCF-2009032 and by a gift from Tencent.

\appendices
\numberwithin{equation}{section}
\counterwithin{figure}{section}
\counterwithin{table}{section}
\section{Proof of eigenvector convergence}
\begin{lemma}
There exists a signal $\boldsymbol{\phi}$ in the orthogonal subspace to $\boldsymbol{\psi}^*$ with $\boldsymbol{\phi}(\Smiter{})=\mathbf{0}, \norm{\boldsymbol{\phi}}=1$ whose out of bandwidth energy is a minimum value $c_0\neq 0$.
\end{lemma}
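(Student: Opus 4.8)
The plan is to produce $\boldsymbol{\phi}$ as a minimizer of out-of-bandwidth energy over the compact set of unit-norm signals that vanish on $\Smiter{}$ and are orthogonal to $\boldsymbol{\psi}^*$; existence of such a minimizer is immediate, and the only real content is to check that the minimum value $c_0$ is strictly positive. Here ``out of bandwidth'' refers to the frequencies above $m+1$, consistent with $\boldsymbol{\psi}^*\in\spn(\mathbf{u}_1,\dots,\mathbf{u}_{m+1})$, so for a signal $\boldsymbol{\phi}$ I would take its out-of-bandwidth energy to be $E(\boldsymbol{\phi})=\sum_{i=m+2}^{n}\inner{\boldsymbol{\phi}}{\mathbf{u}_i}^2$.

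First I would identify the space $\mathcal{N}$ of signals that are simultaneously bandlimited to $\{1,\dots,m+1\}$ and zero on $\Smiter{}$. Writing such a signal as $\mathbf{U}_{\mathcal{V},\{1,\dots,m+1\}}\mathbf{c}$, the vanishing condition becomes $\mathbf{U}_{\Smiter{},\{1,\dots,m+1\}}\mathbf{c}=\mathbf{0}$; since $\mathbf{U}_{\Smiter{},1:m}$ is full rank by hypothesis, the $m\times(m+1)$ matrix $\mathbf{U}_{\Smiter{},\{1,\dots,m+1\}}$ has rank exactly $m$, so its null space is one-dimensional, and as $\mathbf{U}_{\mathcal{V},\{1,\dots,m+1\}}$ has full column rank this gives $\dim\mathcal{N}=1$. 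By its defining property $\boldsymbol{\psi}^*$ is a nonzero element of $\mathcal{N}$, hence spans it. Next, because the gap hypothesis $\sigma_{m+2}>\sigma_{m+1}$ forces $n\geq m+2$, the linear subspace $\{\boldsymbol{\phi}:\boldsymbol{\phi}(\Smiter{})=\mathbf{0},\ \inner{\boldsymbol{\phi}}{\boldsymbol{\psi}^*}=0\}$ has dimension $n-m-1\geq1$, so its intersection $\mathcal{C}$ with the unit sphere is nonempty and compact; the continuous function $E(\cdot)$ attains a minimum $c_0$ on $\mathcal{C}$ at some $\boldsymbol{\phi}$, which already meets every requirement in the statement except $c_0\neq0$. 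Finally, to rule out $c_0=0$: if $E(\boldsymbol{\phi})=0$ then $\boldsymbol{\phi}$ has no energy outside the first $m+1$ frequencies and vanishes on $\Smiter{}$, so $\boldsymbol{\phi}\in\mathcal{N}=\spn(\boldsymbol{\psi}^*)$, i.e.\ $\boldsymbol{\phi}=\alpha\boldsymbol{\psi}^*$; but $\inner{\boldsymbol{\phi}}{\boldsymbol{\psi}^*}=0$ and $\boldsymbol{\psi}^*\neq\mathbf{0}$ force $\alpha=0$, contradicting $\norm{\boldsymbol{\phi}}=1$. Hence $c_0>0$.

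The main obstacle — in fact the only step that is not a routine compactness-and-continuity observation — is the dimension count for $\mathcal{N}$: converting the hypothesis ``$\mathbf{U}_{\Smiter{},1:m}$ is full rank'' into ``$\mathcal{N}$ is exactly one-dimensional'', neither collapsing to $\{\mathbf{0}\}$ nor being larger. This is precisely what makes the orthogonality constraint against $\boldsymbol{\psi}^*$ push any feasible $\boldsymbol{\phi}$ out of the bandlimited subspace, and hence what makes $c_0$ strictly positive; the inequality $n\geq m+2$ needed to keep $\mathcal{C}$ nonempty comes for free from the spectral-gap assumption.
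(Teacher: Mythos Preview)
Your proof is correct and follows the same strategy as the paper: minimize the continuous out-of-bandwidth energy over the compact feasible set, then rule out a zero minimum by arguing that any bandlimited signal vanishing on $\Smiter{}$ must be a scalar multiple of $\boldsymbol{\psi}^*$. If anything, your version is tighter --- you explicitly derive $\dim\mathcal{N}=1$ from the full-rank hypothesis on $\mathbf{U}_{\Smiter{},1:m}$ and verify nonemptiness of the feasible set via $n\geq m+2$, points the paper's proof asserts or leaves implicit.
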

\begin{proof}
The set of signals $\{\boldsymbol{\phi}:\boldsymbol{\phi}(\Smiter{})=\mathbf{0}, \norm{\boldsymbol{\phi}}=1\}$ is a closed set. Let $\begin{pmatrix} x_1 & \cdots & x_n\end{pmatrix}^\transp$ be in the set for any $\epsilon$. Then $\begin{pmatrix} x_1 + \epsilon/2 & x_2 & \cdots & x_n\end{pmatrix}^\transp$ is in the $\epsilon$ neighborhood. Distance exists because it is a normed vector space. That vector does not have $\norm{}=1$ so it is not in the set. So for every $\epsilon$-neighborhood $\exists$ a point not in the set. So every point is a limit point and the set is a closed set.

Out of bandwidth energy is a continuous function on our set. Let $\mathbf{v}_1, \mathbf{v}_2$ be such that $\mathbf{v}_1, \mathbf{v}_2 \perp \boldsymbol{\psi}^*$ and $\mathbf{v}_1(\Smiter{}) = \mathbf{0}, \mathbf{v}_2(\Smiter{}) = \mathbf{0}$. Let us suppose that the Fourier coefficients for $\mathbf{v}_1$ and $\mathbf{v}_2$ are $(\alpha_1, \cdots, \alpha_n)^\transp$ and $(\beta_1, \cdots, \beta_n)^\transp$. Then we want
\begin{equation}
    \sum_{i=m+2}^n (\alpha_i^2 - \beta_i^2) < \epsilon
    \label{eq:ene_diff}
\end{equation}
for some $\delta$ where $\norm{\mathbf{v}_1-\mathbf{v}_2}<\delta$. We can show that \eqref{eq:ene_diff} holds when $\delta=\epsilon/2$.

Since the set is closed and the function is continuous on the set, the function attains a minimum value. Minimum value cannot be zero because there is a unique signal $\boldsymbol{\psi}^*$ with that property, and we are looking in a space orthogonal to $\boldsymbol{\psi}^*$. So there is a signal with minimum out of bandwidth energy of $c_0$ where $c_0>0$.
\end{proof}
\label{app:conv}
Next, this appendix shows the proof for the $l_2$ convergence from Theorem \ref{thm:eigConverge}.
\begin{proof}
Let us look at a particular step where we have already selected $\Smiter{}$ vertices. The solutions to the following optimization problems are equivalent.
\begin{equation*}
    \boldsymbol{\psi}^*_k = \argminA_{\boldsymbol{\psi}} \frac{\boldsymbol{\psi}^\transp \mathbf{L}^k \boldsymbol{\psi}}{\boldsymbol{\psi}^\transp \boldsymbol{\psi}} = \argminA_{\boldsymbol{\psi}, \norm{\boldsymbol{\psi}}=1} \boldsymbol{\psi}^\transp \mathbf{L}^k \boldsymbol{\psi}.
\end{equation*}
Therefore, we will consider solutions with $\norm{\boldsymbol{\psi}}=1$.

Let us consider the space of our signals. $\boldsymbol{\phi}(\Smiter{})=\mathbf{0}, \boldsymbol{\phi}\in \mathcal{R}^n$ is a vector space. Dimension of this vector space is $n-m$.

For any $k$, let us represent our solution for $k$ as $\boldsymbol{\psi} = \alpha_1 \boldsymbol{\psi}^* + \alpha_2 \boldsymbol{\psi}^\perp$. Here $\boldsymbol{\psi}^\perp$ is a vector in the orthogonal subspace to our vector $\boldsymbol{\psi}^*$. We can do this because we have a vector space and it has finite dimensions. One condition on our signal is that $\alpha_1^2 + \alpha_2^2 = 1$, $\norm{\boldsymbol{\psi}^*}=1, \norm{\boldsymbol{\psi}^\perp}=1$. Furthermore, we know the Fourier transform of our two signal components.
\begin{align*}
    \boldsymbol{\psi}^* \xrightarrow{\mathcal{F}} &\mathbf{U}^\transp\boldsymbol{\psi}^* = \begin{bmatrix}
    \gamma_1&
    \cdots&
    \gamma_{m+1}&
    0&
    \cdots&
    0
    \end{bmatrix}^\transp = \boldsymbol{\gamma},\\
    \boldsymbol{\psi}^\perp \xrightarrow{\mathcal{F}} &\mathbf{U}^\transp\boldsymbol{\psi}^\perp = \begin{bmatrix}
    \beta_1&
    \cdots&
    \beta_n&
    \end{bmatrix}^\transp = \boldsymbol{\beta}.
\end{align*}
Our signal can be written as
\begin{equation*}
    \begin{bmatrix}
    \boldsymbol{\psi}^* & \boldsymbol{\psi}^\perp
    \end{bmatrix}\begin{bmatrix}
    \alpha_1\\
    \alpha_2
    \end{bmatrix} = \begin{bmatrix}
    \boldsymbol{\psi}^* & \boldsymbol{\psi}^\perp
    \end{bmatrix} \boldsymbol{\alpha}.
\end{equation*}
Our objective function becomes the following:
\begin{equation*}
\begin{aligned}
    &\boldsymbol{\alpha}^\transp \begin{bmatrix}
    \boldsymbol{\psi}^{*T} \\ \boldsymbol{\psi}^{\perp T}
    \end{bmatrix} \mathbf{L}^k \begin{bmatrix}
    \boldsymbol{\psi}^* & \boldsymbol{\psi}^\perp
    \end{bmatrix} \boldsymbol{\alpha}\\&= \boldsymbol{\alpha}^\transp \begin{bmatrix}
    \boldsymbol{\gamma}^\transp \\ \boldsymbol{\beta}^\transp
    \end{bmatrix}\boldsymbol{\Sigma}^k \begin{bmatrix}
    \boldsymbol{\gamma} & \boldsymbol{\beta}
    \end{bmatrix} \boldsymbol{\alpha}\\
    \\&= \boldsymbol{\alpha}^\transp \begin{bmatrix}
    \sum_{i=1}^{m+1} \gamma_i^2 \sigma_i^k & \sum_{i=1}^{m+1} \gamma_i \beta_i \sigma_i^k\\
    \sum_{i=1}^{m+1} \gamma_i \beta_i \sigma_i^k & \sum_{i=i}^n\beta_i^2 \sigma_i^k
    \end{bmatrix} \boldsymbol{\alpha}= \boldsymbol{\alpha}^\transp \begin{bmatrix}
    a & b\\
    b & d
    \end{bmatrix} \boldsymbol{\alpha}.
\end{aligned}
\end{equation*}
In the last equation $a, b, c, d$ are just convenient notations for the scalar values in the $2\times 2$ matrix. Note that $a,d>0$ because $\sigma_i>0$ and the expression is then a sum of positive quantities. Note that we want to minimize the objective function subject to the constraint $\norm{\alpha}=1$. We solve this optimization problem by a standard application of the KKT conditions \cite{boyd2004convex}.

The Lagrangian function corresponding to our constrained minimization problem is as follows:
\begin{equation*}
    L(\boldsymbol{\alpha}, \lambda) = \boldsymbol{\alpha}^\transp \begin{bmatrix}
    a & b\\
    b & d
    \end{bmatrix} \boldsymbol{\alpha} + \lambda(\boldsymbol{\alpha}^\transp \boldsymbol{\alpha} - 1).
\end{equation*}
The solution which minimizes this objective function is the eigenvector of the matrix $\begin{bmatrix} a & b \\ c & d\end{bmatrix}$ with the minimum eigenvalue. To prove that we take the gradient of the equation with respect to $\boldsymbol{\alpha}$ and put it to $\mathbf{0}$.

This gives us two first order equations.
\begin{equation*}
    a\alpha_1 + b\alpha_2 + \lambda \alpha_1 = 0, \qquad
    b\alpha_1 + d\alpha_2 + \lambda \alpha_2 = 0.
\end{equation*}
$\alpha_1 = 0$ implies $\alpha_2 = 0$ unless $b= 0$ and vice versa. Both $\alpha_1$ and $\alpha_2$ cannot be zero at the same time otherwise our solution does not lie in our domain of unit length vectors. However either of $\alpha_1$ or $\alpha_2$ can be $0$ only if $b=0$. If $b=0$, for large $k$ the solution is given by $\alpha_2=0, \alpha_1=1$ because we show next that $a/d$ can be made less than $1/2$ for $k>k_0$. We now analyze the case where $b\neq 0$ and so $a+\lambda \neq 0$ and $d+\lambda \neq 0$. Writing $\alpha_2$ in terms of $\alpha_1$ for both the equations we get:
\begin{equation*}
    \alpha_2 = \frac{-(a+\lambda)}{b}\alpha_1,\quad \alpha_2 = \frac{-b}{d+\lambda}\alpha_1.
\end{equation*}
Equating both the expressions for $\alpha_2$ (and assuming $\alpha_1\neq 0$)gives us a quadratic with two solutions. Since $a,d>0$ the positive sign gives us the $\lambda$ with lower magnitude.
\begin{equation*}
    \lambda = \frac{-(a+d)+ \sqrt{(a-d)^2 + 4 b^2}}{2}.
\end{equation*}
We now use the condition that the solution has norm one. Solution of this gives us a value for $\alpha_2$.
\begin{equation*}
\alpha_2 = \mp \frac{a-d + \sqrt{(a-d)^2 + 4b^2}}{\sqrt{\roundB{a-d + \sqrt{(a-d)^2+4b^2}}^2 + 4b^2}}.
\end{equation*}
We look at the absolute value of the $\alpha_2$.
\begin{equation*}
    \abs{\alpha_2} = \frac{2\abs{b}}{\sqrt{(-(a-d)+\sqrt{(a-d)^2+4b^2})^2+(2b)^2}}
\end{equation*}

Let us find a $k_0$ such that $\abs{b}/d<\epsilon/2$($\epsilon>0$) for all $k>k_0$. This will also make $a/d<1/2$. This will help us make the entire expression less than $\epsilon$ for $k>k_0$ \eqref{eq:kvalA}.
We upper bound $b$ in the following way.
\begin{align*}
    \abs{b} &= \abs{\sum_{i=1}^{m+1} \beta_i \gamma_i \sigma^k}\\
    &\leq \sqrt{\sum_{i=1}^{m+1}\beta_i^2 \gamma_i^2} \sqrt{\sum_{i=1}^{m+1}\sigma_i^{2k}}\\
    &\leq 1. \sqrt{m+1} \sigma_{m+1}^k = b_1.
\end{align*}
Now we know that the least possible value of $d$ is $c_0 \sigma_{m+2}^k$. So when $k>k_0$ we get the following upper bound for $\abs{b}/d$ in terms of $\epsilon$:
\begin{equation*}
    \frac{\abs{b}}{d} \leq \frac{\sqrt{m+1} \sigma_{m+1}^k}{c_0 \sigma_{m+2}^k}.
\end{equation*}
 We want this to be less than $\epsilon/2$, which gives us our condition on $k$.
 \begin{align*}
     \frac{\sqrt{m+1} \sigma_{m+1}^k}{c_0 \sigma_{m+2}^k} < \frac{\epsilon}{2}\\
 k > \ceil*{\frac{\log(m+1)/2+ \log 1/\epsilon + \log(2/c_0)}{\log\frac{\sigma_{m+2}}{\sigma_{m+1}}}}. \numberthis \label{eq:kval}
 \end{align*}
$a/d$ also admits a similar analysis.
\begin{align*}
    \frac{a}{d} &\leq \frac{\sigma_{m+1}^k}{c_0 \sigma_{m+2}^k}\\
    k &> \ceil*{\frac{\log (2/c_0)}{\log\frac{\sigma_{m+2}}{\sigma_{m+1}}}}. \numberthis \label{eq:kvalA}
\end{align*}

Since this value of $k$ is equal or lesser than the value of $k$ required for $\abs{b}/d<\epsilon/2$, for our theorem we will take the value \eqref{eq:kval}.
When $d$ divides both the numerator and denominator of the equation it gives us the expressions we need in terms of $a/d$ and $\abs{b}/d$.
\begin{align*}
    \abs{\alpha_2}&=\frac{\abs{2b/d}}{\sqrt{\roundB{1-a/d+\sqrt{(a/d-1)^2 + 4(b/d)^2}}^2 + (2b/d)^2}}\\
    &< \frac{\epsilon}{\abs{1-a/d + \sqrt{(1-a/d)^2+\epsilon^2}}}\\
    &< \frac{\epsilon}{\abs{2(1-a/d)}}< \epsilon.
\end{align*}

This implies that as $k$ increases the coefficient of out-of-bandwidth component goes to zero. Because the out-of bandwidth signal has finite energy, the signal energy goes to zero as $\alpha_2 \rightarrow 0$. Whether $\boldsymbol{\psi}^*_k$ converges to $\boldsymbol{\psi}^*$ or $-\boldsymbol{\psi}^*$ is a matter of convention. Hence as $k \rightarrow \infty$, $\boldsymbol{\psi}^*_k \rightarrow \boldsymbol{\psi}^*$.
\end{proof}

\section{Justification for ignoring target bandwidth while sampling}
\label{sec:ignoreBand}
We know that selecting the right $\mathbf{U}_{\mathcal{S}\recF{}}$ matrix is essential to prevent a blow-up of the error while reconstructing using \eqref{eq:recon}. In practice for reconstruction the bandwidth is $\bwidth{}\leq s$. However, for our \AlgAVM{} sampling algorithm we chose the bandwidth to be $\abs{\recR{}} = s$ instead. We next address why that is a logical choice with respect to D-optimality.

The matrix $\mathbf{U}_{\mathcal{S}\mathcal{S}}^\transp \mathbf{U}_{\mathcal{S}\mathcal{S}}$ is positive definite following from our initial condition of the set $\mathcal{S}$ being a uniqueness set. This provides us with the needed relations between determinants. We can see that  $\mathbf{U}_{\mathcal{S}\recF{}}^\transp \mathbf{U}_{\mathcal{S}\recF{}}$ is a submatrix of $\mathbf{U}_{\mathcal{S}\mathcal{S}}^\transp \mathbf{U}_{\mathcal{S}\mathcal{S}}$.
\begin{equation*}
    \mathbf{U}_{\mathcal{S}\mathcal{S}}^\transp \mathbf{U}_{\mathcal{S}\mathcal{S}} = \begin{bmatrix}\mathbf{U}_{\mathcal{S}\recF{}}^\transp \mathbf{U}_{\mathcal{S}\recF{}} & \mathbf{U}_{\mathcal{S}\recF{}}^\transp \mathbf{U}_{\mathcal{S},\bwidth{}+1:s}\\ \mathbf{U}_{\mathcal{S},\bwidth{}+1:s}^\transp\mathbf{U}_{\mathcal{S}\recF{}} & \mathbf{U}_{\mathcal{S},\bwidth{}+1:s}^\transp\mathbf{U}_{\mathcal{S},\bwidth{}+1:s}\end{bmatrix}.
\end{equation*}
This helps us to relate the matrix and its submatrix determinants using Fischer's inequality from Theorem 7.8.5 in \cite{horn2012matrix}.
\begin{equation}
    \det(\mathbf{U}_{\mathcal{S}\mathcal{S}}^\transp \mathbf{U}_{\mathcal{S}\mathcal{S}}) \leq \det(\mathbf{U}_{\mathcal{S}\recF{}}^\transp \mathbf{U}_{\mathcal{S}\recF{}})\det(\mathbf{U}_{\mathcal{S},\bwidth{}+1:s}^\transp \mathbf{U}_{\mathcal{S},\bwidth{}+1:s})
    \label{eq:hadamardDetIneq}
\end{equation}

The determinant of the matrix $\mathbf{U}_{\mathcal{S},\bwidth{}+1:s}^\transp\mathbf{U}_{\mathcal{S},\bwidth{}+1:s}$ can be bounded above. The eigenvalues of $\mathbf{U}_{\mathcal{S},\bwidth{}+1:s}^\transp \mathbf{U}_{\mathcal{S},\bwidth{}+1:s}$ are the same as the non-zero eigenvalues of $\mathbf{U}_{\mathcal{S},\bwidth{}+1:s}\mathbf{U}_{\mathcal{S},\bwidth{}+1:s}^\transp$ by Theorem 1.2.22 in \cite{horn2012matrix}. Using eigenvalue interlacing Theorem 8.1.7 from \cite{golub2012matrix}, the eigenvalues of the matrix $\mathbf{U}_{\mathcal{S},\bwidth{}+1:s}\mathbf{U}_{\mathcal{S},\bwidth{}+1:s}^\transp$ are less than or equal to 1 because it is submatrix of $\mathbf{U}_{\mathcal{V},\bwidth{}+1:s}\mathbf{U}_{\mathcal{V},\bwidth{}+1:s}^\transp$ whose non-zero eigenvalues are all $1$. As the determinant of a matrix is the product of its eigenvalues, the following bound applies:
\begin{equation}
    \det(\mathbf{U}_{\mathcal{S},\bwidth{}+1:s}^\transp\mathbf{U}_{\mathcal{S},\bwidth{}+1:s}) \leq 1.
    \label{eq:outDetBound}
\end{equation}

Using \eqref{eq:hadamardDetIneq} and \eqref{eq:outDetBound} and positive definiteness of the matrices, we now have a simple lower bound for our criteria under consideration:
\begin{equation}
    \abs{\det(\mathbf{U}_{\mathcal{S}\mathcal{S}}^\transp \mathbf{U}_{\mathcal{S}\mathcal{S}})} \leq \abs{\det(\mathbf{U}_{\mathcal{S}\recF{}}^\transp \mathbf{U}_{\mathcal{S}\recF{}})}.
    \label{eq:detBoundLarge}
\end{equation}
Thus, for example it is impossible for $\abs{\det(\mathbf{U}_{\mathcal{S}\mathcal{S}}^\transp \mathbf{U}_{\mathcal{S}\mathcal{S}})}$ to be equal to some positive value while $\abs{\det(\mathbf{U}_{\mathcal{S}\recF{}}^\transp \mathbf{U}_{\mathcal{S}\recF{}})}$ being half of that positive value.

To summarize, instead of aiming to maximize $\abs{\det(\mathbf{U}_{\mathcal{S}\recF{}}^\transp\allowbreak \mathbf{U}_{\mathcal{S}\recF{}})}$, we aimed to maximize $\abs{\det(\mathbf{U}_{\mathcal{S}\mathcal{S}}^\transp \mathbf{U}_{\mathcal{S}\mathcal{S}})}$. This intuitively worked because optimizing for a D-optimal matrix indirectly ensured a controlled performance of the subset of that matrix.
In this way due to the relation \eqref{eq:detBoundLarge}, we avoided knowing the precise bandwidth $\bwidth{}$ and still managed to sample using the \AlgAVM{} algorithm.

\section{Approximating Gram matrix by a diagonal matrix}
\label{sec:approxGramDiag}
Here we try to estimate how close our approximation of $\lpDmam^\transp \lpDmam{}$ to a diagonal matrix is. Towards this goal we define a simple metric for a general matrix $\mathbf{A}$.
\begin{equation}
    \text{Fraction of energy in diagonal} = \frac{\sum_i \mathbf{A}_{ii}^2}{\sum_i\sum_j \mathbf{A}_{ij}^2}.
    \label{eq:diagMetric}
\end{equation}
Since this can be a property dependent on the graph topology, we take 5 different types of graphs with 1000 vertices --- Scale-free, \AlgRand{} sensor nearest neighbors, Erd\H{o}s R\'{e}nyi, Grid, Line. Using \AlgAVM{} we select a varying number of samples ranging from 1 to 50. With the bandwidth $\bwidth{}$ taken to be 50, we average the fraction of the energy \eqref{eq:diagMetric} over 10 instances of each graph and represent it in Fig. \ref{fig:testDiag}.

\begin{figure}
    \centering
    \begin{tikzpicture}
    \begin{axis}[xlabel={Number of samples},ylabel={Fraction of energy in diagonal},legend entries={BA,Random sensor knn,ER,Grid,Line},cycle list name=my col]
  \addplot table [x index=0,y index=1]{test_orthogonal.dat};
  \addplot table [x index=0,y index=2]{test_orthogonal.dat};
  \addplot table [x index=0,y index=3]{test_orthogonal.dat};
  \addplot table [x index=0,y index=4]{test_orthogonal.dat};
  \addplot table [x index=0,y index=5]{test_orthogonal.dat};
  \end{axis}
    \end{tikzpicture}
    \caption{Closeness to diagonal at each iteration.}
    \label{fig:testDiag}
\end{figure}
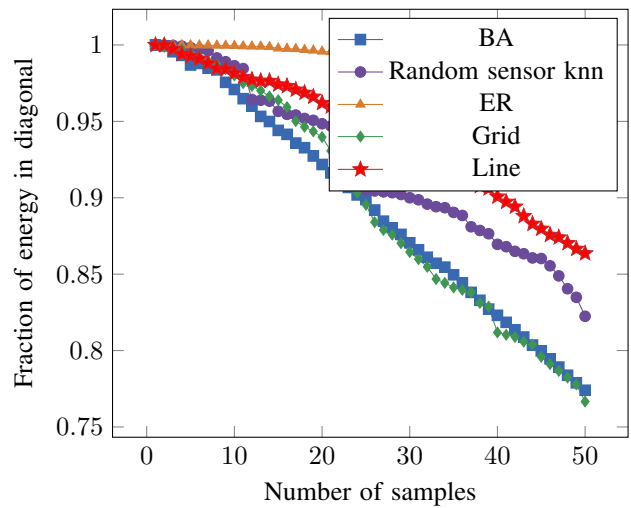

We observe more than 0.75 fraction of energy in the diagonal of the matrix $\lpDmam^\transp\lpDmam{}$, which justifies this approximation. According to our experiments, which are not presented here, the inverse of the matrix $(\lpDmam^\transp \lpDmam{})^{-1}$ is not as close to a diagonal matrix as $\lpDmam^\transp \lpDmam{}$ is to a diagonal matrix. Nevertheless, in place of $(\lpDmam^\transp \lpDmam{})^{-1}$ we still use $\diag\roundB{1/{\norm{\mathbf{d}_1}^2},\cdots,1/{\norm{\mathbf{d}_m}^2}}$ for what it is, an approximation.

Note however that the approximation does not hold in general for any samples. It holds when the samples are selected in a determinant maximizing conscious way by Algorithm  \ref{alg:avm}. This approximation is suited to \AlgAVM{} because of its choice of sampling bandwidth, $\recR$. As the number of samples requested increases, so does the sampling bandwidth. The higher bandwidth causes the filtered delta signals to become more localized causing energy concentration in the diagonal and keeping the diagonal approximation reasonable and applicable.

\section{D-optimal sampling for generic kernels}
\label{app:genKern}
Another graph signal model is a probabilistic distribution instead of a bandlimited model \cite{gadde2015probabilistic}, \cite{zhu2003semi}. In such cases, the covariance matrix is our kernel. The subset selection problem is defined as a submatrix selection of the covariance matrix. Framing the problem as entropy maximization naturally leads to a determinant maximization approach \cite{shewry1987maximum}.

To define our problem more formally, let us restrict space of all possible kernels to the space of kernels which can be defined as $\mathbf{K} = g(\mathbf{L})$ with $g$ defined on matrices but induced from a function from non-negative reals to positive reals $g: \mathbb{R}_{\geq 0} \rightarrow \mathbb{R}_{> 0}$. An example of such a function of $\mathbf{L}$ would be $(L+\delta I)^{-1}$. Such a function can be written as a function on the eigenvalues of the Laplacian $\mathbf{K}= \mathbf{U} g(\boldsymbol{\Sigma}) \mathbf{U}^\transp$. Motivated by entropy maximization in the case of probabilistic graph signal model, suppose we wish to select a set $\mathcal{S}$ so that  we maximize the determinant magnitude $\abs{\det(\mathbf{K}_{\mathcal{S}\mathcal{S}})}$.

There are a few differences for solving the new problem, although most of Algorithm \ref{alg:avm} translates well. We now wish to maximize $\abs{\det(\mathbf{U}_{\mathcal{S}}g(\boldsymbol{\Sigma})\mathbf{U}_{\mathcal{S}}^\transp)}$. The expression for the determinant update remains the same as before.
\begin{align*}
    &\det\roundB{\begin{bmatrix} \lpDmam^\transp \lpDmam{} & \lpDmam^\transp \mathbf{d}_v\\ \mathbf{d}_v^\transp \lpDmam{} & \mathbf{d}_v^\transp \mathbf{d}_v\end{bmatrix}} \\ &\approx\det(\lpDmam^\transp \lpDmam{})\det(\mathbf{d}_v^\transp \mathbf{d}_v - \mathbf{d}_v^\transp \lpDmam{} (\lpDmam^\transp\lpDmam{})^{-1}\lpDmam^\transp \mathbf{d}_v)
\end{align*}
Only now we have to maximize the volume of the parallelelpiped formed by the vectors $\mathbf{d}_v = \mathbf{U}g^{1/2}(\boldsymbol{\Sigma})\mathbf{U}^\transp\boldsymbol{\delta}_v$ for $v\in \mathcal{S}$. The squared coherences with respect to our new kernel $\mathbf{d}_v^\transp \mathbf{d}_v$ are computed in the same way as before by random projections. The diagonal of our new kernel matrix now approximates the matrix  $\lpDmam^\transp \lpDmam{}$.
\begin{equation*}
    \lpDmam^\transp \lpDmam{} \approx \diag((\mathbf{U}g(\boldsymbol{\Sigma})\mathbf{U}^\transp)_{11},\cdots,(\mathbf{U}g(\boldsymbol{\Sigma})\mathbf{U}^\transp)_{nn})
\end{equation*}
The other difference is that the approximate update stage is given by
\begin{equation*}
    v* \gets \argmaxA_{v\in \mathcal{S}^c} \norm{\mathbf{d}_v}^2 - \sum_{w\in \mathcal{S}}\frac{(\mathbf{U}g^{1/2}(\boldsymbol{\Sigma})\mathbf{U}^\transp\mathbf{d}_w)^2(v)}{\norm{\mathbf{d}_w}^2}
\end{equation*}
with the difference resulting from the kernel not being a projection operator. So for a generic kernel with a determinant maximization objective, Algorithm \ref{alg:avm} works the same way with minor modifications discussed here.

\bibliographystyle{hieeetr}
\apptocmd{\sloppy}{\hbadness 10000\relax}{}{}
\bibliography{graphsrefs}

\begin{thebibliography}{10}

\bibitem{shuman2013emerging}
D.~I. Shuman, S.~K. Narang, P.~Frossard, A.~Ortega, and P.~Vandergheynst, ``The
  emerging field of signal processing on graphs: Extending high-dimensional
  data analysis to networks and other irregular domains,'' {\em IEEE Signal
  Processing Magazine}, vol.~30, no.~3, pp.~83--98, 2013, 1211.0053.

\bibitem{kumar2000web}
R.~Kumar, P.~Raghavan, S.~Rajagopalan, D.~Sivakumar, A.~Tompkins, and E.~Upfal,
  ``The web as a graph,'' in {\em Proceedings of the nineteenth ACM
  SIGMOD-SIGACT-SIGART symposium on Principles of database systems}, pp.~1--10,
  ACM, 2000.

\bibitem{zhu2003semi}
X.~Zhu, Z.~Ghahramani, and J.~D. Lafferty, ``Semi-supervised learning using
  gaussian fields and harmonic functions,'' in {\em Proceedings of the 20th
  International conference on Machine learning (ICML-03)}, pp.~912--919, 2003.

\bibitem{fortunato2010community}
S.~Fortunato, ``Community detection in graphs,'' {\em Physics reports},
  vol.~486, no.~3, pp.~75--174, 2010.

\bibitem{crovella2003graph}
M.~Crovella and E.~Kolaczyk, ``Graph wavelets for spatial traffic analysis,''
  in {\em INFOCOM 2003. Twenty-Second Annual Joint Conference of the IEEE
  Computer and Communications. IEEE Societies}, vol.~3, pp.~1848--1857, IEEE,
  2003.

\bibitem{chen2015discrete}
S.~Chen, R.~Varma, A.~Sandryhaila, and J.~Kova{\v{c}}evi{\'c}, ``Discrete
  signal processing on graphs: Sampling theory,'' {\em IEEE Transactions on
  Signal Processing}, vol.~63, no.~24, pp.~6510--6523, 2015, 1503.05432.

\bibitem{anis2015efficient}
A.~Anis, A.~Gadde, and A.~Ortega, ``Efficient sampling set selection for
  bandlimited graph signals using graph spectral proxies,'' {\em IEEE
  Transactions on Signal Processing}, vol.~64, no.~14, pp.~3775--3789, 2015,
  1510.00297.

\bibitem{tanaka2020sampling}
Y.~Tanaka, Y.~C. Eldar, A.~Ortega, and G.~Cheung, ``Sampling signals on graphs:
  From theory to applications,'' {\em IEEE Signal Processing Magazine},
  vol.~37, no.~6, pp.~14--30, 2020, 2003.03957.

\bibitem{pesenson2008sampling}
I.~Pesenson, ``Sampling in paley-wiener spaces on combinatorial graphs,'' {\em
  Transactions of the American Mathematical Society}, vol.~360, no.~10,
  pp.~5603--5627, 2008, 1111.5896.

\bibitem{joshi2008sensor}
S.~Joshi and S.~Boyd, ``Sensor selection via convex optimization,'' {\em IEEE
  Transactions on Signal Processing}, vol.~57, no.~2, pp.~451--462, 2008.

\bibitem{shomorony2014sampling}
H.~Shomorony and A.~S. Avestimehr, ``Sampling large data on graphs,'' in {\em
  Signal and Information Processing (GlobalSIP), 2014 IEEE Global Conference
  on}, pp.~933--936, IEEE, 2014, 1411.3017.

\bibitem{tsitsvero2016signals}
M.~Tsitsvero, S.~Barbarossa, and P.~Di~Lorenzo, ``Signals on graphs:
  Uncertainty principle and sampling,'' {\em IEEE Transactions on Signal
  Processing}, vol.~64, no.~18, pp.~4845--4860, 2016, 1507.08822.

\bibitem{tremblay2017graph}
N.~Tremblay, P.-O. Amblard, and S.~Barthelm{\'e}, ``Graph sampling with
  determinantal processes,'' in {\em 2017 25th European Signal Processing
  Conference (EUSIPCO)}, pp.~1674--1678, IEEE, 2017, 1703.01594.

\bibitem{chamon2017greedy}
L.~F. Chamon and A.~Ribeiro, ``Greedy sampling of graph signals,'' {\em IEEE
  Transactions on Signal Processing}, vol.~66, no.~1, pp.~34--47, 2017,
  1704.01223.

\bibitem{wang2018optimal}
F.~Wang, Y.~Wang, and G.~Cheung, ``A-optimal sampling and robust reconstruction
  for graph signals via truncated neumann series,'' {\em IEEE Signal Processing
  Letters}, vol.~25, no.~5, pp.~680--684, 2018.

\bibitem{puy2016random}
G.~Puy, N.~Tremblay, R.~Gribonval, and P.~Vandergheynst, ``Random sampling of
  bandlimited signals on graphs,'' {\em Applied and Computational Harmonic
  Analysis}, 2016, 1511.05118.

\bibitem{sakiyama2019eigendecomposition}
A.~Sakiyama, Y.~Tanaka, T.~Tanaka, and A.~Ortega, ``Eigendecomposition-free
  sampling set selection for graph signals,'' {\em IEEE Transactions on Signal
  Processing}, vol.~67, no.~10, pp.~2679--2692, 2019.

\bibitem{bai2020fast}
Y.~Bai, F.~Wang, G.~Cheung, Y.~Nakatsukasa, and W.~Gao, ``Fast graph sampling
  set selection using gershgorin disc alignment,'' {\em IEEE Transactions on
  Signal Processing}, vol.~68, pp.~2419--2434, 2020, 1907.06179.

\bibitem{parada2019blue}
A.~Parada-Mayorga, {\em Blue noise and optimal sampling on graphs}.
\newblock PhD thesis, University of Delaware, 2019.

\bibitem{parada2019bluejournal}
A.~Parada-Mayorga, D.~L. Lau, J.~H. Giraldo, and G.~R. Arce, ``Blue-noise
  sampling on graphs,'' {\em IEEE Transactions on Signal and Information
  Processing over Networks}, vol.~5, no.~3, pp.~554--569, 2019, 1811.12542.

\bibitem{jayawant2018distance}
A.~Jayawant and A.~Ortega, ``A distance-based formulation for sampling signals
  on graphs,'' in {\em 2018 IEEE International Conference on Acoustics, Speech
  and Signal Processing (ICASSP)}, pp.~6318--6322, IEEE, 2018.

\bibitem{basirian2017random}
S.~Basirian and A.~Jung, ``Random walk sampling for big data over networks,''
  in {\em 2017 International Conference on Sampling Theory and Applications
  (SampTA)}, pp.~427--431, IEEE, 2017.

\bibitem{abramenko2019graph}
O.~Abramenko and A.~Jung, ``Graph signal sampling via reinforcement learning,''
  in {\em ICASSP 2019-2019 IEEE International Conference on Acoustics, Speech
  and Signal Processing (ICASSP)}, pp.~3077--3081, IEEE, 2019, 1805.05827.

\bibitem{atkinson2007optimum}
A.~Atkinson, A.~Donev, and R.~Tobias, {\em Optimum experimental designs, with
  SAS}, vol.~34.
\newblock Oxford University Press, 2007.

\bibitem{bollobas2013modern}
B.~Bollob{\'a}s, {\em Modern graph theory}, vol.~184.
\newblock Springer Science \& Business Media, 2013.

\bibitem{minka1998inferring}
T.~Minka, ``Inferring a gaussian distribution,'' {\em Media Lab Note}, 1998.

\bibitem{ccivril2009selecting}
A.~{\c{C}}ivril and M.~Magdon-Ismail, ``On selecting a maximum volume
  sub-matrix of a matrix and related problems,'' {\em Theoretical Computer
  Science}, vol.~410, no.~47-49, pp.~4801--4811, 2009.

\bibitem{goreinov2010find}
S.~A. Goreinov, I.~V. Oseledets, D.~V. Savostyanov, E.~E. Tyrtyshnikov, and
  N.~L. Zamarashkin, ``How to find a good submatrix,'' in {\em Matrix Methods:
  Theory, Algorithms And Applications: Dedicated to the Memory of Gene Golub},
  pp.~247--256, World Scientific, 2010.

\bibitem{deshpande2010efficient}
A.~Deshpande and L.~Rademacher, ``Efficient volume sampling for row/column
  subset selection,'' in {\em 2010 IEEE 51st Annual Symposium on Foundations of
  Computer Science}, pp.~329--338, IEEE, 2010, 1004.4057.

\bibitem{horn2012matrix}
R.~A. Horn and C.~R. Johnson, {\em Matrix analysis}.
\newblock Cambridge university press, 2012.

\bibitem{berlinet2011reproducing}
A.~Berlinet and C.~Thomas-Agnan, {\em Reproducing kernel Hilbert spaces in
  probability and statistics}.
\newblock Springer Science \& Business Media, 2011.

\bibitem{perraudin2016gspbox}
N.~Perraudin, J.~Paratte, D.~Shuman, L.~Martin, V.~Kalofolias,
  P.~Vandergheynst, and D.~K. Hammond, ``Gspbox: A toolbox for signal
  processing on graphs,'' 2016, 1408.5781.

\bibitem{peng2007determinant}
B.~Peng, ``The determinant: A means to calculate volume,'' {\em Recall},
  vol.~21, p.~a22, 2007.

\bibitem{sakiyama2018eigendecomposition}
A.~Sakiyama, Y.~Tanaka, T.~Tanaka, and A.~Ortega, ``Eigendecomposition-free
  sampling set selection for graph signals,'' {\em arXiv preprint
  arXiv:1809.01827}, 2018.

\bibitem{jung2019localized}
A.~Jung and N.~Tran, ``Localized linear regression in networked data,'' {\em
  IEEE Signal Processing Letters}, vol.~26, no.~7, pp.~1090--1094, 2019,
  1704.04799.

\bibitem{girault2017grasp}
B.~Girault, S.~Narayanan, P.~Gon{\c{c}}alves, A.~Ortega, and E.~Fleury,
  ``Grasp: A matlab toolbox for graph signal processing,'' in {\em 42nd IEEE
  International Conference on Acoustics, Speech and Signal Processing (ICASSP
  2017)}, 2017.

\bibitem{narang2013localized}
S.~K. Narang, A.~Gadde, E.~Sanou, and A.~Ortega, ``Localized iterative methods
  for interpolation in graph structured data,'' in {\em 2013 IEEE Global
  Conference on Signal and Information Processing}, pp.~491--494, IEEE, 2013,
  1310.2646.

\bibitem{erdHos1960evolution}
P.~Erd{\H{o}}s and A.~R{\'e}nyi, ``On the evolution of random graphs,'' {\em
  Publ. Math. Inst. Hung. Acad. Sci}, vol.~5, no.~1, pp.~17--60, 1960.

\bibitem{boyd2004convex}
S.~Boyd, S.~P. Boyd, and L.~Vandenberghe, {\em Convex optimization}.
\newblock Cambridge university press, 2004.

\bibitem{golub2012matrix}
G.~H. Golub and C.~F. Van~Loan, {\em Matrix computations}, vol.~3.
\newblock JHU Press, 2012.

\bibitem{gadde2015probabilistic}
A.~Gadde and A.~Ortega, ``A probabilistic interpretation of sampling theory of
  graph signals,'' in {\em 2015 IEEE international conference on Acoustics,
  Speech and Signal Processing (ICASSP)}, pp.~3257--3261, IEEE, 2015,
  1503.06629.

\bibitem{shewry1987maximum}
M.~C. Shewry and H.~P. Wynn, ``Maximum entropy sampling,'' {\em Journal of
  applied statistics}, vol.~14, no.~2, pp.~165--170, 1987.

\end{thebibliography}

\end{document}